\theoremstyle{plain}
\newtheorem{theorem}{Theorem}[section]
\newtheorem{corollary}[theorem]{Corollary}
\newtheorem{lemma}[theorem]{Lemma}
\newtheorem{principle}[theorem]{Principle}
\theoremstyle{definition}
\newtheorem{definition}[theorem]{Definition}
\newtheorem{assumption}[theorem]{Assumption}
\newtheorem{remark}[theorem]{Remark}
\newtheorem{example}[theorem]{Example}
\newcommand{\tspace}[2]{{T}|_{#1}{#2}}
\newcommand{\OSPot}{\Phi_0}
\newcommand{\NNPot}{\Phi_1}
\newcommand{\cc}{\mathrm{c.c.}}
\renewcommand{\Re}{\mathop{\mathrm{Re}}}
\newcommand{\mhol}{\overline}
\newcommand{\mhintlimits}{\limits}%
\renewcommand{\d}{\mathrm d}
\newcommand{\vel}{{\mathrm{vel}}}
\newcommand{\con}{{\mathrm{con}}}
\newcommand{\bbTvp}{\bbT_\vel}
\newcommand{\skp}[2]{{\langle{#1},\,{#2}\rangle}}
\newcommand{\sskp}[2]{{\langle\!\langle{#1},\,{#2}\rangle\!\rangle}}
\newcommand{\DO}[1]{{{\mathcal{O}}\at{#1}}}
\newcommand{\laplace}{\triangle}
\newcommand{\Rset}{\mathbb{R}}
\newcommand{\Cset}{\mathbb{C}}
\newcommand{\Zset}{\mathbb{Z}}
\newcommand{\discr}{{\rm discr}}
\newcommand{\mtrc}{{\rm met}}
\newcommand{\red}{{\rm red}}
\newcommand{\fin}{{\rm fin}}
\newcommand{\ini}{{\rm ini}}
\newcommand{\const}{{\rm const}}
\newcommand{\can}{{\rm can}}
\newcommand{\trans}{{T}}
\newcommand{\tdots}{{...}}%
\newcommand{\Si}{{\Sigma}}
\newcommand{\Om}{{\Omega}}
\newcommand{\al}{{\alpha}}
\newcommand{\be}{{\beta}}
\newcommand{\ga}{{\gamma}}
\newcommand{\eps}{{\varepsilon}}
\newcommand{\la}{{\lambda}}
\newcommand{\si}{{\sigma}}
\newcommand{\om}{{\omega}}
\newcommand{\pair}[2]{{\left({#1},\,{#2}\right)}}
\newcommand{\npair}[2]{{({#1},\,{#2})}}
\newcommand{\Bpair}[2]{{\Big({#1},\,{#2}\Big)}}
\newcommand{\at}[1]{{\left({#1}\right)}}
\newcommand{\nnat}[1]{{({#1})}}
\newcommand{\bat}[1]{{\big(#1\big)}}
\newcommand{\Bat}[1]{{\Big(#1\Big)}}
\newcommand{\triple}[3]{{\left({#1},\,{#2},\,{#3}\right)}}
\newcommand{\dualpair}[2]{{\left\langle{#1},\,{#2}\right\rangle}}
\newcommand{\Phase}{\phi}
\newcommand{\MiTime}{t}
\newcommand{\MiLagr}{j}
\newcommand{\MiLagrC}{\eta}
\newcommand{\MaTime}{\tau}
\newcommand{\MaLagr}{y}
\newcommand{\MiLagrCDer}[1]{\partial_{\,\MiLagrC}{#1}}
\newcommand{\MaLagrDer}[1]{\partial_{\,\MaLagr}{#1}}
\newcommand{\MaTimeDer}[1]{\partial_{\,\MaTime}{#1}}
\newcommand{\PhaseDer}[1]{\partial_{\,\Phase}{#1}}
\newcommand{\MaLagrDerS}[1]{\partial_{\,\MaLagr\MaLagr}{#1}}
\newcommand{\MaTimeDerS}[1]{\partial_{\,\MaTime\MaTime}{#1}}
\newcommand{\ModPhase}{\Theta}
\newcommand{\ccinterval}[2]{[#1,\,#2]}%
\newcommand{\VarDer}[2]{\partial_{#2}{#1}}
\newcommand{\SVarDer}[2]{\partial^2_{#2}{#1}}
\newcommand{\bigpar}{\par$\;$\par\noindent}
\newcommand{\iu}{\mathtt{i}}
\newcommand{\mhexp}[1]{{{\mathtt{e}}^{#1}}}
\newcommand{\rra}{\rightarrow}
\newcommand{\abs}[1]{\left|{#1}\right|}
\newcommand{\calA}{\mathcal{A}}
\newcommand{\calE}{\mathcal{E}}
\newcommand{\calF}{\mathcal{F}}
\newcommand{\calH}{\mathcal{H}}
\newcommand{\calI}{\mathcal{I}}
\newcommand{\calK}{\mathcal{K}}
\newcommand{\calL}{\mathcal{L}}
\newcommand{\calM}{\mathcal{M}}
\newcommand{\calP}{\mathcal{P}}
\newcommand{\calR}{\mathcal{R}}
\newcommand{\calS}{\mathcal{S}}
\newcommand{\calT}{\mathcal{T}}
\newcommand{\calV}{\mathcal{V}}
\newcommand{\calZ}{\mathcal{Z}}
\newcommand{\bbE}{\mathbb{E}}
\newcommand{\bbH}{\mathbb{H}}
\newcommand{\bbI}{\mathbb{I}}
\newcommand{\bbJ}{\mathbb{J}}
\newcommand{\bbK}{\mathbb{K}}
\newcommand{\bbL}{\mathbb{L}}
\newcommand{\bbS}{\mathbb{S}}
\newcommand{\bbT}{\mathbb{T}}
\newcommand{\bbV}{\mathbb{V}}
\newcommand{\bsT}{\boldsymbol{T}}
\newcommand{\bsS}{\boldsymbol{S}}
\newcommand{\bssi}{\boldsymbol\sigma}
\newcommand{\bsSi}{\boldsymbol\Sigma}
\numberwithin{equation}{section}
\begin{document}
\title{%
Lagrangian and Hamiltonian two-scale reduction\thanks{%
%
This work was partially supported by the
Deutsche Forschungsgemeinschaft ({D}{F}{G}) %
within the Priority Program \emph{Analysis, Modeling and Simulation
of Multiscale Problems} (SPP 1095) under Mi 459/3-3 and within {\sc
Matheon} under D14.
}%
}%
\author{ %
Johannes Giannoulis\thanks{Zentrum Mathematik,
        Technische Universit\"at M\"unchen, Boltzmannstra{\ss}e 3, D-85747 Garching bei M\"unchen,
        Germany,
        {\tt giannoulis@ma.tum.de}} \and
Michael Herrmann\thanks{Institut f\"ur Mathematik,
        Humboldt-Universit\"at zu Berlin
        Unter den Linden 6, D-10099 Berlin, Germany,  {\tt
        michaelherrmanm@math.hu-berlin.de}} \and  
Alexander Mielke\thanks{Weierstra\ss-Institut f\"ur Angewandte Analysis und
        Stochastik, Mohrenstra\ss{}e 39, 10117 Berlin,
        and Institut f\"ur Mathematik, Humboldt-Universit\"at zu Berlin,
        Unter den Linden 6, D-10099 Berlin, Germany,
        {\tt mielke@wias-berlin.de}}
   }
%
%
%
%
\date{February 20, 2008}%
\maketitle
%
%
\begin{abstract}
Studying high-dimensional Hamiltonian systems with microstructure,
it is an important and challenging problem to identify reduced
macroscopic models that describe some effective dynamics on large
spatial and temporal scales. This paper concerns the question how
reasonable macroscopic Lagrangian and Hamiltonian structures can by
derived from the microscopic system.

In the first part we develop a general approach to this problem by
considering non-canonical Hamiltonian structures on the tangent
bundle. This approach can be applied to all Hamiltonian lattices (or
Hamiltonian PDEs) and involves three building blocks: (i) the
embedding of the microscopic system, (ii) an invertible two-scale
transformation that encodes the underlying scaling of space and
time, (iii) an elementary model reduction that is based on a
Principle of Consistent Expansions.

In the second part we exemplify the reduction approach and derive
various reduced PDE models for the atomic chain. The reduced
equations are either related to long wave-length motion or describe
the macroscopic modulation of an oscillatory microstructure.
\end{abstract}

%
%
%
\section{Introduction}\label{sec:Intro}
%
A major topic in the area of multi-scale problems is the derivation
of reduced or effective macroscopic models for a given microscopic
system. A prototype for this problem is the passage from discrete
lattice systems to continuum models which describe the
\emph{effective dynamics} on much larger spatial and temporal
scales. In this case, the microscopic dynamics is governed by a high
dimensional system of ODEs, whereas  the macroscopic models are
related to the PDEs of continuum mechanics or thermodynamics.
\par%
In the dynamical setting this problem can be stated as follows:
Choosing well-ordered microscopic initial data in a specified class
of functions, one hopes that the solution will stay close to this
class of functions. We can interpret the class of functions as an
approximate invariant manifold and aim to derive reduced equations
that govern the evolution on this manifold. Moreover, if the
original dynamics is related to underlying Lagrangian or Hamiltonian
structures, the question arises how these structures behave under
the reduction procedure. This approach is closely related to the
theory of modulation equations, see
\cite{Mie02,AMSMSP:GHM,SchUec07MLPD} for surveys, which describes
how an oscillatory microstructure is modulated on the macroscopic
space--time.

\bigpar%
In mathematically rigorous terms the transition from a microscopic
to a macroscopic scale can be described by a \emph{coarse graining
diagram}, which involves the \emph{scaling parameter} $\eps$, see
Figure \ref{fig:CoarseGraining}.
\begin{figure}[ht]
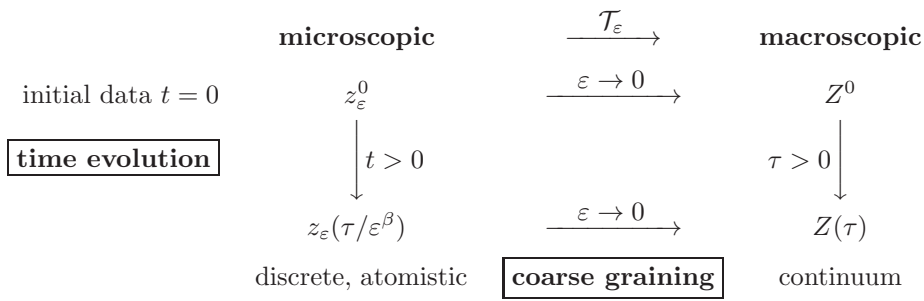

\[
\begin{array}{rccc}
& \text{\bf microscopic}&
\xrightarrow{\quad\mbox{$\calT_\eps$}\quad}%
&
 \text{\bf macroscopic} \\[0.5em]
\text{initial data }t=0&{z}_{\eps}^0& \xrightarrow{\quad\mbox{$\eps
\to 0$}\quad} &{ Z}^0
 \\[0.3em]
\text{\fbox{\bf time evolution}}&
\Bigg\downarrow\vcenter{\rlap{$t>0$}}&  &
\vcenter{\llap{$\tau>0$}} \Bigg\downarrow\\[0.3em]
&{z}_{\eps}(\tau/\eps^\beta) & \xrightarrow{\quad\mbox{$\eps \to
0$}\quad} & {Z }(\tau)
\\[0.5em]
&\text{ discrete, atomistic} &\text{ \fbox{\bf coarse graining} }
&\text{continuum}
\end{array}\qquad\qquad\mbox{ }
\]
\caption{The coarse graining diagram} \label{fig:CoarseGraining}
\end{figure}
The curve $\MiTime\mapsto{z}_\eps\at{\MiTime}\in{M_\eps}$ denotes
the solution of the microscopic model, i.e., it depends on the
microscopic time $t$, and takes values in the microscopic state
space $M_\eps$. On the other hand, the macroscopic trajectory
$\MaTime\mapsto{Z}\at{\MaTime}\in{N}$ is parametrized by the
macroscopic time $\MaTime$, and describes the evolution of the
macroscopic state $Z\at{\MaTime}\in{N}$. The two scales in this
problem are linked by a suitable \emph{two-scale ansatz}, which
consists of the time scaling $\MaTime=\eps^\beta\MiTime$, as well as
a scaling transformation $\calT_\eps:M_\eps\rra{N}$, which in
particular encodes the spatial scaling. In the best case the diagram
commutes, i.e., if the coarse graining
$z_\eps\at\MiTime\rightsquigarrow{Z}\at{\eps^\beta\MiTime}$ holds at
time $\tau=0$, then it holds true for a finite time interval
$\tau\in\ccinterval{0}{\tau_\fin}$. Any reasonable micro-macro
transition must provide an effective macroscopic evolution equation
for the macroscopic configuration $Z\at{\MaTime}\in{N}$. We can not
expect the macroscopic equation to provide exact solutions to the
microscopic system, but we can hope that it gives rise to
\emph{approximate solutions} that satisfy the microscopic law of
motion up to higher orders in $\eps$.

\bigpar
In the standard approach of model reduction one inserts a reasonable
two-scale ansatz into the microscopic law of motion and derives a
macroscopic evolution equation by means of formal expansions with
respect to the scaling parameter $\eps$.
\par%
However, this standard approach ignores the underlying Lagrangian
and Hamiltonian structures and therefore the following questions
arise naturally: $(i)$ Are there macroscopic Lagrangian and
Hamiltonian structures that correspond to the reduced macroscopic
equation? $(ii)$ If yes, how can one derive them and what is their
relation to the microscopic structures?
\bigpar%
The main issue of this paper is to develop a general framework for
micro-macro transitions that relies on a \emph{two-scale reduction}
of microscopic Lagrangian and Hamiltonian structures. To this end we
split our approach   into three steps, namely \emph{embedding},
\emph{exact two-scale transformation}, and \emph{reduction}, which
can be studied independently. Our point of view is strongly
motivated by the investigation of microscopic lattice systems, where
the micro-macro transition replaces a high dimensional system of
ODEs by a small number of macroscopic PDEs. Nevertheless, our
approach to Hamiltonian two-scale reduction can also be applied to
microscopic PDEs, see \S\ref{sec:Intro.ME} below. Note, that for us
a Hamiltonian structure consists of a Hamiltonian (function) and a
(non-canonical) symplectic form. Alternatively one could study the
reduction of Poisson structures.
\bigpar
The Hamiltonian two-scale reduction for lattices always involves the
scaling of space \emph{and} time variables. There exists a lot of
literature concerning solely the coupling of slow and fast time
scales in Hamiltonian systems with finite dimension, or fixed
spatial scales. The arising mathematical problems can be tackled by
means of averaging and adiabatic invariants, see for instance
\cite{Jar93,TT99,NV05}. Moreover, a lot of work has been done to
derive efficient schemes for the numerical integration of such
systems, compare \cite{AMSMSP:CJLL,HLW02}, and references therein.
\par%
A second class of micro-macro transitions is related to the passage
between different spatial scales. For instance, in the static case
it is a challenging problem to derive elastic energies from
atomistic lattice models, and to study the macroscopic convergence
of microscopic ground states and energies, see
\cite{FJ00,BG02a,BG02b,FT02,BG06,The06,Sch06,AMSMSP:BLM}. Another
kind of spatial reduction arises, when the microscopic model
combines both large and short space scales. Close to our point of
view, \cite{GKMS95} considers the Euler equations for an
incompressible fluid under gravity, and studies the limit of
vanishing height. It can be shown that the underlying
Poisson-structure converges to a limit that corresponds to the
shallow water equation. Moreover, using similar methods the
equations for shells and plates can be derived from the
three-dimensional models of nonlinear elasticity, see \cite{GKM96}.
%
%
\subsection{Motivating examples}\label{sec:Intro.ME}
%
Let us first discuss two simple examples related to microscopic PDEs
which highlight the essential features that arise in the general
setting. Below we will see that microscopic lattices can be treated
similarly if viewed as embedded into systems with continuous space
variable.
%
\bigpar%
The first example concerns the passage from the Boussinesq equation
to the Korteweg--de Vries (KdV) equation. Here, the microscopic
dynamics is governed by
\begin{align}
\label{MEX:kdv.Bous} x_{\MiTime\MiTime}=x_{\MiLagrC\MiLagrC}-
x_{\MiLagrC\MiLagrC\MiLagrC\MiLagrC}+x_\MiLagrC\,x_{\MiLagrC\MiLagrC},
\end{align}
where the unknown function $x$ depends on the \emph{microscopic
time} $\MiTime$ and the \emph{microscopic space variable}
$\MiLagrC\in\Rset$. Notice that $x_\MiTime$ and $x_\MiLagrC$
abbreviate $\partial_\MiTime{x}$ and $\partial_\MiLagrC{x}$,
respectively. For simplicity we ignore all boundary effects, so that
the microscopic configuration space is
$Q=L^2\at{\Rset;\,\d\MiLagrC}$. One particular macroscopic model for
\eqref{MEX:kdv.Bous} is related to the two-scale ansatz
\begin{align}
\label{MEX:kdv.TSA}
x\pair{\MiTime}{\MiLagrC}=\eps{X}\pair{\eps^3\MiTime}{\eps\at{\MiLagrC+\MiTime}}
\end{align}
where $\MaTime=\eps^3\MiTime$ and
$\MaLagr=\eps\at{\MiLagrC+\MiTime}$ denote the \emph{macroscopic}
time and space, respectively. The function $X$ is the macroscopic
configuration and for fixed $\MaTime$ it takes values in
$P=L^2\at{\Rset;\,\d\MaLagr}$. The \emph{scaling} parameter $\eps>0$
is assumed to be small and bridges the two appearing scales.
\bigpar%
The standard approach for model reduction works as follows: We plug
the two-scale ansatz \eqref{MEX:kdv.TSA} into the microscopic law of
motion \eqref{MEX:kdv.Bous},
use formal expansions with respect to $\eps$ and equate the terms of
leading order. For the example at hand one easily derives
\begin{align}
\label{MEX:kdv.KDV}
2{X}_{\MaTime\MaLagr}=-X_{\MaLagr\MaLagr\MaLagr\MaLagr}+
X_{\MaLagr}X_{\MaLagr\MaLagr},
\end{align}
which is a KdV equation for $X_\MaLagr$. As already mentioned, this
standard approach works very well but in general it is not clear at
all whether the derived macroscopic equation has its own Lagrangian
and Hamiltonian structures.
\par
We proceed with the Lagrangian and Hamiltonian two-scale reduction
for the Boussinesq example in order to illustrate the difficulties
that may arise in the general setting as well as the proposed
solutions. On the one hand, the microscopic Lagrangian $\calL$ for
\eqref{MEX:kdv.Bous} is given by $\calL=\calK-\calV$ with kinetic
energy $\calK$ and potential energy $\calV$ given by
\begin{align}
\label{MEX:kdv.Energies}
\calK\at{x_t}=\int\mhintlimits_{\Rset}\tfrac{1}{2}{x_\MiTime}^2\,\d\MiLagrC,\quad
\calV\at{x}=\int\mhintlimits_{\Rset}{\tfrac{1}{2}{x_\MiLagrC}^2+
\tfrac{1}{2}{x_{\MiLagrC\MiLagrC}}^2+\tfrac{1}{6}{x_\MiLagrC}^3}\,\d\MiLagrC.
\end{align}
Identifying the momenta $\pi=\VarDer{\calL}{x_\MiTime}$ with the
velocities $x_\MiTime$ we find that the microscopic Hamiltonian
$\calH$ equals the energy $\calE=\calK+\calV$. In particular, the
microscopic law of motion \eqref{MEX:kdv.Bous} equals the
Euler--Lagrange equations to $\calL$, and is moreover equivalent to
the canonical equations to $\calH$, which correspond to the
symplectic form
\begin{align}
\label{MEX:kdv:SymplForm1}
\Sigma=\left(%
\begin{array}{cc}
 0 & -1 \\
  1 & 0 \\
\end{array}%
\right)
\end{align}
with $1$ being the identity map $Q\to{Q}$. On the other hand, the
KdV equation \eqref{MEX:kdv.KDV} is the Euler-Lagrange equation to
$\bbL^\red\pair{X}{X_\MaTime}=\bbK^\red\pair{X}{X_\MaTime}-\bbV^\red\at{X}$
with
\begin{align*}
\bbK^\red\pair{X}{X_\MaTime}=
\int\mhintlimits_{\Rset}^{}{X_\MaTime}{X_\MaLagr}\,\d\MaLagr,\quad
\bbV^\red\at{X}=
\int\mhintlimits_{\Rset}^{}{\tfrac{1}{2}\at{X_{\MaLagr\MaLagr}}^2+
\tfrac{1}{6}\at{X_{\MaLagr}}^3}\,\d\MaLagr.
\end{align*}
Since $\bbL^\red$ depends linearly on the macroscopic velocities
$X_\tau$, the reduced macroscopic Hamiltonian structure is
non-canonical. In fact, the Hamiltonian $\bbH^\red$ equals the
potential energy $\bbV^\red$ and the symplectic structure
\begin{align}
\label{MEX:kdv:SymplForm2}
\bssi^\red\npair{\dot{X}}{\acute{X}}=\int\mhintlimits_{\Rset}^{}\dot{X}\acute{X}_\MaLagr\,\d\MaLagr
\end{align}
is a skew-symmetric 2-form on $P$. Consequently, the macroscopic law
of motion is given by
$\bssi^\red\pair{X_\MaTime}{\cdot}=\d\bbH^\red$, which is a
dynamical system on $P$, and not on $T P$ or $T^\ast{P}$.
\bigpar%
In order to describe how the microscopic Hamiltonian structure
reduces to the macroscopic one, we regard the two-scale ansatz
\eqref{MEX:kdv.TSA} as a time dependent transformation
$\bbT_\con\pair{\MiTime}{\eps}:X\in{}P\to{}Q$ with parameter $\eps$.
Its canonical lift $\bbT_\vel\pair{\MiTime}{\eps}:TP\to{}TQ$ to the
corresponding
tangent bundles reads
\begin{align}
\label{MEX:kdv:Trans}
\bbT_\vel\pair{\MiTime}{\eps}:\pair{X}{X_\MaTime}\rightsquigarrow
\pair{x}{x_\MiTime},\quad
\pair{x}{x_\MiTime}\at\MiLagrC=\pair{\eps{}X}{\eps^4X_\MaTime+\eps^2X_\MaLagr}
\at{\eps\MiLagrC+\eps\MiTime}.
\end{align}
This transformation comprises the crucial ingredients of our
approach: For fixed $\eps>0$ this transformation is \emph{exact},
this means invertible, but describes explicitly how the macroscopic
structures depend on $\eps$. Therefore, one can read-off the
\emph{effective} structures from the leading order terms in $\eps$.
\par
Applying the inverse transformation of \eqref{MEX:kdv:Trans} to the
energies from \eqref{MEX:kdv.Energies} we find
\begin{align*}
\bbK\triple{\eps}{X}{X_\MaTime}&=
\eps^{3}\int\mhintlimits_{\Rset}^{}\tfrac{1}{2}
\at{\eps^2X_\MaTime+X_\MaLagr}^2\,\d\MaLagr,\quad
\bbV\pair{\eps}{X}=\eps^{3}\tfrac{1}{2}
\int\mhintlimits_{\Rset}^{}{{X_\MaLagr}^2+\eps^2{X_{\MaLagr\MaLagr}}^2+
\tfrac{1}{3}\eps^2{X_\MaTime}^3}\,\d\MaLagr.
\end{align*}
Both transformed energies are of order $\eps^3$. However, the
transformed Lagrangian $\bbL$ is of order $\eps^5$, since the terms
of order $\eps^3$ vanish due to \emph{cancelation} via
$\bbL=\bbK-\bbV$. Thus, we find
\begin{align*}
\bbL\triple{\eps}{X}{X_\MaTime}=\eps^5\bbL^\red\pair{X}{X_\MaTime}+\DO{\eps^7}.
\end{align*}
The transformation of the Hamiltonian structure is not so simple,
since the transformation \eqref{MEX:kdv:Trans} involves a moving
frame. The macroscopic Hamiltonian $\bbH$, i.e.\ the Legendre
transform of $\bbL$, is given by
\begin{align*}
\bbH\triple{\eps}{X}{X_\MaTime}= \bbE\triple{\eps}{X}{X_\MaTime}-
\bbI\triple{\eps}{X}{X_\MaTime}=\eps^5\bbH^\red\pair{X}{X_\MaTime}+
\DO{\eps^7}.
\end{align*}
Here, $\bbE=\bbK+\bbV$ is the transformed energy and $\bbI$ is the
transform of $\calI$, where $\calI$ is the \emph{conserved quantity}
related to the moving frame by Noether's Theorem:
\begin{align*}
\calI\pair{x}{x_\MiTime}=\int\mhintlimits_\Rset\,x_\MiTime\,x_\MiLagrC\,\d\MiLagrC,\quad
\bbI\triple{\eps}{X}{X_\MaTime}=
\eps^{3}\int\mhintlimits_{\Rset}^{}\at{\eps^2X_\MaTime+X_\MaLagr}X_\MaLagr\,\d\MaLagr.
\end{align*}
We conclude that the transformation \eqref{MEX:kdv:Trans} provides
both the Lagrangian and the Hamiltonian for \eqref{MEX:kdv.KDV} to
leading order $\eps^5$. Moreover, it can be shown that the
symplectic form \eqref{MEX:kdv:SymplForm1}, considered as a $2$-form
on the \emph{tangent bundle} $TQ$ but \emph{not} on the cotangent
bundle $T^\ast{Q}$, transforms into
\begin{align*}
\bsSi=\eps^5\left(%
\begin{array}{cc}
  -2\MaLagrDer{} & 0 \\
  0 & 0 \\
\end{array}%
\right) +\eps^7\left(%
\begin{array}{cc}
  0 & -1 \\
  1 & 0 \\
\end{array}%
\right),
\end{align*}
which equals \eqref{MEX:kdv:SymplForm2} to leading order $\eps^5$.
Finally, the KdV equation is invariant under shifts in the
$\MaLagr$-direction, and this symmetry gives rise to the conserved
quantity
\begin{align*}
\bbI^\red\at{X}=\int\mhintlimits_\Rset\,X_\MaLagr^2\,\d\MaLagr,
\end{align*}
which turns out to be the lowest order expansions of
$\bbK(\eps,\cdot)$ and $\bbV(\eps,\cdot)$, namely
\begin{align*}
\eps^3\,\bbI^\red\at{X}=2\,\bbK\triple{\eps}{X}{X_\MaTime}+\DO{\eps^5}=
2\,\bbV\triple{\eps}{X}{X_\MaTime}+\DO{\eps^5}=
\bbI\triple{\eps}{X}{X_\MaTime}+\DO{\eps^5}.
\end{align*}
We conclude that the terms which vanish due to cancelation
correspond to a macroscopic integral of motion.
\bigpar%
As a second motivating example we study the macroscopic evolution of
a modulated pulse in the Klein-Gordon (KG) equation
\begin{equation}\label{KGpde}
x_{tt}=x_{\MiLagrC\MiLagrC} -\OSPot^\prime(x)
\end{equation}
with $x=x\pair{t}{\MiLagrC}$, $\MiLagrC\in\Rset$, and nonlinear
on-site potential $\OSPot$. A modulated pulse is an (approximate)
solution which satisfies the ansatz
\begin{align}\label{KGpdeNLSansatzExplicit} %
x(\MiLagrC,{\MiTime})
=\eps{A}\pair{\eps^2\MiTime}{\eps\MiLagrC-\eps{}c\MiTime}
\mhexp{\iu\at{\om\MiTime+\theta\MiLagrC}} +\cc + \DO{\eps^2}.
\end{align}
Here $\cc$ denotes the complex conjugate, the frequency $\om$ and
the wave number $\theta$ are fixed parameters, and $c$ is the
moving-frame speed. The plane waves
$\mhexp{\iu\at{\om\MiTime+\theta\MiLagrC}} $ in
\eqref{KGpdeNLSansatzExplicit} describe an \emph{oscillatory
microstructure} whose amplitude $A$ is modulated on the macroscopic
scale $\MaTime=\eps^2\MiTime$ and
$\MaLagr=\eps\at{\MiLagrC-c\MiTime}$.
\par%
A first necessary condition for \eqref{KGpdeNLSansatzExplicit} to
yield approximate solutions is that $\om$ and $\theta$ satisfy the
dispersion relation $\om^2=\theta^2+\OSPot^{\prime\prime}(0)$ and
$c=-\om^\prime$ is the associated \emph{group velocity}. Moreover,
the complex-valued amplitude $A$ must satisfy the nonlinear
Schr\"odinger (nlS) equation
\begin{align}
\notag
2\om\iu   A_\MaTime=\varrho_1 A_{\MaLagr \MaLagr}
-\varrho_2\abs{A}^2A,
\end{align}
where the constants $\varrho_1$ and $\varrho_2$ can be computed
explicitly. The validity of this macroscopic model has been proven
rigorously in \cite{KSM92} on the level of the equation of motion.
\par%
As in the Boussinesq example, both the microscopic and macroscopic
models have Lagrangian and Hamiltonian structures and so we are
interested in the question how these are related to each other. The
new feature in this example is the presence of microscopic
oscillations and the key idea is to introduce an additional
one-dimensional, periodic \emph{phase variable}
$\Phase\in{T^1}\simeq[0,\,2\pi]$. This new degree of freedom enables
us to find a suitable two-scale transformation such that \emph{all}
(transformed) oscillations are confined in the phase direction
$\Phase$. This suggests the two-scale ansatz
\begin{align}%
\label{KGpdeNLSansatzGeneral} %
x({\MiTime},\MiLagrC,\Phase)
=\eps{X}(\eps^2\MiTime,\eps(\MiLagrC-c\MiTime),\Phase+
\om\MiTime+\theta\MiLagrC),
\end{align}
which is similar to \eqref{KGpdeNLSansatzExplicit} but gives rise to
an invertible two-scale transformation.
\bigpar
The introduction of $\Phase$ can be viewed as an \emph{embedding} of
the microscopic system, such that \eqref{KGpde} becomes
\begin{align}
\notag
x_{\MiTime\MiTime}\triple{\MiTime}{\MiLagrC}{\Phase}=x_{\MiLagrC\MiLagrC}\triple{\MiTime}{\MiLagrC}{\Phase}
-\OSPot^\prime(x\triple{\MiTime}{\MiLagrC}{\Phase}).
\end{align}
This embedding does not affect the microscopic dynamics, since
$\Phase$ appears just as a parameter. The embedded system has
Lagrangian $\calL=\calK-\calV$ and Hamiltonian
$\calH=\calE=\calK+\calV$ with
\begin{align}
\label{KGpde.Energies} %
\calK(x_t)=\int\mhintlimits_{\Rset\times
T^1}\tfrac{1}{2}x_t^2\,\d\MiLagrC\d\Phase, \quad
\calV(x)=\int\mhintlimits_{\Rset\times
T^1}\tfrac12x_\MiLagrC^2+\OSPot(x)\,\d\MiLagrC\d\Phase
\end{align}
and corresponds to the symplectic form \eqref{MEX:kdv:SymplForm1}.
Moreover, we find two continuous symmetry groups related to shifts
with respect to $\MiLagrC$ and $\Phase$, which by Noether's theorem
correspond to the conserved quantities (integrals of motion)
\begin{align}
\label{KGpde.IntOfMotion} %
\calI_{\mathrm{space}}\pair{x}{x_\MiTime}=
\int\mhintlimits_{\Rset\times
T^1}x_\MiTime\,x_\MiLagrC\,\d\MiLagrC\d\Phase, \qquad
\calI_{\mathrm{phase}}\pair{x}{x_\MiTime}=
\int\mhintlimits_{\Rset\times
T^1}x_\MiTime\,x_\Phase\,\d\MiLagrC\d\Phase.
\end{align}
The second integral of motion arises only due to the embedding but
plays a prominent role in the two-scale reduction, since it is
needed for the derivation of the macroscopic Hamiltonian. In fact,
the moving frame in \eqref{KGpdeNLSansatzGeneral} involves drifts in
space and phase direction and the associated integral of motion
reads
\begin{align}
\notag%
\calI=-c\,\calI_{\mathrm{space}}+\om\,\calI_{\mathrm{phase}}.
\end{align}
Like for the Boussinesq example, we can use the transformation
\eqref{KGpdeNLSansatzGeneral} and our general approach described
below in order to derive the Lagrangian and Hamiltonian structures
for the nlS equation directly from their microscopic counterparts.
It comes out, that the leading order terms determine the
microstructure, the next-leading order terms give the moving frame
speed, and finally, the next-next leading order terms provide the
macroscopic law of motion. This will be explained in detail within
\S\ref{sec:ChainNew.nls}.
%
%
\subsection{General approach to
Lagrangian and Hamiltonian two-scale reduction}
%

%
The concepts arising in the above examples can be generalized to the
following abstract framework for a Lagrangian and Hamiltonian
two-scale reduction.
\bigpar%
The first step concerns the
{\bf{embedding}} %
%
of the microscopic system. We have seen above that the treatment of
models with microstructure requires the introduction of new phase
variables $\Phase$. Moreover, for discrete models like  chains we
replace the particle index $\MiLagr\in\Zset$ by a continuous
variable $\MiLagrC\in\Rset$. In all cases this embedding does not
change the microscopic dynamics, but it gives rise to new
\emph{continuous symmetry groups} and hence to additional integrals
of motion which contribute to the macroscopic Hamiltonian.
\par%
In what follows we always consider the Lagrangian $\calL$ of the
embedded system which is defined on the tangent bundle $TQ$ of the
microscopic configuration space $Q$. Then there exists an equivalent
Hamiltonian structure on the \emph{cotangent bundle} $T^\ast{Q}$
corresponding to the canonical symplectic form. However, for the
reduction step explained below it is essential to consider a
Hamiltonian $\calH$ as well as a symplectic form $\sigma$ both of
which are defined on the \emph{tangent bundle} $TQ$. To this end we
pull back the canonical structure from $T^\ast{Q}$ to $TQ$ via the
fiber derivative of $\calL$. This will be discussed in detail in
\S\ref{sec:HamStruct}.

\bigpar
The most important step in any two-scale reduction is the %
{\bf{transformation}} %
of the embedded system. For this purpose we introduce two-scale
transformations by composing elementary building blocks such as
\emph{(weak) symmetry transformations}, \emph{moving-frame
transformations}, and \emph{scaling transformations}. The first two
building blocks are well understood in classical mechanics, whereas
our concept of scaling transformations seems to be new, since it
involves the scaling of space \emph{and} time. The starting point
for any scaling transformation is a map $\calS_\con:Q\rra{P}$
bridging the microscopic and the macroscopic configuration spaces
$Q$ and $P$. The definition of such a map involves only the scalings
of the space coordinates, but its lift $\calS_\vel:TQ\rra{TP}$ to
the tangent bundles takes into account also the time scaling.
\par%
Two-scale transformations are in the heart of any two-scale
reduction, because they provide a macroscopic Lagrangian $\bbL$, a
macroscopic Hamiltonian $\bbH$, and a symplectic form $\bssi$ (all
defined on $TP$), which depend explicitly on the scaling parameter
$\eps$.
\bigpar%
The %
{\bf{reduction}} %
%
step starts with the formal expansions of the transformed Lagrangian
and Hamiltonian structures with respect to the scaling parameter
$\eps$, i.e.,
\begin{align*}
\bbL\at{\eps}=\eps^\kappa\,\at{\bbL_0+\eps\bbL_1+
\eps^2\bbL_2+\tdots},
\end{align*}
and
\begin{align*}
\bbH\at{\eps}=\eps^\kappa\,\at{\bbH_0+\eps\bbH_1+
\eps^2\bbH_2+\tdots},\quad
\bssi\at{\eps}=\eps^\kappa\,\at{\bssi_0+\eps\bssi_1+
\eps^2\bssi_2+\tdots}.
\end{align*}
A key feature of our approach is the \emph{Principle of Consistent
Expansions} which will be proved in \S\ref{sec:HamStruct} and
guarantees that $\pair{\bbH_i}{\bssi_i}$ is the Hamiltonian
structure corresponding to the Lagrangian $\bbL_i$. For this
principle to hold it is crucial to consider the Hamiltonian
structure on the \emph{tangent} (and not on the \emph{cotangent})
bundle.
\par%
For some examples the leading order Lagrangian $\bbL_0$ is
non-degenerate. Then the effective macroscopic model is completely
determined already by the leading order terms. However, whenever the
two-scale ansatz involves an oscillatory microstructure the leading
order terms turn out to be degenerate in the following sense: The
leading order Lagrangian $\bbL_0$ is quasi-stationary, i.e., it does
not depend on $X_\MaTime$, and this implies $\bbH_0=-\bbL_0$ and
$\bssi_0=0$. Moreover, there exists a sub-manifold $P_0$ of $P$ such
that the gradient of $\bbL_0$ vanishes on $P_0$. In this case we
restrict $\bbL-\bbL_0$ and $\bbH-\bbH_0$ and $\bssi$ to $TP_0$, and
derive the effective macroscopic model by expanding the restricted
structures.
\bigpar
The reduction procedure concerns the \emph{convergence of Lagrangian
and Hamiltonian structures} as $\eps\rra0$, but this does \emph{not}
necessarily imply the \emph{convergence of solutions}. Therefore
each reduced model must be \emph{justified}. In the general setting
the justification problem turns out to be very subtle and is not
addressed in this paper. However, for all examples presented here we
discuss the corresponding justification problem after having derived
the reduced Lagrangian and Hamiltonian structures. We also refer to
the surveys \cite{Mie02,AMSMSP:GHM,SchUec07MLPD} and to \cite{Mie07}
for an abstract theory using $\Gamma$-convergence for Hamiltonian
systems.
\bigpar%
The abstract framework for the two-scale reduction will be developed
in detail within \S\ref{sec:FoundNew}, where we prove the
transformation rules for Lagrangian and Hamiltonian structures and
discuss the reduction procedure in the various cases. Finally, in
\S\ref{sec:ChainNew} we apply this method to several micro-macro
transitions for the atomic chain.
%
%
\subsection{Two-scale reductions for the atomic chain}\label{sec:Intro.ExChain}
%
%
%
The nonlinear atomic chain consists of identical particles with unit
mass. The atoms are coupled to a background field by the
\emph{on-site potential} $\OSPot$ and nearest neighbors interact via
the \emph{pair potential} $\NNPot$. The microscopic dynamics is
governed by Newton's equations
\begin{align}
\label{Intro:AtomicChain} %
\ddot{x}_\MiLagr\at\MiTime= \NNPot^\prime\bat{{x}_{\MiLagr+1}
\at\MiTime-{x}_{\MiLagr}\at\MiTime}-
\NNPot^\prime\bat{{x}_{\MiLagr}\at\MiTime-
{x}_{\MiLagr-1}\at\MiTime}-\OSPot^\prime\at{{x}_{\MiLagr}\at\MiTime},
\end{align}
where $\MiLagr\in\Zset$ is the discrete particle index and
$x_\MiLagr\at{\MiTime}\in\Rset$ denotes the displacement of the
$j$-{t}{h} particle at time $\MiTime$. For $\OSPot\equiv0$ and
an-harmonic $\NNPot$ we obtain the Fermi--Pasta--Ulam (FPU) chain,
while Klein-Gordon (KG) chains correspond to harmonic $\NNPot$ but
have an-harmonic $\OSPot$.
\bigpar%
A general micro-macro transition for the atomic chain is related to
the two-scale ansatz
\begin{align}
\notag
x_\MiLagr\at\MiTime= \eps^\alpha{X}\pair{\eps^\beta\MiTime}
{\eps\at{\MiLagr-c\,\MiTime}}
\end{align}
with macroscopic time $\tau=\eps^\beta\MiTime$, macroscopic particle
index $\MaLagr=\eps\at{\MiLagr-c\MiTime}$ and macroscopic
configuration $X$. Notice that $\MaLagr$ is assumed to be a
continuous variable and can be interpreted as the \emph{coordinate
of a macroscopic material point}.
\bigpar
In the example part \S\ref{sec:ChainNew} we study the following
micro-macro transitions and discuss how the Lagrangian and
Hamiltonian structures that correspond to the effective macroscopic
equations can be derived directly from the Lagrangian and
Hamiltonian structure of the atomic chain. To this end we embed the
atomic chain \eqref{Intro:AtomicChain} into a microscopic system
with continuous particle index $\eta\in\Rset$, see
\S\ref{sec:Chain.Embed}.
%
%
\paragraph{Quasi-linear wave equation}
%
%
In \S\ref{sec:ChainNew.WE} we consider the FPU chain and rely on the
two-scale ansatz
\begin{align}
\label{Intro:WE.MSAnsatz}%
x_{\MiLagr}\at{\MiTime}=\eps^{-1}{X}\pair{\eps\MiTime}
{\eps\MiLagr},
\end{align}
which has no moving frame and corresponds to the \emph{hyperbolic
scaling} $\MaTime=\eps\MiTime$ and $\MaLagr=\eps\MiLagr$. In this
case the macroscopic evolution satisfies the nonlinear wave equation
\begin{align*}
\MaTimeDerS{}X-\MaLagrDer{}\Bat{\NNPot^\prime\at{\MaLagrDer{X}}}=0.
\end{align*}
%
%
\paragraph{KdV equation}
%
The second example, see \S\ref{sec:ChainNew.kdv}, concerns the
passage from FPU chains to a KdV equation by means of a two-scale
ansatz similar to \eqref{MEX:kdv.TSA}.
%
\paragraph{Modulated pulses and the nlS equation}
%
%
In analogy to the second motivating example, in
\S\ref{sec:ChainNew.nls} we study the macroscopic evolution of a
modulated pulse in the KG \emph{chain}. Similar to above, the
two-scale ansatz reads
\begin{align}
\label{Intro:nls.MSAnsatz}%
x_j\at\MiTime
=\eps{A}\pair{\eps^2\MiTime}{\eps\MiLagr-\eps{}c\MiTime}
\mhexp{\iu\at{\om\MiTime+\theta\MiLagr}} +\cc
\end{align}
and the macroscopic dynamics is described by an nlS equation. The
only difference as compared to the case of the \emph{continuous} KG
equation \eqref{KGpde} concerns the dispersion relation leading to
different coefficients in the macroscopic equation.
%
%
\paragraph{Three-wave-interaction}
%
%
The fourth example, see \S\ref{sec:ChainNew.threeWaves}, is the most
involved one and describes how three modulated pulses interact if
they are in resonance. This gives rise to the following ansatz
\begin{align}
\label{Intro:TW.MSAnsatz} %
x_{\MiLagr}\at{\MiTime} = \eps\sum_{n=1}^3 {A_n}\pair{\eps\MiTime}
{\eps\MiLagr}\mhexp{\iu\at{\om_n\MiTime+\theta_n\MiLagr}}+\cc
\end{align}
with three phases $\Phase_n=\om_n\MiTime+\theta_n\MiLagr$ and three
amplitudes $A_n$. All pairs $p_n=\pair{\theta_n}{\om_n}$ satisfy the
dispersion relation of the KG chain and are coupled via the
\emph{resonance condition} $p_1+p_2+p_3=0$ in
$ T^1 \times \Rset $,
where $T^k =\Rset^k/_{(2\pi\Zset)^k} $ is the $k$-dimensional torus.
This resonance condition shows that we have only two independent
phases. Moreover, the amplitudes are coupled on the hyperbolic
scaling $\MaTime=\eps\MiTime$, $\MaLagr=\eps\MiLagr$ via the
\emph{three-wave-interaction equations}
\begin{align}
\label{Intro:TW.MSequations} %
 \iu  \!
\begin{pmatrix}2\om_1\!\!&0&0\\0&\!\!2\om_2\!\!&0\\0&0&\!\!2\om_3\end{pmatrix}
\partial_\tau\!\!\begin{pmatrix}A_1 \\ A_2\\ A_3 \end{pmatrix}
=
\iu \! \begin{pmatrix}\!2\om_1\om_1^\prime\!\!&0&0\\
0&\!\!2\om_2\om_2^\prime\!\!&0\\0&0&\!\!2\om_3\om_3^\prime\!\end{pmatrix}
\partial_y\!\begin{pmatrix} A_1 \\ A_2\\ A_3 \end{pmatrix}
- \OSPot^{\prime\prime\prime}(0)
\begin{pmatrix}\mhol{A}_2\mhol{A}_3 \\
    \mhol{A}_1\mhol{A}_3 \\ \mhol{A}_1\mhol{A}_2 \end{pmatrix}.
 \end{align}
Finally, in \S\ref{sec:ChainNew.FurtherEx} we present further
examples for micro-macro transitions in the atomic chain. Although
they fit into the general framework they are postponed to a
forthcoming paper as they display additional complications.
%
%

\section{Lagrangian and Hamiltonian two-scale reduction}\label{sec:FoundNew}
%
%
In this section we describe the general framework for the two-scale
reduction of Lagrangian and Hamiltonian structures and present our
abstract results concerning two-scale transformations and the
problem of model reduction. Since we are mainly interested in
Hamiltonian PDEs and lattices we assume that the microscopic
configuration space $Q$ is a function space. Moreover, for
simplicity we suppose $Q$ to be a Hilbert space (usually some
$L^2$--space) with inner product $\skp{\cdot}{\cdot}$.
\par%
As a prototypical example for a microscopic Lagrangian $\cal$ we
consider a \emph{normal system}, where the Lagrangian $\calL$ is the
difference of quadratic kinetic energy $\calK$ and potential energy
$\calV$. More precisely, a normal Lagrangian $\calL$ satisfies
\begin{align} \label{TypEx:Normal.Lagrangian}
\calL\pair{x}{x_\MiTime}=\calK\at{x_\MiTime}-\calV\at{x},\qquad
\calK\at{x_\MiTime}=\tfrac{1}{2}\dualpair{x_\MiTime}{M\,x_\MiTime}
\end{align}
with symmetric mass matrix $M:Q\rra{Q}$. However, our approach is
not restricted to normal systems but can be applied to all
microscopic Lagrangian structures.
%
%
We start with some general remarks regarding Hamiltonian structures.
%
\subsection{Hamiltonian structures for given
Lagrangian}\label{sec:HamStruct}
%
%
In classical mechanics we have (at least) two possibilities to
introduce a Hamiltonian structure for a given Lagrangian
$\calL:TQ\rra\Rset$, where a Hamiltonian structure consists of both
a Hamiltonian (function) and a symplectic form.
\par%
The standard approach is related to the \emph{canonical Hamiltonian
structure} on the \emph{cotangent bundle} $T^\ast{Q}$. For its
definition we consider the Legendre transform $\calH:TQ\rra\Rset$ of
$\calL$ which is defined by
\begin{math}
\calH\pair{x}{x_\MiTime}=
\skp{\pi\pair{x}{x_\MiTime}}{x_\MiTime}-\calL\pair{x}{x_\MiTime}.
\end{math} %
Here, $\pi\pair{x}{x_\MiTime}=\mathfrak{F}\calL|_x\at{x_\MiTime}$ is
the canonical momentum associated to $x_\MiTime$ and is determined
by the \emph{fiber derivative} of the Lagrangian $\calL$. This fiber
derivative is given by
\begin{align*}
\mathfrak{F}\calL:TQ\rra{}T^{\ast}Q,\quad\pair{x}{x_\MiTime}
\mapsto\pair{x}{\VarDer{\calL}{x_\MiTime}\pair{x}{x_\MiTime}}=
\pair{x}{\pi\pair{x}{x_\MiTime}}.
\end{align*}
In the next step we replace the velocity $x_\MiTime$ by $\pi$,
assuming this is possible, and rewriting $\calH:TQ\rra\Rset$ in
terms of $x$ and $\pi$ we obtain the \emph{canonical Hamiltonian}
\begin{align*}
\mhol{\calH}:T^{\ast}Q\rra\Rset,\quad\mhol{\calH}\pair{x}{\pi}=
\calH\at{{\mathfrak{F}\calL}^{-1}\pair{x}{\pi}}.
\end{align*}
The Lagrangian equation to $\calL$, i.e.
\begin{math}
\tfrac{\d}{\d\MiTime}\pi\pair{x}{x_\MiTime}=\VarDer{\calL}{x}\pair{x}{x_\MiTime},
\end{math} %
is equivalent to the canonical equations
$x_\MiTime=\VarDer{\mhol{\calH}}{\pi}$,
$\pi_\MiTime=-\VarDer{\mhol{\calH}}{x}$, which can be written as
\begin{align*}
\mhol{\sigma}_\can|_{\mhol{z}}\pair{\mhol{z}_\MiTime}{\cdot}=
\d\mhol{\calH}|_{\mhol{z}}\at{\cdot}.
\end{align*}
Here, $\mhol{z}=\pair{x}{\pi}\in{}T^\ast{Q}$, and $\mhol{\si}_\can$
denotes the \emph{canonical symplectic form} on $T^{\ast}{Q}$ given
by
\begin{align}
\notag
\mhol\sigma_\can|_{\mhol{z}}\pair{\dot{\mhol{z}}}{\acute{\mhol{z}}}=
\skp{\acute{\pi}}{\dot{x}}- \skp{\dot{\pi}}{\acute{x}},
\end{align}
with $\dot{\mhol{z}}=\pair{\dot{x}}{\dot{\pi}}$ and
$\acute{\mhol{z}}=\pair{\acute{x}}{\acute{\pi}}$ being two
independent tangent vectors from $\tspace{\mhol{z}}{T^{\ast}Q}$.
\par
The second Hamiltonian structure lives on the \emph{tangent bundle}
$TQ$ and consists of the Hamiltonian $\calH:TQ\rra\Rset$ and a
non-canonical symplectic form $\sigma\in\Lambda^2\at{TQ}$ defined as
the pull-back of $\mhol{\sigma}_\can$ via $\mathfrak{F}\calL$, i.e.\
$\sigma=\at{\mathfrak{F}\calL}^\ast\mhol{\sigma}_\can$. This means
\begin{align}
\label{GS:CanSymplForm} \si|_z\pair{\dot{z}}{\acute{z}}=
\skp{D\pi|_z\at{\acute{z}}}{\dot{x}}-
\skp{D\pi|_z\at{\dot{z}}}{\acute{x}},
%
\end{align}
where $z=\pair{x}{x_\MiTime}$, $\dot{z},\acute{z}\in\tspace{z}{TQ}$,
and $D\pi|_z$ is the linearization of $\pi$ in $z$. Assuming that
$\mathfrak{F}\calL$ is differentiable, it can be shown, see
\cite{AM78} for a proof, that the Lagrangian equation for $\calL$ is
equivalent to the Hamiltonian system
\begin{align}
\label{GS:HamEqns} %
\sigma|_{z}\pair{z_\MiTime}{\cdot}=\d\calH|_{z}\at{\cdot}.
\end{align}
%
\begin{remark}%
The symplectic form $\si$ can be identified with a family
\begin{align*}
\Si:TQ\rra\mathrm{Lin}\at{Q{\times}Q,\,Q{\times}Q}
\end{align*}
of skew-symmetric and operator-valued matrices such that
\begin{math}
\sigma|_{z}\pair{\dot{z}}{\acute{z}}=
\skp{\Sigma|_{z}\dot{z}}{\acute{z}}_{Q{\times}{Q}}
\end{math}
for all states $z=\pair{x}{x_\MiTime}\in{Q}\times{Q}$ and arbitrary
tangent vectors $\dot{z},\,\acute{z}\in{Q}\times{Q}$. The components
$\Sigma_{\,{i}{j}}$, $i,j=1,2$, of $\Sigma$ are linear operators
$Q\rra{Q}$ and satisfy
$\Sigma_{\,{i}{j}}^\trans=-\Sigma_{\,{j}{i}}$. Consequently, the
Hamiltonian system \eqref{GS:HamEqns} is equivalent to
\begin{align*}
\Sigma|_{\pair{x}{x_\MiTime}}\,%
\frac{\d}{\d\MiTime}\begin{pmatrix}x\\x_{\MiTime}%
\end{pmatrix}=
\begin{pmatrix}\VarDer{\calH}{x}\pair{x}{x_\MiTime}\\%
\VarDer{\calH}{x_\MiTime}\pair{x}{x_\MiTime}%
\end{pmatrix}.
\end{align*}
\end{remark}
\begin{example}
\label{GS:Ex.Lagrangian}
On each Hilbert space $Q$ we can define the \emph{metric Lagrangian}
$\calL_\mtrc$ by
\begin{math}
\calL_\mtrc\pair{x}{x_\MiTime}=
\tfrac{1}{2}\skp{x_\MiTime}{x_\MiTime}.
\end{math} %
This implies $\calH_\mtrc=\calL_\mtrc$ and
\begin{align*}
\sigma_\mtrc|_\pair{x}{x_\MiTime}
\Bpair{\pair{\dot{x}}{\dot{x}_\MiTime}}
{\pair{\acute{x}}{\acute{x}_\MiTime}}=
\skp{\acute{x}_\MiTime}{\dot{x}}-
\skp{\dot{x}_\MiTime}{\acute{x}},\quad
\Sigma_\mtrc|_{\pair{x}{x_\MiTime}}=
\begin{pmatrix}
\scriptstyle0&\scriptstyle{-1}\\\scriptstyle1&\scriptstyle0%
\end{pmatrix},
\end{align*}
where $1$ denotes the identity map $Q\rra{Q}$. In what follows we
refer to $\si_\mtrc$ and $\Si_\mtrc$ as the \emph{metric symplectic
form} on $TQ$. Moreover, for a normal Lagrangian with
\eqref{TypEx:Normal.Lagrangian} we find
$\calH\pair{x}{x_\MiTime}=\calK\at{x_\MiTime}+\calV\at{x}$ as well
as
\begin{align*}
\Sigma=
\begin{pmatrix}\scriptstyle0&\scriptstyle{-M}\\\scriptstyle{\scriptstyle{M}}&\scriptstyle0\end{pmatrix}=
M\Sigma_{\mtrc},
\end{align*}
where we used $M=M^\trans$.
%
%
%
\end{example}

The tangent-bundle approach to Hamiltonian structures is more
general than the canonical one via the cotangent bundle, because it
works even if the map $x_\MiTime\mapsto\pi$ is not invertible, but
has the disadvantage that the symplectic form $\sigma$ depends
explicitly on the Lagrangian $\calL$. Consequently, the Hamiltonian
equations on $TQ$ do not arise in canonical form.
%
%
%
For the examples from \S\ref{sec:Intro.ME} and
\S\ref{sec:Intro.ExChain} we find $\pi\pair{x}{x_\MiTime}=x_\MiTime$
so that the Hamiltonian structures on $TQ$ and $T^\ast{Q}$ seem to
be equal. However, both structures \emph{transform differently}
under scaling transformations, see Principle \ref{POCE} and
\S\ref{sec:Trafos}.
\par%
For a first motivation why we prefer the tangent-bundle and avoid
the cotangent-bundle structures, let us study \emph{trivial
scalings}: Given a Lagrangian $\calL$ on $TQ$, we consider the
scaled Lagrangian $\calL_\eps=\eps\calL$, where $\eps>0$ is some
artificial small constant. The scaling of $\calH$ is given by
\begin{math}
\calH_\eps= %
\skp{\VarDer{\calL_\eps}{x_\MiTime}}{x_\MiTime}-
\calL_\eps=\eps\calH,
\end{math}
and similarly we find $\sigma_\eps=\eps\sigma$. On the other hand,
the standard (canonical) approach applied to $\calL_\eps$ yields
$\mhol{\calH}_\eps\pair{x}{\pi}=\eps\mhol\calH\pair{x}{\eps^{-1}\pi}$,
and the canonical equations
\begin{align}
\label{GS:ScaledCanonicalEqns}
x_\MiTime=+\,\VarDer{\mhol{\calH}_\eps}{\pi}\pair{x}{\pi}=
+\,\VarDer{\mhol{\calH}}{\pi}\pair{x}{\eps^{-1}\pi},\quad
\pi_\MiTime=-\,\VarDer{\mhol{\calH}_\eps}{x}\pair{x}{\pi}=
-\,\eps\VarDer{\mhol{\calH}}{x}\pair{x}{\eps^{-1}\pi}
\end{align}
again correspond to $\mhol{\si}_\can$, which does not depend on
$\eps$. Of course, as long as $\eps$ is fixed, both formulations are
completely equivalent, since we can replace $\pi$ by $\eps\pi$ in
\eqref{GS:ScaledCanonicalEqns}. However, if we try to identify
leading order dynamics by expansions in powers of $\eps$ we obtain
very different results. In fact, $\cal{L}_\eps$, $\calH_\eps$ and
$\sigma_\eps$ scale in the same way and, hence, $\eps$ drops out in
both the corresponding Lagrangian and Hamiltonian equations on $TQ$.
On the other hand, for a normal Lagrangian, as given in
\eqref{TypEx:Normal.Lagrangian}, we find
\begin{math}%
\mhol{\calH}_\eps=
\tfrac{1}{2\eps}\skp{\pi}{M^{-1}\pi}+\eps\calV\at{x}
\end{math}
and the formal expansion of $\mhol{\calH}_\eps$ gives
$\mhol{H}^\red\at{\pi}=\tfrac{1}{2\eps}\skp{\pi}{M^{-1}\pi}$ as
``leading order'' Hamiltonian on $T^{\ast}Q$. In particular, the
corresponding canonical equations  $x_\MiTime=\eps^{-1}M^{-1}\pi$
and $\pi_\MiTime=0$ do not recover the original dynamics.
\bigpar%
More generally, the key difference between tangent and cotangent
Hamiltonian structures is related to the following {\emph{\bf
Principle of Consistent Expansions}}:
\begin{principle}\label{POCE}
Suppose that the Lagrangian $\calL$ obeys a (formal) expansion in
powers of a parameter $\eps$, i.e.,
\begin{align}
\label{POCE:Eqn1} \calL\at{\eps}=\eps^\kappa\,\at{\calL_0+
\eps\calL_1+\eps^2\calL_2+\tdots}.
\end{align}
Then the Hamiltonian structure on $TQ$ obeys a corresponding
expansion
\begin{align*}
\calH\at{\eps}=\eps^\kappa\,\at{\calH_0+\eps\calH_1+
\eps^2\calH_2+\tdots},\quad
\si\at{\eps}=\eps^\kappa\,\at{\si_0+\eps\si_1+ \eps^2\si_2+\tdots}
\end{align*}
and all expansions are \emph{consistent}. This means, for each order
$\eps^i$ we have
\begin{align*}
\calH_{i}=\dualpair{\partial_{x_\MiTime}\calL_i}{x_\MiTime}-\calL_i,\qquad
\si_{i}=\at{\mathfrak{F}\calL_i}^\ast\,\mhol{\sigma}_\can.
\end{align*}
\end{principle}
\begin{proof}
Since the fiber-derivative operation acts linearly on the Lagrangian
we find
\begin{align*}
\mathfrak{F}\calL=\eps^\kappa\,\at{\mathfrak{F}\calL_0+\eps\mathfrak{F}\calL_1+
\eps^2\mathfrak{F}\calL_2+\tdots}
\end{align*}
and this implies both the existence and consistency of the expansion
of the Hamiltonian structure.
\end{proof}
The validity of Principle \ref{POCE} is a remarkable property of the
Hamiltonian structure on $TQ$ and has no analogue on $T^\ast{Q}$. In
fact, \eqref{POCE:Eqn1} implies a consistent expansion for the
canonical momentum $\pi$, i.e.\
\begin{math}
\pi\at{\eps}=\eps^\kappa\,\at{\pi_0+ \eps\pi_1+\eps^2\pi_2+\tdots}
\end{math} with $\pi_i=\VarDer{\calL_i}{x_\MiTime}$, but replacing
$x_\MiTime$ by $\pi$ we normally end up with a non-consistent
expansion for the Hamiltonian $\mhol{\calH}$ on $T^\ast{Q}$.
\bigpar%
In the context of this paper we do not apply Principle \ref{POCE} to
the microscopic Lagrangian or Hamiltonian structures, since usually
these do not depend on scaling parameters. However, the two-scale
transformations introduced in \S\ref{sec:Trafos} strongly depend on
$\eps$ and so do the \emph{transformed} Lagrangian and Hamiltonian
structures. Thus, for the purpose of model reduction the tangent
framework turns out to be very convenient as it provides the
consistency of the Lagrangian and Hamiltonian structures for
\emph{all} powers of $\eps$.
%
%
%
%
%
%
%
%
%
\subsection{Exact two-scale transformations}\label{sec:Trafos}
%
%
%
As mentioned in the introduction, any micro-macro transition relies
on an exact two-scale transformation which obviously changes the
Lagrangian and Hamiltonian structures. All of the two-scale
transformations considered in this paper are superpositions of
elementary building blocks, namely
\begin{enumerate}
\item (weak) symmetry transformations,%
\item moving-frame transformations,%
\item scalings of space and time coordinates.
\end{enumerate}
In this section we aim to describe how each of these building blocks
transforms the Lagrangian and Hamiltonian structures on $TQ$. The
concepts of symmetry and moving-frame transformations are well
established in the theory of Hamiltonian systems, but since they are
usually studied on the cotangent bundle we start with the
reformulation of standard results.
%
%
%
\subsubsection{Linear transformations}
%
%
%
Let $\calT_\con:Q\rra\tilde{Q}$ be a linear isomorphism between
${Q}$ and another Hilbert space $\tilde{Q}$ with inverse
$\tilde{\calT}_\con:\tilde{Q}\rra{Q}$. The canonical lifts of
$\calT_\con$ and $\tilde{\calT}_\con$ to the tangent bundles are
denoted by $\calT_\vel:TQ\rra{T\tilde{Q}}$ and
$\tilde{\calT}_\vel:T\tilde{Q}\rra{TQ}$, respectively, and satisfy
$\calT_\vel\pair{x}{x_\MiTime}=\pair{\calT_\con{x}}{\calT_\con{x}_\MiTime}$
as well as $\tilde{\calT}_\vel=\nnat{\calT_\vel}^{-1}$.
\begin{remark}
In what follows we use the inverse transformation
$\tilde{\calT}_\vel$ in order to pull back \emph{forms} from $Q$
(the pull-back with respect to $\tilde{\calT}_\vel$ is the
push-forward with respect to $\calT_\vel$). In particular, we pull
back functions $\calF$ ($0$-forms) and symplectic forms $\sigma$
($2$-forms). The images under this operation are denoted by
$\tilde{\calF}=\nnat{\tilde{\calT}_\vel}^\ast\calF$ and
$\tilde{\sigma}=\nnat{\tilde{\calT}_\vel}^\ast\sigma$, and satisfy
%
\begin{align*}
\tilde{\calF}\at{\tilde{z}}=\calF\nnat{\tilde{\calT}_\vel\tilde{z}},\quad\quad
\tilde{\sigma}|_{\tilde{z}}\npair{\dot{\tilde{z}}}{\acute{\tilde{z}}}=
\sigma|_{\tilde{\calT}_\vel\tilde{z}}\npair{\tilde{\calT}_\vel\dot{\tilde{z}}}{\tilde{\calT}_\vel\acute{\tilde{z}}},
\end{align*}
where $\tilde{z}\in{T\tilde{Q}}$ and
$\dot{\tilde{z}},\,\acute{\tilde{z}}\in\tspace{\tilde{z}}{T\tilde{Q}}$.
\end{remark}

\begin{theorem} %
\label{LT:Theo}%
Let $\tilde{\calL}=\calL\circ\tilde{\calT}_\vel$ be the transformed
Lagrangian and $\npair{\tilde{\calH}}{\tilde{\sigma}}$ the
associated Hamiltonian structure on $T\tilde{Q}$. Then,
$\tilde{\calH}$ and $\tilde\sigma$ equal the transformed Hamiltonian
and symplectic form, respectively.
\end{theorem}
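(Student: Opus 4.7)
The plan is to verify both assertions by reducing them to the classical fact that the cotangent lift of a linear isomorphism is a symplectomorphism of the canonical structures on $T^\ast Q$ and $T^\ast\tilde Q$.

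First, I would compute the fibre derivative of $\tilde\calL=\calL\circ\tilde\calT_\vel$ via the chain rule. Since $\tilde\calT_\vel\pair{\tilde x}{\tilde x_\MiTime}=\pair{\tilde\calT_\con\tilde x}{\tilde\calT_\con\tilde x_\MiTime}$ is linear in the fibre variable, differentiation in $\tilde x_\MiTime$ yields $\tilde\pi\at{\tilde z}=\tilde\calT_\con^\ast\,\pi\at{\tilde\calT_\vel\tilde z}$, where $\tilde\calT_\con^\ast:Q\to\tilde Q$ is the Hilbert-space adjoint of $\tilde\calT_\con$. Substituting this into the Legendre relation $\tilde\calH\at{\tilde z}=\skp{\tilde\pi\at{\tilde z}}{\tilde x_\MiTime}-\tilde\calL\at{\tilde z}$ and using $\skp{\tilde\calT_\con^\ast\pi}{\tilde x_\MiTime}=\skp{\pi}{\tilde\calT_\con\tilde x_\MiTime}$ immediately produces $\calH\at{\tilde\calT_\vel\tilde z}$, which by the Remark is precisely the transformed Hamiltonian.

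Next, I would recast the same computation as the factorisation $\mathfrak F\tilde\calL=\calT_\mom\circ\mathfrak F\calL\circ\tilde\calT_\vel$, in which $\calT_\mom:T^\ast Q\to T^\ast\tilde Q$ is the cotangent lift of $\calT_\con$, given explicitly by $\calT_\mom\pair{x}{\pi}=\pair{\calT_\con x}{\tilde\calT_\con^\ast\pi}$. Pulling back the canonical two-form $\mhol\sigma_\can$ on $T^\ast\tilde Q$ through these three maps factorises $\tilde\sigma=\at{\mathfrak F\tilde\calL}^\ast\mhol\sigma_\can$ as $\nnat{\tilde\calT_\vel}^\ast\at{\mathfrak F\calL}^\ast\calT_\mom^\ast\mhol\sigma_\can$, so that the desired equality $\tilde\sigma=\nnat{\tilde\calT_\vel}^\ast\sigma$ reduces to the classical identity $\calT_\mom^\ast\mhol\sigma_\can=\mhol\sigma_\can$ expressing that the cotangent lift of a linear isomorphism is a symplectomorphism.

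The only point demanding some care will be the bookkeeping of adjoints. Since $Q$ and $\tilde Q$ are Hilbert spaces the canonical identifications $Q^\ast\simeq Q$ and $\tilde Q^\ast\simeq\tilde Q$ are natural, but one still has to keep track of which argument of which pairing each adjoint acts on. Verifying $\calT_\mom^\ast\mhol\sigma_\can=\mhol\sigma_\can$ directly from the formula \eqref{GS:CanSymplForm} amounts to a single application of the adjoint property together with the invertibility of $\calT_\con$, and I do not expect any substantive difficulty beyond this bookkeeping.
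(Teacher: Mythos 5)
Your proof is correct, and the first half (the Hamiltonian) coincides with the paper's argument: both compute $\tilde\pi\at{\tilde z}=\nnat{\tilde{\calT}_\con}^\prime\,\pi\nnat{\tilde{\calT}_\vel\tilde z}$ from the chain rule and feed this into the Legendre relation, exactly as in \eqref{LT:Theo.Eqn1}. For the symplectic form your route is genuinely different. The paper linearizes the momentum relation to obtain \eqref{LT:Theo.Eqn2} and then substitutes directly into the explicit expression \eqref{GS:CanSymplForm} for $\si$ in terms of $D\pi$, matching the four resulting terms by hand. You instead factor the fibre derivative as $\mathfrak F\tilde{\calL}=\calT_\mom\circ\mathfrak F\calL\circ\tilde{\calT}_\vel$ with the cotangent lift $\calT_\mom\pair{x}{\pi}=\pair{\calT_\con x}{\nnat{\tilde{\calT}_\con}^\prime\pi}$, and reduce the claim, by functoriality of the pull-back, to the classical identity $\calT_\mom^\ast\,\mhol{\sigma}_\can=\mhol{\sigma}_\can$. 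This is cleaner and more structural: it isolates exactly where the adjoint identity and the relation $\tilde{\calT}_\con\calT_\con=\mathrm{Id}_Q$ enter, it avoids any explicit manipulation of $D\pi$, and it makes transparent that the result depends on $\calL$ only through its fibre derivative; the price is introducing one extra map and checking the factorization (which is the one-line computation $\calT_\mom\nnat{\tilde{\calT}_\con\tilde x,\,\pi\nnat{\tilde{\calT}_\vel\tilde z}}=\nnat{\tilde x,\,\tilde\pi\at{\tilde z}}$). One small bookkeeping correction: the identity $\calT_\mom^\ast\,\mhol{\sigma}_\can=\mhol{\sigma}_\can$ should be verified against the displayed formula $\mhol\sigma_\can|_{\mhol z}\pair{\dot{\mhol z}}{\acute{\mhol z}}=\skp{\acute\pi}{\dot x}-\skp{\dot\pi}{\acute x}$ for the canonical form on the cotangent bundle, not against \eqref{GS:CanSymplForm}, which is already the induced form on $TQ$; the check itself is the single application of the adjoint property you anticipate, namely $\skp{\nnat{\tilde{\calT}_\con}^\prime\acute\pi}{\calT_\con\dot x}_{\tilde Q}=\skp{\acute\pi}{\tilde{\calT}_\con\calT_\con\dot x}_{Q}=\skp{\acute\pi}{\dot x}_{Q}$.
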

\begin{proof}
Let $z=\pair{x}{x_\MiTime}$ be given, and
$\tilde{z}=\calT_\vel{z}=\pair{\tilde{x}}{\tilde{x}_\MiTime}$. The
definition of $\tilde{\calL}$ implies
\begin{align}
\label{LT:Theo.Eqn1} \tilde{\pi}\at{\tilde{z}}=
\VarDer{\tilde\calL}{\tilde{x}_\MiTime}\at{z}=
\nnat{\tilde{\calT}_\con}^\prime\pi\nnat{\tilde{\calT}_\vel\tilde{z}},
\quad \text{i.e.}\quad
\skp{\tilde\pi\nnat{\tilde{z}}}{\cdot}_{\tilde{Q}}=
\skp{\pi\nnat{\tilde{\calT}_\vel\tilde{z}}}{\tilde{\calT}_\con\cdot}_{{Q}},
\end{align}
where $\nnat{\tilde{\calT}_\con}^\prime$ is the adjoint operator to
$\tilde\calT_\con$. From this identity we derive
\begin{align*}
\tilde{\calH}\at{\tilde{z}}=
\skp{\tilde{\pi}\at{\tilde{z}}}{\tilde{x}_\MiTime}_{\tilde{Q}}-
\tilde{\calL}\nnat{\tilde{z}}=
\skp{\pi\nnat{\tilde{\calT}_\vel\tilde{z}}}{\tilde{\calT}_\con\tilde{x}_\MiTime}_{Q}-
\calL\nnat{\tilde{\calT}_\vel\tilde{z}}=
\calH\nnat{\tilde{\calT}_\vel\tilde{z}},
\end{align*}
as well as
\begin{align}
\label{LT:Theo.Eqn2}
\skp{D\tilde\pi|_{\tilde{z}}\at{\dot{\tilde{z}}}}{\cdot}_{\tilde{Q}}=
\skp{D\pi|_{\tilde{\calT}_\vel\tilde{z}}
\nnat{\tilde{\calT}_\vel\dot{\tilde{z}}}}{\tilde{\calT}_\con\cdot}_{{Q}}
\end{align}
for all $\dot{\tilde{z}}\in\tspace{\tilde{z}}{T\tilde{Q}}$. Finally,
combining \eqref{LT:Theo.Eqn2} with \eqref{GS:CanSymplForm} for
$z=\tilde{\calT}_\vel\tilde{z}$ we find
\begin{align*}
\si|_{\tilde{\calT}_\vel\tilde{z}}
\npair{\tilde{\calT}_\vel\dot{\tilde{z}}}{\tilde{\calT}_\vel\acute{\tilde{z}}}&=
\skp{D\pi|_{\tilde{\calT}_\vel\tilde{z}}
\nnat{\tilde{\calT}_\vel\acute{\tilde{z}}}}{\tilde{\calT}_\con\dot{\tilde{x}}}_{Q}-
\skp{D\pi|_{\tilde{\calT}_\vel\tilde{z}}
\nnat{\tilde{\calT}_\vel\dot{\tilde{z}}}}{\tilde{\calT}_\con\acute{\tilde{x}}}_{Q}
\\&=%
\skp{D\tilde{\pi}|_{\tilde{z}}\nnat{\acute{\tilde{z}}}}{\dot{\tilde{x}}}_{\tilde{Q}}-
\skp{D\tilde{\pi}|_{\tilde{z}}\nnat{\dot{\tilde{z}}}}{\acute{\tilde{x}}}_{\tilde{Q}}
=\tilde{\si}|_{\tilde{z}}\npair{\dot{\tilde{z}}}{\acute{\tilde{z}}},
\end{align*}
and the proof is complete.
\end{proof}
%
%
%
%
\begin{corollary} %
The following equivalences are satisfied:
\begin{enumerate}
\item
A curve $\MiTime\mapsto{x}\at\MiTime\in{Q}$ satisfies the Lagrangian
equation to $\calL$ if and only if the transformed curve
$\MiTime\mapsto\tilde{x}\at\MiTime=\calT_\con{x}\at\MiTime\in\tilde{Q}$
satisfies the Lagrangian equation to $\tilde\calL$.
\item
A curve $\MiTime\mapsto{z}\at\MiTime\in{TQ}$ satisfies the
Hamiltonian equation to $\pair{\calH}{\si}$ if and only if the
transformed curve
$\MiTime\mapsto\tilde{z}\at\MiTime=\calT_\vel{z}\at\MiTime\in{T\tilde{Q}}$
satisfies the Hamiltonian equation to
$\npair{\tilde\calH}{\tilde\si}$.
\end{enumerate}
\end{corollary}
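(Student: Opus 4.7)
The plan is to derive both equivalences from Theorem~\ref{LT:Theo} together with the equivalence of Lagrangian and Hamiltonian equations on $TQ$ recorded in~\eqref{GS:HamEqns}. The key structural observation is that $\calT_\con$ is linear and time-independent, so for any curve $\MiTime\mapsto x\at\MiTime$ we have $\tilde x_\MiTime\at\MiTime=\calT_\con x_\MiTime\at\MiTime$, and consequently $\tilde{z}\at\MiTime=\calT_\vel z\at\MiTime$ holds along the entire trajectory whenever $\tilde x=\calT_\con x$. Differentiating once more, $\tilde z_\MiTime=\calT_\vel z_\MiTime$.

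For part (i), I would use the variational characterization. Since $\tilde\calL=\calL\circ\tilde\calT_\vel$ and $\tilde\calT_\vel\tilde z=z$ when $\tilde z=\calT_\vel z$, the Lagrangians agree along related curves,
\[
\tilde\calL\pair{\tilde x\at\MiTime}{\tilde x_\MiTime\at\MiTime}=\calL\pair{x\at\MiTime}{x_\MiTime\at\MiTime},
\]
so the action integrals coincide. Because $\calT_\con$ is a linear isomorphism, admissible variations correspond bijectively, hence a curve is a critical point for the action of $\calL$ precisely when its image under $\calT_\con$ is a critical point for the action of $\tilde\calL$. This gives the equivalence of Euler--Lagrange equations. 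Alternatively one could differentiate $\tilde\calL=\calL\circ\tilde\calT_\vel$ and verify the covariance directly via the chain rule, but the variational route is cleaner.

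For part (ii), I would invoke Theorem~\ref{LT:Theo}, which identifies the Hamiltonian structure of $\tilde\calL$ on $T\tilde Q$ with the pull-backs of $\calH$ and $\si$ under $\tilde\calT_\vel$. Given $\tilde z\at\MiTime=\calT_\vel z\at\MiTime$, the Hamiltonian equation $\tilde\si|_{\tilde z}\npair{\tilde z_\MiTime}{\dot{\tilde z}}=\d\tilde\calH|_{\tilde z}\at{\dot{\tilde z}}$ reads, after substituting the pull-back identities and using $\tilde\calT_\vel\tilde z_\MiTime=z_\MiTime$,
\[
\si|_{z}\npair{z_\MiTime}{\tilde\calT_\vel\dot{\tilde z}}=\d\calH|_{z}\at{\tilde\calT_\vel\dot{\tilde z}}
\]
for every $\dot{\tilde z}\in\tspace{\tilde z}{T\tilde Q}$. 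Since $\tilde\calT_\vel$ is a linear bijection on tangent vectors, letting $\dot{\tilde z}$ range freely is equivalent to letting $\dot z=\tilde\calT_\vel\dot{\tilde z}$ range freely over $\tspace{z}{TQ}$, and the displayed identity becomes the Hamiltonian equation at $z$.

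No serious obstacle is expected here; the only point demanding care is bookkeeping the duality between pull-back of forms and push-forward of curves, and verifying that time-differentiation commutes with $\calT_\vel$, both of which are immediate from linearity and time-independence of $\calT_\con$.
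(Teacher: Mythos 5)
Your proposal is correct; the paper states this corollary without proof as an immediate consequence of Theorem \ref{LT:Theo}, and your argument supplies exactly the intended details — for part (ii) the direct substitution of the pull-back identities $\tilde{\calH}=\calH\circ\tilde{\calT}_\vel$, $\tilde{\sigma}=\nnat{\tilde{\calT}_\vel}^\ast\sigma$ together with the bijectivity of $\tilde{\calT}_\vel$ on tangent vectors. The only stylistic difference is that for part (i) you argue variationally via invariance of the action, whereas the paper's pattern (compare the analogous corollary for moving frames) is to deduce the Lagrangian statement from the Hamiltonian one using the equivalence \eqref{GS:HamEqns}; both routes are valid.
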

%
%
%
%
%
%
%
\subsubsection{Weak symmetry transformations}
%
%
We introduce the notion of a \emph{weak symmetry transformation}
which describes a certain class of linear and invertible operators
from $Q$ into $Q$. Although both the Lagrangian and Hamiltonian
structures are not invariant they behave nicely under such
transformations. In particular, each weak symmetry transformation
changes neither the fiber derivative of $\calL$ nor the symplectic
form $\sigma$.
\begin{definition}
\label{SymmTrafo:Definition1}%
A \emph{weak symmetry transformation} (with respect to the
Lagrangian $\calL$) is a linear isomorphism $\calT_\con:Q\to{Q}$
with the following properties:
\begin{enumerate}

\item
$\calT_\con$ is unitary, this means
\begin{math}
\skp{\calT_\con{x}}{\tilde{x}}=
\skp{{x}}{\tilde{\calT}_\con\tilde{x}}
\end{math}
for all $x,\,\tilde{x}\in  Q$ .
\item
The canonical momentum $\pi=\VarDer{\calL}{x_\MiTime}$ commutes with
 $\calT_\con$ in the sense that
\begin{align}
\label{SymmTrafo:Definition1.Eqn2}
\pi\at{\calT_\vel{z}}=\calT_\con\pi\at{z}
\end{align}
holds for all $z=\pair{x}{x_\MiTime}\in{TQ}$ .
\end{enumerate}
Moreover, $\calT_\con$ is called  a \emph{symmetry transformation}
if it respects the Lagrangian, i.e.,
\begin{math}
\calL=\tilde\calL
\end{math}
in the sense of Theorem \ref{LT:Theo}.
\end{definition}
\begin{remark} \quad
\label{SymmTrafo.Remark1}%
$(i)$ Unitarity implies $\tilde{\calT}_\con={\calT}_\con^{\,\prime}$
and $\skp{x_1}{x_2}=\skp{{\calT_\con}x_1}{{\calT_\con}x_2}$ for all
$x_1,x_2\in{Q}$. $(ii)$ Condition \eqref{SymmTrafo:Definition1.Eqn2}
is equivalent to $\pi=\tilde{\pi}$, see \eqref{LT:Theo.Eqn1}, and
this implies
\begin{math}\mathfrak{F}{\calL}=\mathfrak{F}{\tilde{\calL}}\end{math} and
$\sigma=\tilde{\sigma}$.
$(iii)$ Each symmetry transformation satisfies $\calH=\tilde{\calH}$
and $\si=\tilde{\si}$.
$(iv)$ $\calL=\tilde{\calL}$ is sufficient for
\eqref{SymmTrafo:Definition1.Eqn2}.
\end{remark}
\begin{example}%
\label{SymmTrafo.Example1}%
Let $Q=L^2\at{\Rset{\times}T^1;\,\d\MiLagrC\d\Phase}$ be the
Lebesgue space of functions $x$ depending on $\MiLagrC\in\Rset$ and
a periodic phase variable $\Phase\in{T^1}\cong\ccinterval{0}{2\pi}$,
and let the unitary operator $\calT_\con$ be defined by
\begin{math}
\nnat{\calT_\con\,x}\pair{\MiLagrC}{\Phase}=
x\pair{\MiLagrC}{\Phase+s_0\eta}
\end{math}
for some $s_0$. The Lagrangian $\calL$ of the embedded Klein--Gordon
equation, cf.\ \eqref{KGpde.Energies}, is \emph{not} invariant under
the action of $\calT_\con$ as the differential operator
$\MiLagrCDer{}$ transforms into $\MiLagrCDer{}+s_0\PhaseDer{}$.
However, the condition \eqref{SymmTrafo:Definition1.Eqn2} is
satisfied.
\end{example}
%
%
%
\subsubsection{Groups of symmetry transformations}
%
%
The concept of symmetry groups is well established in mechanics and
mathematics and plays a fundamental role in the analysis of
Hamiltonian systems. Here we summarize the definitions and basic
properties.
\begin{definition}
\label{SymmTrafo:Definition2}%
A \emph{(weak) symmetry group} (with respect to the Lagrangian
$\calL$) is a one-parameter family of
$s\mapsto\calT_\con\at{s},\,s\in\Rset$, of (weak) symmetry
transformations that satisfies the following properties:
\begin{enumerate}
\item The family is a group of unitary transformations, i.e.,
${\calT}_\con\at{0}=\mathrm{Id}_{Q\to{Q}}$ and
\begin{align*}
{\calT}_\con\at{s{+}\tilde s} =  {\calT}_\con\at{s}
{\calT}_\con\at{\tilde s},\qquad\tilde{\calT}_\con\at{s}=
\nnat{\calT_\con\at{s}}^{-1}=\nnat{\calT_\con\at{s}}^{\prime}={\calT_\con\at{-s}}
\end{align*}
for all $s,\tilde s \in \Rset$.
\item
The generator $\calA_\con$ with
$\calA_\con{x}=\lim_{s\rra0}s^{-1}\at{\calT_\con\at{s}x-x}$ is
defined on a dense subset of $Q$. Consequently, the group
$s\mapsto\calT_\vel\at{s}$ is generated by
$\calA_\vel=\calA_\con\times\calA_\con$.

\end{enumerate}
\end{definition}
\begin{remark}
If $\calL$ is invariant under the action of a symmetry group
Noether's Theorem provides the integral of motion
\begin{align}
\label{SymmTrafo:IntOfMotion} %
\calI\pair{x}{x_\MiTime}=
\skp{\pi\pair{x}{x_\MiTime}}{\calA_\con{x}},\quad
\pi\pair{x}{x_\MiTime}=
\VarDer{\calL}{x_\MiTime}\pair{x}{x_\MiTime},
\end{align}
i.e., $\calI$ is conserved for any solution to the Hamiltonian
equation \eqref{GS:HamEqns}.

\end{remark}
\begin{example}
\label{SymmTrafo.Example2}
Let $Q$ and $\calL$ be as in Example \ref{SymmTrafo.Example1}, and
for fixed $\MiLagrC_0\in\Rset$, $\Phase_0\in\Rset$ and all
$s\in\Rset$ let
\begin{math}
\nnat{\calT_\con\at{s}\,x}\pair{\MiLagrC}{\Phase}=
x\pair{\MiLagrC+s\MiLagrC_0}{\Phase+s\Phase_0}.
\end{math}
Then, $s\mapsto\calT_\con\at{s}$ is a symmetry group with generator
$\calA_\con=\MiLagrC_0\MiLagrCDer{}+\Phase_0\PhaseDer{}$ and
integral of motion
\begin{math}
\calI= \eta_0\calI_{\mathrm{space}} +\Phase_0\calI_{\mathrm{phase}},
\end{math} %
where $\calI_{\mathrm{space}}$ and $\calI_{\mathrm{phase}}$ are
given by \eqref{KGpde.IntOfMotion}.
\end{example}
\begin{lemma}%
\label{SymmTrafo:Lemma1} %
Each (weak) symmetry group satisfies
\begin{math}
\sigma|_z\pair{\calA_\vel{z}}{\cdot}= \d{\calI}|_{z}\at{\cdot}
\end{math} %
for all $z\in{TQ}$ and $\calI$ from \eqref{SymmTrafo:IntOfMotion}.
\end{lemma}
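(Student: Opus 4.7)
The plan is to verify the identity by computing both sides at an arbitrary tangent vector $\acute z\in\tspace{z}{TQ}$ and matching them using the infinitesimal consequences of the symmetry group properties.

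First I would extract two infinitesimal identities. Differentiating the weak-symmetry commutation $\pi\at{\calT_\vel\at{s}z}=\calT_\con\at{s}\pi\at{z}$ in $s$ at $s=0$ gives the key pointwise identity
\begin{align*}
D\pi|_z\at{\calA_\vel z}=\calA_\con\pi\at{z},
\end{align*}
while differentiating the unitarity relation $\skp{\calT_\con\at{s}x}{\tilde x}=\skp{x}{\calT_\con\at{-s}\tilde x}$ in $s$ at $s=0$ yields skew-adjointness of the generator, $\calA_\con^\prime=-\calA_\con$.

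Next I would expand both sides. On the symplectic side, the defining formula \eqref{GS:CanSymplForm} together with $\calA_\vel z=\pair{\calA_\con x}{\calA_\con x_\MiTime}$ produces two terms; on the Noether side, Leibniz' rule applied to $\calI=\skp{\pi\at{\cdot}}{\calA_\con x}$ also produces two terms. One term on each side coincides automatically (both equal $\skp{D\pi|_z\at{\acute z}}{\calA_\con x}$), and the remaining terms are reconciled by first substituting the commutation identity into $\sigma$ and then using skew-adjointness to shift $\calA_\con$ from the left slot of the pairing to the right, which produces precisely the leftover term of $\d\calI$.

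The main obstacle will be functional-analytic rather than algebraic: since $\calA_\con$ and $\calA_\vel$ are in general only densely defined, the differentiation at $s=0$ and the adjoint manipulation are valid a priori only on a dense core of regular elements. I would therefore first establish the identity on this core, where all objects and pairings are well defined, and then extend it to the domain of $\calA_\vel$ by density, using the continuity of both sides as bilinear expressions in $\acute z$ and the joint continuity of the other pairings in $z$.
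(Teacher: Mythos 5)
Your computation is correct, and it rests on exactly the same two facts the paper uses: the infinitesimal version of the commutation relation \eqref{SymmTrafo:Definition1.Eqn2} and the unitarity of the group. The organization differs slightly. The paper stays on the cotangent bundle: it follows the curve $t\mapsto\calT_\vel\at{t}z$, differentiates $\skp{\pi\at{t}}{\acute{x}}=\skp{\pi\at{0}}{\calT_\con\at{-t}\acute{x}}$ at $t=0$ (thereby combining commutation and skew-adjointness in one step), verifies $\mhol{\si}_\can|_{\mhol{z}}\pair{\mhol{z}_\MiTime}{\cdot}=\d\mhol{\calI}|_{\mhol{z}}\at{\cdot}$ for $\mhol{\calI}\pair{x}{\pi}=\skp{\pi}{\calA_\con{x}}$, and then pulls the identity back via $\mathfrak{F}\calL$. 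You instead work directly on $TQ$ with the explicit formula \eqref{GS:CanSymplForm}, separating the two ingredients into $D\pi|_z\at{\calA_\vel{z}}=\calA_\con\pi\at{z}$ and $\calA_\con^\prime=-\calA_\con$; the term-by-term matching you describe is exactly right. Your route is marginally more self-contained (no detour through $T^\ast{Q}$) but requires $\pi\at{z}$ to lie in the domain of $\calA_\con$ where the paper only needs $\calA_\con$ applied to $x$, $x_\MiTime$ and the test vector $\acute{x}$; your closing remark about first proving the identity on a dense core and extending by continuity addresses this adequately and is a point the paper passes over in silence.
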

\begin{proof}
For given $z=\pair{x}{x_{\MiTime}}\in{TQ}$ consider the curve
$t\mapsto{z}\at{\MiTime}=\calT_\vel\at{\MiTime}z\in{TQ}$, and its
image under $\mathfrak{F}\calL$, that is
$t\mapsto\mhol{z}\at\MiTime=\pair{{x}\at\MiTime}{{\pi}\at\MiTime}$
with ${x}\at\MiTime=\calT_\con\at\MiTime{x}$ and
${\pi}\at\MiTime=\pi\at{z\at{t}}$. Moreover, let
$\acute{\mhol{z}}=\pair{\acute{x}}{\acute{\pi}}$ be an arbitrary
tangent vector in $\tspace{\mhol{z}\at{0}}{T^\ast{Q}}$. Condition
\eqref{SymmTrafo:Definition1.Eqn2} implies
\begin{math}
\skp{\pi\at{\MiTime}}{\acute{x}}=
\skp{\pi\at{\calT_\vel\at{\MiTime}z\at{0}}}{\acute{x}}=
\skp{\pi\at{0}}{{\calT}_\con\at{-t}\,\acute{x}}
\end{math} %
and differentiation and evaluation for $\MiTime=0$ yield
\begin{math}
\skp{{\pi}_\MiTime\at0}{\acute{x}}=-\skp{\pi\at{0}}{{\calA_\con}\acute{x}}.
\end{math} %
This identity and the definition of $\mhol{\sigma}_\can$ provide
\begin{align*}
\mhol{\si}_\can|_{\mhol{z}\at{0}}%
\npair{\mhol{z}_\MiTime\at{0}}{\acute{\mhol{z}}}&=
\mhol{\si}_\can|_{\mhol{z}_0}\Bpair%
{\pair{\calA_\con{x}\at{0}}{{\pi}_\MiTime\at0}}%
{\pair{\acute{x}}{\acute{\pi}}}%
\\&= %
\skp{\acute{\pi}}{\calA_\con{x}\at{0}}-
\skp{{\pi}_\MiTime\at0}{\calA_\con\acute{x}}
= %
\skp{\acute{\pi}}{\calA_\con{x}\at{0}}+
\skp{{\pi}\at0}{\calA_\con\acute{x}}.
\end{align*}
Moreover, for $\mhol{\calI}\pair{x}{\pi}=\skp{\pi}{\calA_\con{x}}$
we find
\begin{math}
\d\mhol{\calI}|_{\mhol{z}\at0}\at{\acute{\mhol{z}}}=
\skp{\acute{\pi}}{{\calA_\con}\,{x}\at{0}}+
\skp{\pi\at{0}}{{\calA_\con}\,\acute{x}}
\end{math}
and, hence,
\begin{math}
\mhol{\si}_\can|_{\mhol{z}\at{0}}\pair{\mhol{z}_\MiTime\at{0}}{\cdot}=
\d\mhol{\calI}|_{\mhol{z}\at0}\at{\cdot}.
\end{math}
Finally, pulling back this identity via $\mathfrak{F}\calL$ and
using $z_\MiTime\at{0}=\mathcal{A}_\vel{z}\at{0}$ completes the
proof .
\end{proof}
%
%
%
\subsubsection{Moving frames}\label{sec:MFTrafo}
%
%
In this section we consider a time-parametrized family of invertible
transformations of the configuration space
$\calM_\con\at{\MiTime}:Q\rra{Q}$ and denote the family of inverse
transformations by
\begin{math}
\tilde{\calM}_\con\at{\MiTime}.
\end{math}
Taking into account the time dependence we shall lift this
transformation to the tangent bundle as follows: Each
time-parametrized curve $\MiTime\mapsto{x}\at\MiTime$ in $Q$
provides a lifted curve
\begin{math}
\MiTime\mapsto\pair{x\at\MiTime}{x_t\at\MiTime}
\end{math}
in $TQ$, where $x_\MiTime\at\MiTime$ denotes the tangent vector at
time $\MiTime$, i.e.\
\begin{math}
x_t\at\MiTime=\tfrac{\d}{\d\MiTime}x\at\MiTime.
\end{math}
Consequently, the lift of the transformed curve
\begin{math}
\MiTime\mapsto\tilde{x}\at\MiTime=
\calM_\con\at{\MiTime}\,x\at\MiTime
\end{math} %
is given by
\begin{align*}
\MiTime\mapsto\pair{\tilde{x}\at\MiTime} {\tilde{x}_t\at\MiTime}=
\pair{\tilde{x}\at\MiTime}{\tfrac{\d}{\d\MiTime}
\tilde{x}_t\at\MiTime}= \pair{\calM_\con\at{\MiTime}\,
x\at\MiTime}{\calM_\con\at{\MiTime}\,x_t\at\MiTime+
\at{\tfrac{\d}{\d\MiTime} \calM_\con\at{\MiTime}}\,x\at\MiTime},
\end{align*}
and we read-off the definition of $\calM_\vel\at{\MiTime}$, that is
\begin{align}
\label{MFTrafo:LiftedMap.Eqn1}%
\calM_\vel\at{\MiTime}:\pair{x}{x_\MiTime}
\mapsto\pair{\tilde{x}}{\tilde{x}_\MiTime}=
\pair{\calM_\con\at{\MiTime}\,x}{\calM_\con\at{\MiTime}\,x_t+
\at{\tfrac{\d}{\d\MiTime}\calM_\con\at{\MiTime}}\,x}.
\end{align}
The transformation of a Hamiltonian structure under a time-dependent
transformation is in general quite complicated. Therefore we solely
discuss time-dependent transformations that are related to
\emph{moving frames}.
\begin{definition}
\label{SymmTrafo:Definition3}%
The transformation $\calM_\con\at\MiTime$  is called a
\emph{moving-frame transformation} (with respect to the Lagrangian
$\calL$) if it is related to a symmetry group
$s\mapsto\calT_\con\at{s}$ via
$\calM_\con\at\MiTime=\calT_\con\at{t}$. This implies
$\calM_\con\at{0}=\mathrm{Id}_{Q\to{Q}}$ and
$\tilde{\calM}_\con\at{t}=\calM_\con\at{-t}$ for all $t$.
\end{definition}
\begin{example}
Let $Q$ be as in Example \ref{SymmTrafo.Example1}, and let
$\calL_\mtrc$ be the metric Lagrangian from Example
\ref{GS:Ex.Lagrangian}. Obviously, $\calL_\mtrc$ is invariant under
Galilean transformations
\begin{math}
\triple{\MiTime}{\MiLagrC}{\Phase}\mapsto
\triple{\MiTime}{\tilde{\MiLagrC}}{\Phase},
\end{math} %
where $\tilde{\MiLagrC}=\MiLagrC-c\,\MiTime$ denotes the spatial
coordinate in the moving frame. The corresponding time-dependent
transformations $\calM_\con\at\MiTime:x\mapsto\tilde{x}$ and
\begin{math}
\calM_\vel\at\MiTime:\pair{x}{x_\MiTime}
\mapsto\pair{\tilde{x}}{\tilde{x}_\MiTime}
\end{math}
can be read-off from the identification
$x\triple{\MiTime}{\MiLagrC}{\Phase}=\tilde{x}\triple{\MiTime}{\MiLagrC-c\MiTime}{\Phase}$
and are given by
\begin{align*}
\tilde{x}\pair{\MiLagrC}{\Phase}=
x\pair{\MiLagrC+c\MiTime}{\Phase},\quad
{\tilde{x}_\MiTime}\pair{\MiLagrC}{\Phase}=
x_\MiTime\pair{\MiLagrC+c\MiTime}{\Phase}+
c\,x_\MiLagrC\pair{\MiLagrC+c\MiTime}{\Phase},
\end{align*}
where $x_\MiLagrC$ abbreviates the derivative of $x$ with respect to
$\MiLagrC$. The underlying symmetry group
$\nnat{\calT_\con\at{s}{x}}\pair{\MiLagrC}{\Phase}=x\pair{\MiLagrC+cs}{\Phase}$
has the generator $\calA_\con=c\MiLagrCDer{}$ and the conserved
quantity
\begin{math}
\calI\pair{x}{x_\MiTime}= c\int_{\Rset}x_\MiTime\,
{x}_\MiLagrC\,\d\MiLagrC\in\Rset.
\end{math} %
The lifted transformations $\calM_\vel\at\MiTime$ and
$\calT_\vel\at{t}$ are really different because of
\begin{math}
\calT_\vel\at{t}= \calT_\con\at{t}\times\calT_\con\at{t}.
\end{math} %
\end{example}
For moving-frame transformations we can decompose the lifted map as
follows: Definition \ref{SymmTrafo:Definition3} implies
\begin{math}
\tfrac{\d}{\d\MiTime}\calM_\con\at\MiTime=
\calA_\con\calT_\con\at\MiTime,
\end{math} %
and using \eqref{MFTrafo:LiftedMap.Eqn1} we conclude that
\begin{align}
\notag
\begin{split}
\calM_\vel\at\MiTime&=
\calR_\vel\circ\calT_\vel\at\MiTime\quad\text{with}\quad
\calR_\vel:\pair{x}{x_\MiTime}
\mapsto\pair{x}{x_\MiTime+\calA_\con\,x},\\
\tilde{\calM}_\vel\at\MiTime&=
\tilde{\calT}_\vel\at\MiTime\circ\tilde{\calR}_\vel\quad\text{with}\quad
\tilde{\calR}_\vel:\pair{\tilde{x}}{\tilde{x}_\MiTime}
\mapsto\pair{\tilde{x}}{\tilde{x}_\MiTime-\calA_\con\,\tilde{x}}.
\end{split}
\end{align}
In what follows we denote by $\tilde{\calL}$ the transformed
Lagrangian, i.e.\
$\tilde{\calL}\at\MiTime=\calL\circ\tilde{\calM}_\vel\at\MiTime$,
and with $\npair{\tilde{\calH}}{\tilde{\sigma}}$ the Hamiltonian
structure corresponding to $\tilde{\calL}$. However, since the
Legendre transformation does not commute with $\calM_\vel\at\MiTime$
we can not expect  $\tilde{\calH}$, that is the Legendre transform
of $\tilde\calL$, to equal the transformed Hamiltonian. For this
reason we identify $\calH$ with $\calE$, and define
$\tilde{\calE}=\calE\circ\tilde{\calM}_\vel{\at\MiTime}$. This
notation is motivated by normal systems, see
\eqref{TypEx:Normal.Lagrangian}, for which the Hamiltonian $\calH$
equals the total energy $\calE=\calK+\calV$. Finally, we write
$\tilde{\calI}=\calI\circ\tilde{\calM}_\vel\at\MiTime$, where
$\calI\at{z}=\skp{\pi\at{z}}{\calA_\con{x}}$ is the integral of
motion associated to the symmetry group.
\par
Next we prove that all these quantities do not depend on time, as it
is already indicated by the notation, and derive the transformation
rules for the Hamiltonian structure.
\begin{theorem}
\label{MF:Lemma2}
Moving-frame transformations satisfy
\begin{align}
\notag
\tilde{\calL}=\calL\circ\tilde{\calR}_\vel,\quad
\tilde{\calE}=\calE\circ\tilde{\calR}_\vel,\quad
\tilde{\calI}=\calI\circ\tilde{\calR}_\vel.
\end{align}
Moreover, we have
\begin{align*}
\tilde{\calH}=\tilde{\calE}+\tilde{\calI},\quad\tilde\sigma=\nnat{\tilde{\calR}_\vel}^\ast\si,
\end{align*}
where $\npair{\tilde{H}}{\tilde\si}$ is the Hamiltonian structure
associated to $\tilde{\calL}$.
\end{theorem}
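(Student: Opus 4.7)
The plan is to exploit the decomposition $\tilde{\calM}_\vel\at{\MiTime} = \tilde{\calT}_\vel\at{\MiTime} \circ \tilde{\calR}_\vel$ together with the invariance of the relevant functions under the underlying symmetry group. This way the $\MiTime$-dependence of the pull-backs drops out and only the fixed fiber-shift $\tilde{\calR}_\vel$ remains, reducing the theorem to a short list of direct computations.

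As a first step I would establish that each of $\calL$, $\calE = \calH$, and $\calI$ is preserved by the flow $\calT_\vel\at{\MiTime}$. Invariance of $\calL$ is built into the assumption that $\calT_\con\at{\MiTime}$ is a symmetry group. For $\calE\at{z} = \skp{\pi\at{z}}{x_\MiTime} - \calL\at{z}$, the weak-symmetry identity $\pi \circ \calT_\vel\at{\MiTime} = \calT_\con\at{\MiTime} \circ \pi$ from \eqref{SymmTrafo:Definition1.Eqn2} combined with unitarity of $\calT_\con\at{\MiTime}$ yields $\calE \circ \calT_\vel\at{\MiTime} = \calE$; the same argument applied to $\calI\at{z} = \skp{\pi\at{z}}{\calA_\con x}$, using in addition the commutation $\calA_\con \calT_\con\at{\MiTime} = \calT_\con\at{\MiTime} \calA_\con$ that holds because $\calA_\con$ generates the group, gives $\calI \circ \calT_\vel\at{\MiTime} = \calI$. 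Substituting these three invariances into the definitions of $\tilde{\calL}, \tilde{\calE}, \tilde{\calI}$ via the decomposition of $\tilde{\calM}_\vel\at{\MiTime}$ produces the first three claimed equalities, and in particular shows that all three quantities are independent of $\MiTime$, as the notation already suggests.

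Next I would tackle the Hamiltonian identity $\tilde{\calH} = \tilde{\calE} + \tilde{\calI}$, which is the heart of the matter. Starting from $\tilde{\calL}\pair{\tilde{x}}{\tilde{x}_\MiTime} = \calL\pair{\tilde{x}}{\tilde{x}_\MiTime - \calA_\con \tilde{x}}$ and differentiating in $\tilde{x}_\MiTime$, I obtain $\tilde{\pi}\at{\tilde{z}} = \pi\at{\tilde{\calR}_\vel \tilde{z}}$. Inserting this into the Legendre formula for $\tilde{\calH}$ and then splitting $\tilde{x}_\MiTime = \at{\tilde{x}_\MiTime - \calA_\con \tilde{x}} + \calA_\con \tilde{x}$ inside the pairing decomposes the result into $\calE\at{\tilde{\calR}_\vel \tilde{z}} + \skp{\pi\at{\tilde{\calR}_\vel \tilde{z}}}{\calA_\con \tilde{x}}$, which by the previous paragraph equals $\tilde{\calE}\at{\tilde{z}} + \tilde{\calI}\at{\tilde{z}}$. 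The transformation rule for the symplectic form is then a short chain-rule computation: since $\tilde{\pi} = \pi \circ \tilde{\calR}_\vel$ and $\tilde{\calR}_\vel$ is a fixed linear map on $TQ$, one has $D\tilde{\pi}|_{\tilde{z}} = D\pi|_{\tilde{\calR}_\vel \tilde{z}} \circ \tilde{\calR}_\vel$, and substituting this into the defining formula \eqref{GS:CanSymplForm} for $\tilde{\sigma}$, together with the fact that $\tilde{\calR}_\vel$ acts trivially on the $x$-component, gives $\tilde{\sigma} = \nnat{\tilde{\calR}_\vel}^\ast \sigma$.

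The main obstacle is conceptual rather than technical: one must recognise that a moving-frame transformation is \emph{not} the canonical lift of a point transformation on $Q$, because the fiber component is shifted by $-\calA_\con \tilde{x}$. This shift is exactly what prevents $\tilde{\calH}$ from coinciding with $\tilde{\calE}$ and instead forces the appearance of the additional Noether-integral contribution $\tilde{\calI}$, reflecting the drift of the frame. Beyond this observation the proof reduces to careful bookkeeping using definitions already developed in the paper.
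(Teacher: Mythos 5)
Your proposal is correct and follows essentially the same route as the paper: decompose $\tilde{\calM}_\vel\at{\MiTime}=\tilde{\calT}_\vel\at{\MiTime}\circ\tilde{\calR}_\vel$, use unitarity, the weak-symmetry relation \eqref{SymmTrafo:Definition1.Eqn2} and the commutation of $\calA_\con$ with the group to eliminate the time-dependent factor, and then split $\tilde{x}_\MiTime=(\tilde{x}_\MiTime-\calA_\con\tilde{x})+\calA_\con\tilde{x}$ in the Legendre transform to produce $\tilde{\calE}+\tilde{\calI}$. The only cosmetic difference is that you prove invariance of $\calL$, $\calE$, $\calI$ under $\calT_\vel\at{\MiTime}$ first and then compose, whereas the paper computes the composites directly; the identities used are identical.
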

\begin{proof}
Let $\MiTime$ be fixed, and for arbitrary $z=\pair{x}{x_\MiTime}$
let
$\tilde{z}=\pair{\tilde{x}}{\tilde{x}_\MiTime}=\calM_\vel\at\MiTime{z}$.
Due to the invariance of $\calL$ under $\calT_\vel\at{t}$ we have
\begin{math}
\tilde{\calL}=
\at{\calL\circ\calT_\vel\at{-\MiTime}}\circ\tilde{\calR}_\vel=
\calL\circ\tilde{\calR}_\vel,
\end{math}
and this implies
\begin{align*}
\tilde{\pi}\at{\tilde{z}}=\VarDer{\tilde\calL}{\tilde{x_\MiTime}}\at{\tilde{z}}=
\VarDer{}{\tilde{x_\MiTime}}\at{\calL\pair{\tilde{x}}{\tilde{x}_\MiTime-\calA_\con\tilde{x}_\MiTime}}=
\pi\pair{\tilde{x}}{\tilde{x}_\MiTime-\calA_\con\tilde{x}}
\end{align*}
so that $\tilde\pi=\pi\circ\tilde{\calR}_\vel$.
We conclude that %
\begin{math}
\mathfrak{F}{\tilde{\calL}}=
\mathfrak{F}{\calL}\circ\tilde{\calR}_\vel=
\nnat{\tilde{\calR}_\vel}^\ast\mathfrak{F}{\calL}
\end{math} %
and hence $\tilde{\sigma}=\nnat{\tilde{\calR}_\vel}^\ast{\sigma}$.
The unitarity of $\tilde{\calT}_\con\at\MiTime$, the identity
$\calA_\con\tilde{\calT}_\con\at\MiTime=\tilde{\calT}_\con\at\MiTime\calA_\con$
and Formula \eqref{SymmTrafo:Definition1.Eqn2} yield
\begin{align*}
\tilde{\calI}\at{\tilde{z}}&=\calI\nnat{\tilde{\calM}_\vel\at\MiTime\tilde{z}}=
\skp{\pi\nnat{\tilde{\calM}_\vel\at\MiTime\tilde{z}}}{\calA_\con\tilde{\calT}_\con\at\MiTime\tilde{x}}=
\skp{\pi\nnat{\tilde{\calM}_\vel\at\MiTime\tilde{z}}}{\tilde{\calT}_\con\at\MiTime\calA_\con\tilde{x}}
\\&=%
\skp{\pi\nnat{\calT_\vel\at\MiTime\tilde{\calM}_\vel\at\MiTime\tilde{z}}}{\calA_\con\tilde{x}}
=%
\skp{\pi\nnat{\tilde{\calR}_\vel\tilde{z}}}{\calA_\con\tilde{x}}=
\calI\nnat{\tilde{\calR}_\vel\tilde{z}},
\end{align*}
the desired result for $\calI$. Analogously, with
$\at{\calE+\calL}\at{z}=\skp{\pi\at{z}}{x}$ we find
\begin{align*}
\nnat{\tilde{\calE}+\tilde{\calL}}\at{\tilde{z}}&=
\nnat{\calE+\calL}\nnat{\tilde{\calM}_\vel\at\MiTime\tilde{z}}=
\skp{\pi\nnat{\tilde{\calT}_\vel\at\MiTime\tilde{\calR}_\vel\tilde{z}}}%
{\tilde{\calT}_\con\at\MiTime\nnat{\tilde{x}_\MiTime-\calA_\con\tilde{x}}}
\\&=%
\skp{\pi\nnat{\tilde{\calR}_\vel\tilde{z}}}
{{\tilde{x}_\MiTime-\calA_\con\tilde{x}}}=
\nnat{\calE+\calL}\nnat{\tilde{\calR}_\vel\tilde{z}},
\end{align*}
which implies the formula for $\calE$. Finally,
\begin{align*}
\tilde{\calH}\at{\tilde{z}}&=\skp{\tilde{\pi}\nnat{\tilde{z}}}{\tilde{x}_\MiTime}-\tilde{\calL}\nnat{\tilde{z}}=
\skp{\pi\nnat{\tilde{\calR}_\vel\tilde{z}}}{\tilde{x}_\MiTime}-\tilde{\calL}\nnat{\tilde{\calR}_\vel\tilde{z}}
\\&=
\skp{\pi\nnat{\tilde{\calR}_\vel\tilde{z}}}{\tilde{\calR}_\vel\tilde{x}_\MiTime}+
\skp{\pi\nnat{\tilde{\calR}_\vel\tilde{z}}}{\calA_\con\tilde{x}}-\tilde{\calL}\nnat{\tilde{\calR}_\vel\tilde{z}}
=\calE\nnat{\tilde{\calR}_\vel\tilde{z}}+\calI\nnat{\tilde{\calR}_\vel\tilde{z}},
\end{align*}
and the proof is finished.
\end{proof}
\begin{figure}[ht!]
\centering{
\setlength{\unitlength}{1865sp}%
\begin{picture}(7000,3000)(3000,-5600)
\thicklines
{\put(11600,-3000){\vector( -1, 0){9300}}}%
{\put(6300,-5500){\vector( -1, 0){4000}}}%
{\put(1800,-3300){\vector( 0,-1){1600}}}%
{\put(7000,-2800){\makebox(0,0)[cb]{$\tilde{\calM}_\vel\at{\MiTime}$}}}%
{\put(4300,-5300){\makebox(0,0)[cb]{$\tilde{\calM}_\vel\at{\MiTime}$}}}%
\put(12100,-2900){\makebox(0,0)[cc]{$\tilde{\calL}$}}
\put(2000,-2900){\makebox(0,0)[rc]{$\calL$}}%
\put(2000,-5400){\makebox(0,0)[rc]{$\calH=\calE$}}%
\put(7000,-5400){\makebox(0,0)[cc]{$\tilde{\calE}$}}
{\put(12100,-3300){\vector( 0,-1){1600}}}%
{\put(7600,-5500){\vector( 1, 0){4000}}}%
\put(9600,-5300){\makebox(0,0)[cb]{$\tilde{\calH}=\tilde{\calE}+\tilde{\calI}$}}
\put(11900,-4100){\makebox(0,0)[rc]{$\tilde{\calH}=
\skp{\VarDer{\tilde{\calL}}{\tilde{x}_\MiTime}}{\tilde{x}_\MiTime}-\tilde{\calL}$}}
\put(12100,-5400){\makebox(0,0)[cc]{$\tilde{\calH}$}}
\put(2000,-4100){\makebox(0,0)[lc]{$\calH=\skp{\VarDer{\calL}{x_\MiTime}}{x_\MiTime}-\calL$}}
\end{picture}
}%
\caption{ $\calL$, $\calE$, $\calH$ and their transformed
counterparts in a moving frame. }\label{MF:Fig1}
\end{figure}
The results of Theorem \ref{MF:Lemma2} can be reinterpreted as the
\emph{transformation rule} for Hamiltonian structures, see Figure
\ref{MF:Fig1}. In fact, $\tilde{\si}$ equals the pull-back of $\si$,
and to obtain $\tilde{\calH}$ we pull back the sum of the
Hamiltonian $\calE=\calH$ and conserved quantity $\calI$. As a
consequence we gain the following result.
\begin{corollary}%
The following equivalences are satisfied.
\begin{enumerate}
\item%
A curve $\MiTime\mapsto{x}\at\MiTime\in{Q}$ solves the Lagrangian
equation to $\calL$ if and only if the transformed curve
$\MiTime\mapsto\tilde{x}\at\MiTime=\calM_\con\at\MiTime{x}\at\MiTime\in{Q}$
solves the Lagrangian equation to $\tilde{\calL}$.
\item%
A curve $\MiTime\mapsto{z}\at\MiTime\in{TQ}$ solves the Hamiltonian
equation to $\pair{\calH}{\si}$ if and only if the transformed curve
$\MiTime\mapsto\tilde{z}\at\MiTime=\calM_\vel\at\MiTime{z}\at\MiTime\in{TQ}$
solves the Hamiltonian equation to
$\npair{\tilde{\calH}}{\tilde{\si}}$.
\end{enumerate}
\end{corollary}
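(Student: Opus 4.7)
The plan is to derive part (ii) from part (i) together with the equivalence between Lagrangian and Hamiltonian formulations on $TQ$ (recalled in \S\ref{sec:HamStruct} and attributed there to \cite{AM78}). Part (i) itself is a direct consequence of the time-invariance of the reduced Lagrangian $\tilde\calL$ established in Theorem \ref{MF:Lemma2}.

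For part (i), I would start from the factorization $\calM_\vel\at\MiTime=\calR_\vel\circ\calT_\vel\at\MiTime$ and the invariance $\calL\circ\calT_\vel\at\MiTime=\calL$. Given a curve $\MiTime\mapsto{x}\at\MiTime$ and its transform $\tilde{x}\at\MiTime=\calT_\con\at\MiTime x\at\MiTime$, use $\calA_\con\calT_\con\at\MiTime=\calT_\con\at\MiTime\calA_\con$ to compute
\begin{align*}
\tilde{x}_\MiTime\at\MiTime-\calA_\con\tilde{x}\at\MiTime=\calT_\con\at\MiTime x_\MiTime\at\MiTime,
\end{align*}
so that by Theorem \ref{MF:Lemma2},
\begin{align*}
\tilde\calL\pair{\tilde{x}\at\MiTime}{\tilde{x}_\MiTime\at\MiTime}
=\calL\nnat{\tilde\calR_\vel\pair{\tilde{x}\at\MiTime}{\tilde{x}_\MiTime\at\MiTime}}
=\calL\nnat{\calT_\vel\at\MiTime\pair{x\at\MiTime}{x_\MiTime\at\MiTime}}
=\calL\pair{x\at\MiTime}{x_\MiTime\at\MiTime}.
\end{align*}
Consequently the two action integrals agree on corresponding paths. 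Because $\calT_\con\at\MiTime$ is a linear isomorphism for every fixed $\MiTime$, variations $\delta{x}$ with vanishing endpoints are in bijection with variations $\delta\tilde{x}=\calT_\con\at\MiTime\,\delta{x}$ with vanishing endpoints, so critical points of the action correspond bijectively; this yields the equivalence of the Euler--Lagrange equations.

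For part (ii), I would simply chain equivalences. Applying the Lagrangian/Hamiltonian equivalence of \S\ref{sec:HamStruct} to $\calL$ gives that $\MiTime\mapsto{z}\at\MiTime\in{TQ}$ solves \eqref{GS:HamEqns} if and only if $x\at\MiTime$ solves the Euler--Lagrange equation for $\calL$ and $z\at\MiTime=\pair{x\at\MiTime}{x_\MiTime\at\MiTime}$. By part (i), this is equivalent to $\tilde{x}\at\MiTime=\calM_\con\at\MiTime x\at\MiTime$ satisfying the Euler--Lagrange equation for $\tilde\calL$; and since Theorem \ref{MF:Lemma2} exhibits $\npair{\tilde\calH}{\tilde\sigma}$ as precisely the Hamiltonian structure on $TQ$ associated with $\tilde\calL$, the same general equivalence, now applied on the transformed side, says that this in turn amounts to $\pair{\tilde{x}\at\MiTime}{\tilde{x}_\MiTime\at\MiTime}$ solving the Hamiltonian equation for $\npair{\tilde\calH}{\tilde\sigma}$. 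It only remains to observe that this latter curve equals $\calM_\vel\at\MiTime{z}\at\MiTime$, which is the very definition \eqref{MFTrafo:LiftedMap.Eqn1} of the lift.

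The only delicate point is conceptual rather than technical: because $\calM_\vel\at\MiTime$ is time-dependent, one might worry that a direct transformation of the Hamilton equation produces extra $\partial_\MiTime\calM_\vel$ contributions not visible in the statement. The plan sidesteps this by routing through the (time-independent) transformed Lagrangian $\tilde\calL$, letting Theorem \ref{MF:Lemma2} absorb the time-derivative corrections into the shift $\calH\mapsto\calE+\calI$ of the Hamiltonian and the pull-back $\sigma\mapsto\tilde\calR_\vel^{\,\ast}\sigma$ of the symplectic form. No direct manipulation of $\sigma|_z\pair{z_\MiTime}{\cdot}=\d\calH|_z$ under the time-dependent pull-back is then required.
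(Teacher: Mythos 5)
Your proposal is correct, but it inverts the logical order of the paper's argument and replaces its key computation by a different one, so it is worth comparing the two. The paper proves the Hamiltonian equivalence (ii) \emph{directly}: it differentiates $\hat{z}\at\MiTime=\calT_\vel\at\MiTime z\at\MiTime$, which produces the extra term $\calA_\vel z$, and then uses Lemma \ref{SymmTrafo:Lemma1}, $\si|_z\pair{\calA_\vel z}{\cdot}=\d\calI|_z\at{\cdot}$, to show that this term exactly accounts for the shift $\tilde\calH=\tilde\calE+\tilde\calI$ from Theorem \ref{MF:Lemma2}; statement (i) is then obtained for free from the equivalence of Lagrangian and Hamiltonian formulations on $TQ$. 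You go the other way: you prove (i) directly from the identity $\tilde\calL\pair{\tilde x}{\tilde x_\MiTime}=\calL\circ\calT_\vel\at\MiTime\pair{x}{x_\MiTime}=\calL\pair{x}{x_\MiTime}$ (equivalently, from $\tilde\calL\circ\calM_\vel\at\MiTime=\calL$) and the bijection of fixed-endpoint variations, and then deduce (ii) by invoking the Lagrangian--Hamiltonian equivalence of \cite{AM78} on both sides. Your route is more elementary — it never touches the symplectic form, Lemma \ref{SymmTrafo:Lemma1}, or the decomposition $\tilde\calH=\tilde\calE+\tilde\calI$ — and it makes the invariance of the action the visible mechanism. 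What it costs is that the step from (i) to (ii) silently uses the \emph{second-order} property of \eqref{GS:HamEqns}, i.e.\ that every solution curve $z\at\MiTime\in TQ$ of the Hamiltonian equation is the lift $\pair{x}{x_\MiTime}$ of a base curve; the paper's direct symplectic computation applies to arbitrary curves in $TQ$ and does not need this. Since both proofs already rely on the \cite{AM78} equivalence for one implication (the paper for (ii)$\Rightarrow$(i), you for (i)$\Rightarrow$(ii)), and the paper assumes $\mathfrak{F}\calL$ differentiable throughout, this is a difference of emphasis rather than a gap; your argument is acceptable within the paper's level of rigour.
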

\begin{proof}
Since $\npair{\tilde{\calH}}{\tilde{\si}}$ is the Hamiltonian
structure associated to $\tilde{\calL}$ it is sufficient to prove
the equivalence in the Hamiltonian framework. Let
$\hat{z}\at\MiTime=\tilde{\calR}_\vel\tilde{z}\at\MiTime$ such that
$z\at\MiTime=\tilde{\calM}_\vel\at\MiTime\tilde{z}\at\MiTime$
satisfies $\hat{z}\at\MiTime=\calT_\vel\at\MiTime{z}\at\MiTime$. Now
suppose that $\MiTime\mapsto\tilde{z}\at\MiTime$ solves the
Hamiltonian equation to $\npair{\tilde{\calH}}{\tilde\si}$. This
means

\begin{math}
\tilde{\si}|_{\hat{z}\at{\MiTime_0}}\pair{\tilde{z}_\MiTime\at\MiTime_0}{\cdot}=%
\d\tilde{\calH}|_{\tilde{z}\at{\MiTime_0}}\at{\cdot}%
\end{math}
for arbitrary but fixed $\MiTime_0$, and Theorem \ref{MF:Lemma2}
provides
\begin{align*}
\si|_{\hat{z}\at{\MiTime_0}}\pair{\hat{z}_\MiTime\at{\MiTime_0}}{\cdot}=%
\d\calE|_{\hat{z}\at{\MiTime_0}}\at{\cdot}+ %
\d\calI|_{\hat{z}\at{\MiTime_0}}\at{\cdot}.%
\end{align*}
By construction we have
$\hat{z}_\MiTime\at{\MiTime_0}=\calT_\vel\at{\MiTime_0}{z_\MiTime\at{\MiTime_0}}
+\calA_\vel{z}\at{\MiTime_0}$, and exploiting Lemma
\ref{SymmTrafo:Lemma1} we find
\begin{align*}
\si|_{\calT_\vel\at{\MiTime_0}{z}\at{\MiTime_0}}\pair{\calT_\vel\at{\MiTime_0}{z_\MiTime\at{\MiTime_0}}}{\cdot}=%
\d\calE|_{\calT_\vel\at{\MiTime_0}{z}\at{\MiTime_0}}\at{\cdot}=\d\calH|_{\calT_\vel\at{\MiTime_0}{z}\at{\MiTime_0}}\at{\cdot},%
\end{align*}
and the invariance of $\calL$, $\calH$ and $\sigma$ under
$\calT_\vel\at{\MiTime_0}$ (cf.\ Remark \ref{SymmTrafo.Remark1})
shows that $\MiTime\mapsto{z}\at\MiTime$ solves the Hamiltonian
equation to $\npair{\calH}{\si}$. Finally, in order to establish the
equivalence we argue in the reverse direction.
\end{proof}
%
%
%
%
%
%
\subsubsection{Scaling transformations}\label{sec:ScalTrafo}
%
The two-scale problems considered in \S\ref{sec:ChainNew} involve
suitable scalings of space and time variables. We always suppose
that there exist positive constants $\be$ and $\ga$ such that
$\MaTime=\eps^\be\MiTime$ and $\MaLagr=\eps^\ga\MiLagrC$. In
particular, $\tfrac{\d\MaTime}{\d\MiTime}=\eps^\beta$ and
$\tfrac{\d\MaLagr}{\d\MiLagrC}=\eps^\gamma$ are the scaling
constants for time and space, respectively.
\bigpar%
The spatial scaling can be encoded in a linear and invertible
\emph{scaling transformation} $\calS_\con:Q\rra{P}$ that maps the
microscopic configuration space ${Q}$ into $P$, the space of all
macroscopic configurations. In what follows $P$ is always a Hilbert
space, usually some $L^2$--space, with inner product
$\sskp{\cdot}{\cdot}$ and $\bbS_\con:P\rra{}Q$ is the inverse to
$\calS_\con$. The elements of $P$ are denoted by $X$ and are
functions of the macroscopic space variable $\MaLagr$.
\begin{definition}
A \emph{scaling transformation} is a scaled isometry
$\calS_\con:Q\rra{P}$, i.e.,
\begin{align*}
\skp{x}{\tilde{x}}=
\eps^\mu\sskp{\calS_\con\,x}{\calS_\con\,\tilde{x}}
\end{align*}
for some exponent $\mu$ and all $x,\,\tilde{x}\in{Q}$.
\end{definition}
Notice that $P$ does not depend on the scaling parameter $\eps$,
whereas the transformations $\calS_\con$ and $\bbS_\con$ as well as
the scaled Lagrangian and Hamiltonian structures will strongly
depend on $\eps$. Nevertheless, for the moment $\eps$ is an
arbitrary but fixed parameter and hence we do not denote explicitly
the dependence on $\eps$.
\begin{example}
\label{ST:Example1}
Let $Q=L^2\at{\Rset;\,\d\MiLagrC}$ and
$P=L^2\at{\Rset;\,\d\MaLagr}$, and consider the two-scale ansatz
$x\pair{\MiTime}{\MiLagrC}=\eps^\alpha{X}\pair{\eps^\beta\MiTime}{\eps^\ga\MiLagrC}$.
In this case we have
\begin{align*}
\bat{\calS_\con\,x}\at{\MaLagr}=\eps^{-\al}{x}\at{\eps^{-\ga}\MaLagr},\quad
\bat{\bbS_\con\,X}\at{\MiLagrC}=\eps^\al{X}\at{\eps^\ga\MiLagrC},
\end{align*}
providing
\begin{align*}
\skp{x}{\tilde{x}}=\int\mhintlimits_\Rset{x}\at{\MiLagrC}\tilde{x}\at{\MiLagrC}\,\d\MiLagrC=
\eps^{2\alpha}\int\mhintlimits_\Rset{X}\at{\eps^\gamma\MiLagrC}\tilde{X}\at{\eps^\gamma\MiLagrC}\,\d\MiLagrC=
\eps^{2\alpha-\gamma}\sskp{X}{\tilde{X}}.
\end{align*}
\end{example}
The transformation $\calS_\con:Q\rra{}P$ does not take into account
the time scaling, since this is related to \emph{reparametrization
of curves} as follows: Let $\MiTime\mapsto{x}\at\MiTime$ be any
curve in $Q$ with tangent vectors
$\MiTime\mapsto{x}_\MiTime\at\MiTime$, and let
$\MiTime\mapsto\tilde{X}\at\MiTime=\calS_\con\,x\at\MiTime$ be the
transformed curve in $P$, which has tangent vectors
\begin{math}%
\MiTime\mapsto\tilde{X}_\MiTime\at\MiTime=
\tfrac{\d}{\d\MiTime}\tilde{X}\at\MiTime=
\calS_\con\,x_\MiTime\at\MiTime.
\end{math} %
In view of the time scaling we are not interested in
$\tilde{X}\at\MiTime$, but refer to the reparametrized curve
\begin{align*}
\MaTime\mapsto{X}\at{\MaTime}= \tilde{X}\at{\MiTime\at{\MaTime}}=
\calS_\con\,x\at{\MiTime\at{\MaTime}}
\end{align*}
with rescaled tangent vectors
\begin{align}
\label{ST:TransTangential} %
\MaTime\mapsto{}X_\MaTime\at\MaTime=
\tfrac{\d}{\d\MaTime}X\at\MaTime= \tfrac{\d\MiTime}{\d\MaTime}\,
\calS_\con\,x_\MiTime\at{\MiTime\at{\MaTime}}.
\end{align}
For this reason we denote elements of $TP\cong{P}{\times}{P}$ by
$\pair{X}{X_\MaTime}$ instead of $\pair{X}{X_\MiTime}$. Moreover, we
must take into account this reparametrization when defining
$\calS_\vel$, i.e.\ the lift of $\calS_\con$ to a map $TQ\rra{TP}$.
In fact, using \eqref{ST:TransTangential} we find
\begin{align*}
\calS_\vel:\pair{x}{x_\MiTime}\mapsto\pair{X}{X_\MaTime}=
\pair{\calS_\con\,x}{\eps^{-\beta} \calS_\con\,x_\MiTime}.
\end{align*}
\begin{example}
\label{ST:Example2} Using the notations from Example
\ref{ST:Example1} we obtain
\begin{align*}
\bat{\calS_\vel\pair{x}{x_\MiTime}}\at{\MaLagr}=
\pair{\eps^{-\alpha}{x}}{\eps^{-\alpha-\beta}x_\MiTime}\at{\eps^{-\ga}\MaLagr},\quad
\bat{\bbS_\vel\pair{X}{X_\MaTime}}\at\MiLagrC=
\pair{\eps^{\alpha}{X}}{\eps^{\alpha+\beta}X_\MaTime}\at{\eps^\ga\MiLagrC}.
\end{align*}
\end{example}
Following the proof of Theorem \ref{LT:Theo}, we derive the
transformation rules for the Lagrangian and Hamiltonian structures.
To this end, let $\bbL=\calL\circ\bbS_\vel$ be the transformed
Lagrangian and $\pair{\bbH}{\bssi}$ the associated Hamiltonian
structure on $TP$, i.e., $\bbH$ is the Legendre transform of $\bbL$
and $\bssi=\nnat{\mathfrak{F}{\bbL}}^\ast\mhol{\bssi}_\can$, where
$\mhol{\bssi}_\can$ is the canonical symplectic structure on
$T^\ast{P}$.
\begin{theorem}
\label{ST:Theo1} We have $\bbH=\calH\circ\bbS_\vel$ and
$\bssi=\tfrac{\d\MaTime}{\d\MiTime}\hat{\bssi}$, where
$\hat{\bssi}=\nnat{\bbS_\vel}^\ast{}\sigma$.
\end{theorem}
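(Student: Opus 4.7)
The plan is to adapt the argument of Theorem \ref{LT:Theo} to the present setting, carefully tracking where the time-scaling exponent $\beta$ enters through the \emph{non-canonical} lift $\bbS_\vel\pair{X}{X_\MaTime}=\pair{\bbS_\con{X}}{\eps^\beta\bbS_\con{X_\MaTime}}$. First I would compute the macroscopic momentum $\Pi=\VarDer{\bbL}{X_\MaTime}$ by the chain rule: for any test direction $Y\in{P}$ one finds
\[
\sskp{\Pi\at{Z}}{Y}=\skp{\pi\nnat{\bbS_\vel{Z}}}{\eps^\beta\bbS_\con{Y}},
\]
which is the analogue of \eqref{LT:Theo.Eqn1} but now carries the extra factor $\eps^\beta$ coming from the velocity slot of $\bbS_\vel$.

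For the Hamiltonian identity, I would substitute $Y=X_\MaTime$ in the above display and observe that the second component of $\bbS_\vel{Z}$ is exactly $x_\MiTime=\eps^\beta\bbS_\con{X_\MaTime}$. The spurious $\eps^\beta$ is therefore absorbed into the argument of $\pi$, giving $\sskp{\Pi\at{Z}}{X_\MaTime}=\skp{\pi\nnat{\bbS_\vel{Z}}}{x_\MiTime}$. Combined with $\bbL\at{Z}=\calL\nnat{\bbS_\vel{Z}}$ this yields the Legendre identity $\bbH\at{Z}=\calH\nnat{\bbS_\vel{Z}}$ at once, in direct analogy with the corresponding step in Theorem \ref{LT:Theo}.

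For the symplectic form I would linearize the formula for $\Pi$ in the base point $Z$ to obtain
\[
\sskp{D\Pi|_Z\at{\dot{Z}}}{Y}=\eps^\beta\skp{D\pi|_{\bbS_\vel{Z}}\nnat{\bbS_\vel\dot{Z}}}{\bbS_\con{Y}},
\]
and then plug $Y=\acute{X}$ and its swap into the tangent-bundle formula \eqref{GS:CanSymplForm} applied to $\bbL$. Writing $\bbS_\con\acute{X}=\acute{x}$ and $\bbS_\con\dot{X}=\dot{x}$ for the first components of $\bbS_\vel\acute{Z}$ and $\bbS_\vel\dot{Z}$, the resulting expression collapses to $\eps^\beta\,\si|_{\bbS_\vel{Z}}\pair{\bbS_\vel\dot{Z}}{\bbS_\vel\acute{Z}}=\eps^\beta\,\hat{\bssi}|_Z\pair{\dot{Z}}{\acute{Z}}$, and since $\eps^\beta=\tfrac{\d\MaTime}{\d\MiTime}$ the claim follows.

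The only subtle bookkeeping concerns the scaled-isometry constant $\mu$. It would enter if I converted between the $P$- and $Q$-inner products via the adjoint $\nnat{\bbS_\con}^\prime=\eps^\mu\calS_\con$; however, in both computations above I can always pair $\pi$ against $\bbS_\con{Y}$ \emph{directly inside} $Q$, so $\mu$ never appears. Thus $\eps^\beta$ is the only scaling exponent surviving in $\bssi$, which is precisely the mechanism by which the tangent-bundle framework records the time scaling intrinsically---the main conceptual point this theorem is designed to highlight.
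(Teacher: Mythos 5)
Your proposal is correct and follows essentially the same route as the paper's proof: compute $\Pi$ by the chain rule picking up the factor $\eps^\beta=\tfrac{\d\MaTime}{\d\MiTime}$ from the velocity slot of $\bbS_\vel$, absorb it into $x_\MiTime$ for the Legendre identity, and carry it out front when linearizing $\Pi$ and inserting into \eqref{GS:CanSymplForm}. Your closing observation that $\mu$ never enters because all pairings stay inside $Q$ is also consistent with how the paper's computation is organized.
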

\begin{proof}
The definition of $\bbL$ implies
\begin{math}
\sskp{\Pi\nnat{Z}}{\cdot}=
\skp{\pi\nnat{\bbS_\vel{Z}}}{\tfrac{\d\MaTime}{\d\MiTime}\bbS_\con\cdot},
\end{math}
where $Z=\pair{X}{X_\MaTime}$ and $\Pi=\VarDer{\bbL}{X_\MaTime}$. We
conclude that
\begin{align*}
\bbH\at{Z}=\sskp{\Pi\at{Z}}{X_\MaTime}-\bbL\at{Z}=
\skp{\pi\nnat{\bbS_\vel{Z}}}{\tfrac{\d\MaTime}{\d\MiTime}\bbS_\con{Z}_\MaTime}-\calL\at{\bbS_\vel\at{Z}}=
\calH\nnat{\bbS_\vel{Z}}
\end{align*}
and
\begin{math}
\sskp{D\Pi|_{Z}\nnat{\dot{Z}}}{\cdot}= \skp{D\pi|_{\bbS_\vel{Z}}
\nnat{\bbS_\vel\dot{Z}}}{\tfrac{\d\MiTime}{\d\MaTime}\bbS_\con\cdot}
\end{math}
for all $\dot{Z}\in\tspace{Z}{TP}$. Inserting this identity into the
definition of $\bssi$, compare \eqref{GS:CanSymplForm}, we obtain
\begin{align*}
\bssi|_Z\npair{\dot{Z}}{\acute{Z}}&=
\sskp{D\Pi|_{Z}\nnat{\acute{Z}}}{\dot{X}}-
\sskp{D\Pi|_{Z}\nnat{\dot{Z}}}{\acute{Z}}
\\&=
\skp{D\pi|_{\bbS_\vel{Z}}
\nnat{\bbS_\vel\acute{Z}}}{\tfrac{\d\MaTime}{\d\MiTime}\bbS_\con\dot{X}}
-\skp{D\pi|_{\bbS_\vel{Z}}
\nnat{\bbS_\vel\dot{Z}}}{\tfrac{\d\MaTime}{\d\MiTime}\bbS_\con\acute{X}}
\\&=
\tfrac{\d\MaTime}{\d\MiTime}\skp{D\pi|_{\bbS_\vel{Z}}
\nnat{\bbS_\vel\acute{Z}}}{\bbS_\con\dot{X}}
-\tfrac{\d\MaTime}{\d\MiTime}\skp{D\pi|_{\bbS_\vel{Z}}
\nnat{\bbS_\vel\dot{Z}}}{\bbS_\con\acute{X}}
\\&=
\tfrac{\d\MaTime}{\d\MiTime}\si|_{\bbS_\vel{Z}}\pair{\bbS_\vel\dot{Z}}{\bbS_\vel\acute{Z}}=
\tfrac{\d\MaTime}{\d\MiTime}\hat{\bssi}|_Z\pair{\dot{Z}}{\acute{Z}},
\end{align*}
which is the desired result for $\bssi$.
\end{proof}
The additional scaling parameter in the formula for $\bssi$ appears
naturally due to the reparametrization of curves. More precisely,
the microscopic Hamiltonian equation is equivalent to
\begin{align*}
\hat{\bssi}|_Z\pair{\tfrac{\d}{\d\MiTime}Z}{\cdot}=\d\bbH|_z\at{\cdot},
\end{align*}
but since here the solution still depends on $\MiTime$ we
reparametrize via
$\tfrac{\d}{\d\MiTime}=\tfrac{\d\MaTime}{\d\MiTime}\tfrac{\d}{\d\MaTime}$.
%
\begin{example}
\label{Ex:TimeScaling}
Let $\calL$ be a normal Lagrangian, cf.\
\eqref{TypEx:Normal.Lagrangian}, and consider a simple \emph{time
scaling} $\MiTime\rightsquigarrow\MaTime=\eps\MiTime$ with $P=Q$ and
the two-scale ansatz $x\at{\MiTime}=X\at{\eps\MiTime}$. Then
\begin{align*}
\bbS_\vel=
\begin{pmatrix}
\scriptstyle1&\scriptstyle0\\\scriptstyle0&\scriptstyle{\eps}%
\end{pmatrix},\quad
\bbL\pair{X}{X_\MaTime}=\tfrac{1}{2}\eps^2\skp{X_\MaTime}{M{}X_\MaTime}-\calV\at{X}
\quad \text{with}\quad \Pi\pair{X}{X_\MaTime}=\eps^2{M}X_\MaTime,
\end{align*}
and a simple calculation yields
\begin{align*}
\bbH\pair{X}{X_\MaTime}=\tfrac{1}{2}\eps^2\skp{X_\MaTime}{M{}X_\MaTime}+\calV\at{X}
,\quad
\bssi=\nnat{\mathfrak{F}\bbL}^\ast\mhol{\bssi}_\can\simeq\eps^2{M}\begin{pmatrix}
\scriptstyle0&-\scriptstyle1\\\scriptstyle1&\scriptstyle0%
\end{pmatrix}.
\end{align*}
The pull-back of $\sigma$ via $\bbS_\vel$ is given by
\begin{align*}
\hat\bssi=\at{\bbS_\vel}^\ast\sigma\simeq\bbS_\vel^\trans
\begin{pmatrix}
\scriptstyle0&\scriptstyle-1\\\scriptstyle1&\scriptstyle0%
\end{pmatrix}\bbS_\vel
= \eps
\begin{pmatrix}
\scriptstyle0&\scriptstyle-1\\\scriptstyle1&\scriptstyle0%
\end{pmatrix},
\end{align*}
and differs from $\bssi$ by the factor $\eps^{-1}$.
%
%
%
%
\end{example}
In what follows we refer to $\bbL$ and $\npair{\bbH}{\bssi}$ as the
\emph{macroscopic} Lagrangian and Hamiltonian structures, but we
recall that microscopic and macroscopic structures are completely
equivalent as long as $\eps$ is a fixed but positive parameter.
Consequently, we find the following transformation rules for
solutions.
\begin{corollary}
The following equivalences are satisfied.
\begin{enumerate}
\item
A curve $\MiTime\mapsto{x}\at\MiTime\in{Q}$ solves the microscopic
Lagrangian equation to $\calL$ if and only if the transformed and
reparametrized curve
$\MaTime\mapsto{X}\at\MaTime=\calS_\con\,x\at{\MiTime\at{\MaTime}}\in{P}$
solves the macroscopic Lagrangian equation to $\bbL$.
\item
A curve $\MiTime\mapsto{z}\at\MiTime\in{TQ}$ solves the microscopic
Hamiltonian system to $\pair{\calH}{\si}$ if and only if the
transformed and reparametrized curve
$\MaTime\mapsto{Z}\at\MaTime=\calS_\vel\,z\at{\MiTime\at{\MaTime}}\in{TP}$
solves the macroscopic Hamiltonian equation to $\pair{\bbH}{\bssi}$.
\end{enumerate}
\end{corollary}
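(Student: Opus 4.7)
My plan is to reduce both equivalences to two ingredients that are already available: the general equivalence \eqref{GS:HamEqns} between Lagrangian and (tangent-bundle) Hamiltonian equations, and the covariance of Euler--Lagrange equations under simultaneous changes of configuration variable and time parametrization. The argument will follow the pattern of the analogous corollaries for Theorems \ref{LT:Theo} and \ref{MF:Lemma2}, but with an additional bookkeeping step that accounts for the time reparametrization $\MaTime=\eps^\beta\MiTime$.

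For part (i), I would fix $\eps>0$ so that $\calS_\con$ is a linear isomorphism and $\MiTime\mapsto\MaTime$ is a $C^\infty$ diffeomorphism of $\Rset$. Using the identity $\bbS_\vel\pair{X\at\MaTime}{X_\MaTime\at\MaTime}=\pair{x\at\MiTime}{x_\MiTime\at\MiTime}$ (which is precisely the content of the definitions of $\calS_\con$ and $\calS_\vel$, combined with the chain rule $X_\MaTime=\tfrac{\d\MiTime}{\d\MaTime}\calS_\con{x_\MiTime}$), together with $\bbL=\calL\circ\bbS_\vel$ and the substitution rule, the two action integrals are related by
\begin{align*}
\int_{\MaTime_0}^{\MaTime_1}\bbL\pair{X\at\MaTime}{X_\MaTime\at\MaTime}\,\d\MaTime
=\int_{\MiTime_0}^{\MiTime_1}\calL\pair{x\at\MiTime}{x_\MiTime\at\MiTime}\,\tfrac{\d\MaTime}{\d\MiTime}\,\d\MiTime
=\eps^{\beta}\int_{\MiTime_0}^{\MiTime_1}\calL\pair{x\at\MiTime}{x_\MiTime\at\MiTime}\,\d\MiTime.
\end{align*}
Since the two action functionals differ only by the positive constant factor $\eps^\beta$ and are related by an invertible affine change of curve-space, they have exactly the same critical points, which yields the equivalence of the two Euler--Lagrange equations.

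For part (ii), I would invoke \eqref{GS:HamEqns} on both scales. On the microscopic side it gives that $\MiTime\mapsto{z}\at\MiTime$ satisfies $\sigma|_{z}\pair{z_\MiTime}{\cdot}=\d\calH|_{z}\at{\cdot}$ if and only if $\MiTime\mapsto{x}\at\MiTime$ solves the Lagrangian equation to $\calL$. On the macroscopic side, Theorem \ref{ST:Theo1} identifies $\pair{\bbH}{\bssi}$ as the Hamiltonian structure \emph{associated to} the transformed Lagrangian $\bbL=\calL\circ\bbS_\vel$, so \eqref{GS:HamEqns} applies again and tells us that $\MaTime\mapsto{Z}\at\MaTime$ satisfies $\bssi|_{Z}\pair{Z_\MaTime}{\cdot}=\d\bbH|_{Z}\at{\cdot}$ if and only if $\MaTime\mapsto{X}\at\MaTime$ solves the Lagrangian equation to $\bbL$. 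Combining these two equivalences with part (i) closes the argument.

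The one step that requires care — and that I expect to be the main subtlety — is the factor $\tfrac{\d\MaTime}{\d\MiTime}$ appearing in $\bssi=\tfrac{\d\MaTime}{\d\MiTime}\hat{\bssi}$. I would make explicit that this is precisely the factor needed so that, when expressing $Z_\MaTime=\tfrac{\d\MiTime}{\d\MaTime}\calS_\vel z_\MiTime$ in the $\MaTime$-parametrized Hamiltonian equation, one recovers the $\MiTime$-parametrized equation on $TQ$ after the pull-back by $\bbS_\vel$. This is exactly the reparametrization identity $\tfrac{\d}{\d\MiTime}=\tfrac{\d\MaTime}{\d\MiTime}\tfrac{\d}{\d\MaTime}$ already highlighted after Theorem \ref{ST:Theo1}, and routing the Hamiltonian equivalence through the Lagrangian one via \eqref{GS:HamEqns} (rather than checking it by a direct symplectic computation) keeps this bookkeeping from becoming an obstacle.
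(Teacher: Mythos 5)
Your argument is correct. The paper in fact prints this corollary without a proof, treating it as an immediate consequence of Theorem \ref{ST:Theo1} and of the remark that micro- and macroscopic structures are fully equivalent for fixed $\eps>0$; your write-up supplies exactly the details that are being suppressed. Part (i) via the identity of the two action functionals up to the constant factor $\eps^\beta$ under the linear bijection of curve spaces is the natural route, and part (ii) correctly exploits that $\pair{\bbH}{\bssi}$ is \emph{by definition} the tangent-bundle Hamiltonian structure associated to $\bbL$, so that \eqref{GS:HamEqns} applies on both scales and the Hamiltonian equivalence reduces to the Lagrangian one. The only stylistic difference from the paper's analogous corollary (the one following Theorem \ref{MF:Lemma2}) is the direction of this reduction: there the authors argue that it suffices to prove the equivalence in the Hamiltonian framework and let the Lagrangian statement follow, whereas you prove the Lagrangian statement first; both directions are legitimate given the general equivalence of \eqref{GS:HamEqns}. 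One minor point you could make explicit is that a solution of the Hamiltonian system on $TQ$ (respectively $TP$) is automatically the lift $\pair{x}{x_\MiTime}$ of its base curve, so that applying \eqref{GS:HamEqns} to an arbitrary curve $z\at\MiTime\in TQ$ is justified; this is standard for regular Lagrangians and is glossed over by the paper as well.
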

%
%
%
\subsubsection{Two-scale transformations}\label{sec:TrafoSumm}
%
Since space-time scalings depend on the parameter $\eps$, from now
on we denote a scaling transformation and its inverse by
$\calS_\con\at\eps$ and $\bbS_\con\at\eps$, respectively.
Consequently, both the macroscopic Lagrangian and Hamiltonian
structures will depend on $\eps$, and thus we write
$\bbL=\bbL\at\eps$, $\bbH=\bbH\at\eps$ and $\bssi=\bssi\at\eps$.
However, we always choose the macroscopic configuration space $P$ as
independent of $\eps$.
\bigpar
The \emph{two-scale transformations} considered in
\S\ref{sec:ChainNew} are compositions of a scaling transformation
$\calS_\con\at\eps:Q\rra{P}$, a moving-frame transformation
$\calM_\con\at\MiTime:Q\rra{Q}$ and a symmetry transformation
$\calT_\con:Q\rra{Q}$. More precisely, a general exact two-scale
transformation $\calT_\con\pair{\eps}{\MaTime}:Q\rra{P}$ and its
inverse $\bbT_\con\pair{\eps}{\MiTime}:P\rra{}Q$ are given by
\begin{align*}
\calT_\con\pair{\eps}{\MaTime}= \calS_\con\at\eps
\circ\calM_\con\bat{\MiTime\pair{\eps}\MaTime} \circ\calT_\con,\quad
\bbT_\con\pair{\eps}{\MiTime}= \calT_\con^{-1}
\circ\calM_\con\at{-\MiTime}\circ\bbS_\con\at\eps.
\end{align*}
For convenience we parametrize forward and backward transformations
by $\MaTime$ and $\MiTime$, respectively, i.e.,
\begin{align*}
\calT_\con\pair{\eps}{\MaTime\pair{\eps}{\MiTime}}
\circ\bbT_\con\pair{\eps}{\MiTime}=\mathrm{Id}_{P\to{P}},\quad
\bbT_\con\pair{\eps}{\MiTime\pair{\eps}{\MaTime}}
\circ\calT_\con\pair{\eps}{\MaTime}=\mathrm{Id}_{Q\to{Q}}.
\end{align*}
Moreover, the lifted transformations are given by
\begin{align*}
\calT_\vel\pair{\eps}{\MaTime}= \calS_\vel\at\eps\circ\calM_\vel
\bat{\MiTime\pair{\eps}\MaTime} \circ\calT_\vel,\quad
\bbT_\vel\pair{\eps}{\MiTime}=
\calT_\vel^{-1}\circ\calM_\vel\at{-\MiTime}\circ\bbS_\vel\at\eps.
\end{align*}
%
%

\begin{example}\label{ST:kdv.Example}%
The KdV reduction relies on the scaling $\MaTime=\eps^3\MiTime$,
$\MaLagr=\eps\MiLagrC$ and the two-scale ansatz
\begin{align*}
x\pair{\MiTime}{\MiLagrC}=\eps{X}\pair{\eps^3\MiTime}
{\eps\MiLagrC+\eps{}c\MiTime},
\end{align*} %
where $x\in{Q}=L^2\at{\Rset;\,\d\MiLagrC}$ and
$X\in{P}=L^2\at{\Rset;\,\d\MaLagr}$. From this ansatz we can
read-off directly the inverse two-scale transformation
\begin{math}
\bbT_\con\pair{\eps}{\MiTime}=
\calM_\con\at{-\MiTime}\circ\bbS_\con\at\eps,
\end{math}
which consist of the inverse scaling transformation
\begin{math}
\at{\bbS_\con\at\eps\,X}\at{\MiLagrC}=\eps{X}\at{\eps\MiLagrC}
\end{math} %
and the inverse of the moving-frame transformation
%
\begin{align*}
\at{\calM_\con\at{\MiTime}\,x}\at\MiLagrC=
x\at{\MiLagrC-c\,\MiTime}.
\end{align*}
Moreover, %
\begin{math}
\calI\pair{x}{x_\MiTime}=-c\,\calI_{\mathrm{space}}\pair{x}{x_\MiTime}=-c
\int_{\Rset}x_\MiTime\at\MiLagrC\,x_\MiLagrC\at\MiLagrC\,\d\MiLagrC
\end{math} %
is the integral of motion associated to $\calM_\con\at{\MiTime}$.
\end{example}%
For later purposes we prove two auxiliary results. The first lemma
describes how to restrict Lagrangian and Hamiltonian structures to
subspaces of $P$, and the second one allows us to compute $\bsSi$
from $\Si$, the matrix-valued maps corresponding to $\bssi$ and
$\si$, respectively.
\begin{lemma}
\label{TST:Lemma.SubSpaces}%
Let $\tilde{P}\subset{P}$ be a closed subspace of $P$, embedded via
a linear and continuous operator
$\tilde{\bbJ}_\con:\tilde{P}\hookrightarrow P$ with canonical lift
$\tilde{\bbJ}_\vel=\tilde{\bbJ}_\con\times\tilde{\bbJ}_\con:
T\tilde{P}\hookrightarrow TP$. Moreover, let
$\tilde{\bbL}=\bbL\circ\tilde{\bbJ}_\vel$ be the restricted
Lagrangian on $T\tilde{P}$ and $\npair{\tilde{\bbH}}{\tilde{\bssi}}$
the associated Hamiltonian structure. Then,
$\tilde{\bbH}={\bbH}\circ\tilde{\bbJ}_\vel$ and
$\tilde{\bssi}=\nnat{\tilde{\bbJ}_\vel}^\ast\bssi$.
\end{lemma}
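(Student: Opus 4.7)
The plan is to mimic the proof of Theorem \ref{LT:Theo} almost verbatim, with the adjoint $\tilde{\bbJ}_\con^\prime:P\rra\tilde{P}$ playing the role that $\tilde{\calT}_\con$ played there. The key observation is that although $\tilde{\bbJ}_\con$ is not invertible, the chain rule for a composition with a linear embedding still produces a clean relation between $\tilde{\Pi}=\VarDer{\tilde{\bbL}}{\tilde{X}_\MaTime}$ on $T\tilde P$ and $\Pi=\VarDer{\bbL}{X_\MaTime}$ on $TP$, namely
\begin{align*}
\sskp{\tilde{\Pi}\at{\tilde{Z}}}{\,\cdot\,}_{\tilde{P}}=\sskp{\Pi\nnat{\tilde{\bbJ}_\vel\tilde{Z}}}{\tilde{\bbJ}_\con\,\cdot\,}_{P},
\qquad\text{i.e.}\quad\tilde{\Pi}\at{\tilde{Z}}=\tilde{\bbJ}_\con^\prime\,\Pi\nnat{\tilde{\bbJ}_\vel\tilde{Z}}.
\end{align*}
This is the exact analogue of \eqref{LT:Theo.Eqn1}, obtained by differentiating $\tilde{\bbL}=\bbL\circ\tilde{\bbJ}_\vel$ with respect to the fibre variable $\tilde{X}_\MaTime$ and noting that the $\tilde{X}_\MaTime$-slot of $\tilde{\bbJ}_\vel$ acts as $\tilde{\bbJ}_\con$.

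First I would use this identity to compute the Legendre transform. With $\tilde{Z}=\pair{\tilde{X}}{\tilde{X}_\MaTime}$ and writing out the definition of the Hamiltonian $\tilde{\bbH}$ associated to $\tilde{\bbL}$ one gets
\begin{align*}
\tilde{\bbH}\at{\tilde{Z}}=\sskp{\tilde{\Pi}\at{\tilde{Z}}}{\tilde{X}_\MaTime}_{\tilde{P}}-\tilde{\bbL}\at{\tilde{Z}}
=\sskp{\Pi\nnat{\tilde{\bbJ}_\vel\tilde{Z}}}{\tilde{\bbJ}_\con\tilde{X}_\MaTime}_{P}-\bbL\nnat{\tilde{\bbJ}_\vel\tilde{Z}}
=\bbH\nnat{\tilde{\bbJ}_\vel\tilde{Z}},
\end{align*}
where in the last step one uses that $\tilde{\bbJ}_\con\tilde{X}_\MaTime$ is exactly the second component of $\tilde{\bbJ}_\vel\tilde{Z}$, so that the first two terms combine to $\bbH$ by its own definition.

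Next I would treat the symplectic form via \eqref{GS:CanSymplForm}. Linearising $\tilde{\Pi}=\tilde{\bbJ}_\con^\prime\,\Pi\circ\tilde{\bbJ}_\vel$ gives
\begin{align*}
\sskp{D\tilde{\Pi}|_{\tilde{Z}}\nnat{\dot{\tilde Z}}}{\,\cdot\,}_{\tilde P}
=\sskp{D\Pi|_{\tilde{\bbJ}_\vel\tilde Z}\nnat{\tilde{\bbJ}_\vel\dot{\tilde Z}}}{\tilde{\bbJ}_\con\,\cdot\,}_{P},
\end{align*}
and substituting this into the definition of $\tilde{\bssi}$ together with the identifications $\tilde{\bbJ}_\con\dot{\tilde X}$, $\tilde{\bbJ}_\con\acute{\tilde X}$ for the configuration components of $\tilde{\bbJ}_\vel\dot{\tilde Z}$ and $\tilde{\bbJ}_\vel\acute{\tilde Z}$ yields
\begin{align*}
\tilde{\bssi}|_{\tilde Z}\npair{\dot{\tilde Z}}{\acute{\tilde Z}}
=\bssi|_{\tilde{\bbJ}_\vel\tilde Z}\npair{\tilde{\bbJ}_\vel\dot{\tilde Z}}{\tilde{\bbJ}_\vel\acute{\tilde Z}}
=\nnat{\tilde{\bbJ}_\vel}^\ast\bssi\big|_{\tilde Z}\npair{\dot{\tilde Z}}{\acute{\tilde Z}},
\end{align*}
which is the claimed pull-back formula.

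The only step that requires a moment of care is the first one, where the non-invertibility of $\tilde{\bbJ}_\con$ has to be handled by passing to its adjoint; everything else is a verbatim transcription of the argument for Theorem \ref{LT:Theo}. In particular, closedness of $\tilde P$ ensures that $\tilde{\bbJ}_\con^\prime$ is well defined and continuous, so that $\tilde\Pi$ genuinely lives in $\tilde P$ and the Legendre transform on $T\tilde P$ is meaningful. No hypothesis beyond linearity and continuity of the embedding is used, which is exactly the setting announced in the statement.
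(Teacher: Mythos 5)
Your proposal is correct and follows exactly the route the paper takes: the paper's own proof consists of introducing the adjoint (projector) of the embedding and then declaring the argument "entirely similar to that of Theorem \ref{LT:Theo}", which is precisely the transcription you carry out, with the adjoint $\tilde{\bbJ}_\con^\prime$ replacing the inverse $\tilde{\calT}_\con$. The one point you flag — that non-invertibility is harmless because only the adjoint is ever needed in \eqref{LT:Theo.Eqn1} and its linearisation — is exactly the "modification" the paper alludes to.
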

\begin{proof}
The projector corresponding to $\tilde{\bbJ}_\vel$ is denoted by
$\bbJ_\vel:TP\twoheadrightarrow{}T\tilde{P}$ and equals the adjoint
of $\tilde{\bbJ}_\vel$. Notice that $\bbJ_\vel\circ\tilde{\bbJ}_\vel
=\mathrm{Id}_{T\tilde{P}}$ but $\mathrm{ker}\,
(\tilde{\bbJ}_\vel\circ\bbJ_\vel) \supsetneq\{0\}$ for
$\tilde{P}\subsetneq{P}$. Besides this modification the proof is
entire similar to that of Theorem \ref{LT:Theo}.
\end{proof}
\begin{lemma}
\label{TST:Lemma.SymTrafo} %
Let $\eps$ and $\MiTime$ be fixed and suppose there exist two linear
and invertible transformations \mbox{$\bsS:TP\rra{TQ}$} with
\begin{math}
\skp{\bsS\,Z}{\bsS\,\tilde{Z}}= \eps^{\mu}\sskp{Z}{\tilde{Z}}
\end{math} %
for some $\mu$ and $\bsT:TP\rra{}TP$ such that
\begin{math}
\bbT_\vel\pair{\eps}{\MiTime}=\bsS\circ\bsT.
\end{math}
Then, %
\begin{align}
\label{TST:Lemma.SymTrafo.Eqn1}
\bsSi|_Z=\tfrac{\d\MaTime}{\d\MiTime}\eps^{\mu}
\bsT^\prime\bsS^{-1}\Sigma|_{\bsS\bsT{}Z}\bsS\bsT,
\end{align}
where $\bsT^\prime$ is the adjoint to $\bsT$.
\end{lemma}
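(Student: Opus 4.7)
The plan is to identify $\bssi$ as a multiple of the pull-back of $\si$ under $\bbT_\vel\pair{\eps}{\MiTime}$, and then to convert the resulting $Q$-inner product into a $P$-inner product using the scaled-isometry property of $\bsS$. First I would establish the identity
\begin{align*}
\bssi|_Z\npair{\dot{Z}}{\acute{Z}}
=\tfrac{\d\MaTime}{\d\MiTime}\,\si|_{\bbT_\vel Z}\npair{\bbT_\vel\dot{Z}}{\bbT_\vel\acute{Z}}
\end{align*}
by combining the transformation rules derived in the preceding theorems. Concretely, the factorization $\bbT_\vel\pair{\eps}{\MiTime}=\calT_\vel^{-1}\circ\calM_\vel\at{-\MiTime}\circ\bbS_\vel\at\eps$ yields three contributions: Theorem \ref{LT:Theo} applied to $\calT_\con$ shows that its (weak) symmetry character gives a trivial pull-back, Theorem \ref{MF:Lemma2} handles the moving-frame piece (recall $\calM_\vel=\calR_\vel\circ\calT_\vel$, and $\calT_\vel\at{-\MiTime}$ is a symmetry), and Theorem \ref{ST:Theo1} produces the reparametrization factor $\tfrac{\d\MaTime}{\d\MiTime}$. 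Stacking these pull-backs in reverse composition order gives the displayed formula.

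Next, I would rewrite the right-hand side in matrix form via the definition $\si|_z\pair{\dot{z}}{\acute{z}}=\skp{\Si|_z\dot{z}}{\acute{z}}_{Q\times Q}$ with $z=\bsS\bsT{}Z$, obtaining
\begin{align*}
\bssi|_Z\pair{\dot{Z}}{\acute{Z}}
=\tfrac{\d\MaTime}{\d\MiTime}\,\skp{\Si|_{\bsS\bsT{}Z}\,\bsS\bsT\dot{Z}}{\bsS\bsT\acute{Z}}_{Q\times Q}.
\end{align*}
Now comes the substitution step: the scaled-isometry assumption $\skp{\bsS V}{\bsS W}_{Q\times Q}=\eps^\mu\sskp{V}{W}_{P\times P}$ implies that the adjoint $\bsS^\prime:TQ\to TP$ satisfies $\bsS^\prime\bsS=\eps^\mu\,\mathrm{Id}_{TP}$, so $\bsS^\prime=\eps^\mu\bsS^{-1}$. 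Hence for any $u\in TQ$ and $W\in TP$ we have $\skp{u}{\bsS W}_{Q\times Q}=\eps^\mu\sskp{\bsS^{-1}u}{W}_{P\times P}$.

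Applying this identity with $u=\Si|_{\bsS\bsT Z}\bsS\bsT\dot Z$ and $W=\bsT\acute Z$, and then moving $\bsT$ to the other side of the $P$-pairing via its adjoint $\bsT^\prime$, we arrive at
\begin{align*}
\bssi|_Z\pair{\dot{Z}}{\acute{Z}}
=\tfrac{\d\MaTime}{\d\MiTime}\,\eps^\mu\,
\sskp{\bsT^\prime\bsS^{-1}\Si|_{\bsS\bsT Z}\bsS\bsT\dot{Z}}{\acute{Z}}_{P\times P},
\end{align*}
from which \eqref{TST:Lemma.SymTrafo.Eqn1} can be read off by comparing with $\bssi|_Z\pair{\dot Z}{\acute Z}=\sskp{\bsSi|_Z\dot Z}{\acute Z}_{P\times P}$. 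The step I expect to require the most care is the first one, namely tracking that the three elementary pull-backs combine into a single pull-back under $\bbT_\vel$ with the $\tfrac{\d\MaTime}{\d\MiTime}$ prefactor appearing only once (from the scaling block) and that the moving-frame contribution $\calR_\vel$ is correctly absorbed using the weak symmetry of $\calT_\vel\at{-\MiTime}$; the remaining algebraic manipulation with $\bsS^\prime$ and $\bsT^\prime$ is routine.
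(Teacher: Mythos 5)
Your proposal is correct and follows essentially the same route as the paper: identify $\bssi$ as $\tfrac{\d\MaTime}{\d\MiTime}$ times the pull-back of $\si$ under $\bbT_\vel\pair{\eps}{\MiTime}$, then convert the $Q$-pairing into a $P$-pairing by inserting $\bsS\bsS^{-1}$ (equivalently, using $\bsS^\prime=\eps^\mu\bsS^{-1}$) and moving $\bsT$ across with its adjoint. The only difference is that you spell out how the three elementary pull-back theorems combine to give the first identity, which the paper compresses into the phrase ``the definition of $\bssi$ and the linearity of $\bbT_\vel$''.
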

\begin{proof}
Let $Z\in{TP}$ be fixed and choose two arbitrary tangent vectors
$\dot{Z},\,\acute{Z}\in{}\tspace{Z}{TP}$. Moreover, set
$z=\bsS\bsT{Z}$ and $\dot{z}=\bsS\bsT\dot{Z}$,
$\acute{z}=\bsS\bsT\acute{Z}$. The definition of $\bssi$ and the
linearity of $\bbT_\vel\pair{\eps}{\MiTime}=\bsS\circ\bsT$ imply
\begin{math}
\bssi|_Z\npair{\dot{Z}}{\acute{Z}}=\tfrac{\d\MaTime}{\d\MiTime}\,
\si|_z\pair{\dot{z}}{\acute{z}},
\end{math}
and this gives
\begin{align*}
\sskp{\bsSi|_z\dot{Z}}{\acute{Z}}&=\tfrac{\d\MaTime}{\d\MiTime}\,
\skp{\Si|_z\dot{z}}{\acute{z}}= \tfrac{\d\MaTime}{\d\MiTime}\,
\skp{\bsS\bsS^{-1}\Si|_z{\bsS\bsT\dot{Z}}}{\bsS\bsT\acute{Z}}
\\&=%
\tfrac{\d\MaTime}{\d\MiTime}\,\eps^{\mu}
\sskp{\bsS^{-1}\Si|_z{\bsS\bsT\dot{Z}}}{\bsT\acute{Z}}=
\tfrac{\d\MaTime}{\d\MiTime}\,\eps^{\mu}
\sskp{\bsT^\prime\bsS^{-1}\Si|_z{\bsS\bsT\dot{Z}}}{\acute{Z}},
\end{align*}
the desired result.
\end{proof}
\begin{remark}
\label{TST:Remark.SymTrafo} %
For $\Si=\Si_\mtrc$ and $\bsS=\tilde{\bsS}\times\tilde{\bsS}$ with
$\tilde{\bsS}:P\to{Q}$ we have $\bsS^{-1}\Si\bsS=\bsSi_\mtrc$, where
$\Si_\mtrc$ and $\bsSi_\mtrc$ correspond to the metric symplectic
forms on $TQ$ and $TP$, respectively, see Example
\ref{GS:Ex.Lagrangian}. In this case \eqref{TST:Lemma.SymTrafo.Eqn1}
becomes
\begin{math}
\bsSi=\tfrac{\d\MaTime}{\d\MiTime}\eps^\mu
\bsT^\prime\bsSi_\mtrc\bsT.
\end{math}
\end{remark}
%
\subsection{Reduction principles}%
\label{sec:Reductions}
%
In this section we suppose that an exact two-scale transformation
has already transformed the original microscopic system into a
macroscopic one on $TP$, where $P$ is a Hilbert space with inner
product $\sskp{\cdot}{\cdot}$. As before, the macroscopic system has
Lagrangian $\bbL\at\eps$, and the associated Hamiltonian structure
on $TP$ is given by $(\bbH\at{\eps},\bssi\at\eps)$. In the previous
section we have shown how $\bbH\at\eps$ and $\bssi\at\eps$ can be
computed directly from their microscopic counterparts, but
$\bbH\at{\eps}$ is always the macroscopic Legendre transform of
$\bbL\at\eps$, and $\bssi\at\eps$ equals
$\nnat{\mathfrak{F}\bbL\at\eps}^\ast\mhol{\bssi}_\can$.
\par
In what follows we describe how the explicit dependence on $\eps$
allows for a consistent model reduction. As illustrated in
\S\ref{sec:ChainNew}, a typical two-scale transformation provides an
expansion of the macroscopic Lagrangian in powers of the scaling
parameter $\eps$, i.e., we have
\begin{align}
\label{Red:Exp.L} \bbL\at{\eps}=\eps^\kappa\,\at{\bbL_0+
\eps\bbL_1+\eps^2\bbL_2+\tdots}
\end{align}
at least on a formal level, where $\kappa$ can be positive or even
negative depending on the underlying two-scale ansatz. Recall that
such an expansion is not available for the original microscopic
system.
\par%
Since we deal only with Hamiltonian structures on tangent bundles we
benefit from Principle \ref{POCE}. In particular, the expansions
\begin{align}
\label{Red:Exp.HandSi}
\bbH\at{\eps}=\eps^\kappa\,\at{\bbH_0+\eps\bbH_1+
\eps^2\bbH_2+\tdots},\quad
\bssi\at{\eps}=\eps^\kappa\,\at{\bssi_0+\eps\bssi_1+
\eps^2\bssi_2+\tdots}
\end{align}
are consistent with \eqref{Red:Exp.L}, i.e., $\bbH_i$ is the
Legendre transform of $\bbL_i$, and we have
$\bssi_i=\nnat{\mathfrak{F}\bbL_i}^\ast\mhol{\bssi}_\can$.
\bigpar
In the simplest case the reduced model is obtained by considering
the leading order terms for $\bbL$ and $\npair{\bbH}{\bssi}$, and
ignoring all terms that contribute to higher orders in $\eps$.
However, depending on the two-scale ansatz the leading order system
can be degenerate. For this reason we distinguish the following
cases:
\begin{list}{}{
\setlength{\labelwidth}{3cm}%
\setlength{\leftmargin}{0.15\textwidth}%
\setlength{\rightmargin}{0.\textwidth}%
}
\item[Case A:]
The symplectic form $\bssi_0$ is non-degenerate, i.e. there is no
$\pair{Z}{Z_\MaTime}\in{TTP}$ with
$\bssi_0|_Z\pair{Z_\MaTime}{\cdot}\equiv0$.
\item[Case B:]
$\bssi_0$ is degenerate, but $\bbL_0$ depends on $X_\MaTime$.
\item[Case C:]
The leading order Lagrangian $\bbL_0$ is quasi-stationary, this
means independent of $X_\MaTime$, and this yields $\bbH_0=-\bbL_0$
and $\bssi_0=0$.
\end{list}
%
%
\paragraph{Reduction in Case A}
%
%
Whenever we end up with Case A, the formal reduction provides a
non-degenerate macroscopic Hamiltonian system and thus we have
established already a (formal) micro-macro transition. In
particular, the reduced Lagrangian reads $\bbL^\red=\bbL_0$ and the
associated Hamiltonian structure is given by
$(\bbH^\red,\bssi^\red)=(\bbH_0,\bssi_0)$, so that the macroscopic
Hamiltonian equation on $TP$ is given by
\begin{align}
\label{Red:MacroDynSys} \bssi^\red|_Z\pair{Z_\MaTime}{\cdot}=
\d\bbH^\red|_Z\at{\cdot}.
\end{align}
Recall, that we can neglect the pre-factor $\eps^\kappa$ as it drops
out in both the Lagrangian and Hamiltonian equations on $TP$.
\bigpar%
Since we have derived the reduced macroscopic structures by means of
formal expansions with respect to $\eps$, we are confronted with the
\emph{justification problem}. More precisely, it is not obvious that
solutions to \eqref{Red:MacroDynSys} provide (approximate) solutions
to the microscopic system. Of course, any curve
$\MaTime\mapsto{Z}\at{\MaTime}\in{}TP$ that solves
\begin{math}
\bssi|_Z\pair{Z_\MaTime}{\cdot}=\d\bbH\at{\cdot}
\end{math}
and that obeys an expansion in powers of $\eps$, must satisfy
\eqref{Red:MacroDynSys} to leading order, but the existence of such
an expansion for the solution $Z(\tau)$ must be proven. This problem
is very subtle and cannot be addressed here. Rigorous justification
results for linear and some (weakly) nonlinear systems are given in
\cite{Mie07}. For a brief discussion of the difficulties that arise
in the case of strong nonlinearities we refer to
\S\ref{sec:ChainNew.WE}, which shows that such an $\eps$-expansion
can be valid only under additional assumptions concerning the
initial data, the macroscopic time-interval under consideration and,
finally, the regularity properties of the macroscopic equation.
%
%
\paragraph{Reduction in Case B}
%
%
In contrast to Case $A$, the Cases $B$ and $C$ allow for further
reduction steps, which we explain next. We start with Case B and
refer to the KdV reduction in \S\ref{sec:ChainNew.kdv} as a typical
example. For simplicity we suppose that $\bbL_0$ depends linearly on
$X_\MaTime$, i.e., we assume that the momentum
$\Pi_0=\VarDer{\bbL_0}{X_\MaTime}$ is a function of $X$ but not of
$X_\MaTime$. As a consequence, the associated Hamiltonian structure
lives on $P$, this means $\bbH_0$ is a function on $P$ and $\bssi$
is a symplectic form on $P$. In fact,
\begin{math}
\bbH_0=\sskp{\Pi_0\at{X}}{X_\MaTime}-\bbL_0\pair{X}{X_\MaTime}
\end{math}
provides $\VarDer{\bbH_0}{X_\MaTime}=0$. This implies that the right
hand side in
\begin{align*}
{D\Pi_0}|_{\pair{X}{X_\MaTime}}\npair{\dot{X}}{{\dot{X}_\MaTime}}=
\VarDer{\Pi_0}{X}|_{X}\nnat{\dot{X}}
\end{align*}
is independent of both $X_\MaTime$ and $\dot{X}_\MaTime$, and due to
\eqref{GS:CanSymplForm} the form $\bssi_0$ lives actually on $P$.
Thus we end up with the following macroscopic model. The reduced
Lagrangian $\bbL^\red=\bbL_0$ lives on $TP$ and has a consistent
Hamiltonian structure on $P$ given by $\bbH^\red=\bbH_0|_P$ and
$\bssi^\red={\bssi_0}|_P$.
%
%
%
\paragraph{Reduction in Case C: Restriction to sub-spaces}
%
%
In some cases the leading order reduction turns out to be
quasi-stationary, i.e., $\bbL_0$ does not depend on $X_\MaTime$, and
this implies $\bbH_0=-\bbL_0$ and $\bssi_0=0$. Whenever this
happens, we obtain a reduced macroscopic model as follows. We
restrict the macroscopic configurations to
\begin{align*}
P_0=\left\{{X_0}\in{P}\;:\;0= \VarDer{\bbL_0}{X}\at{X_0}=
\VarDer{\bbH_0}{X}\at{X_0}\;\right\},
\end{align*}
and determine the reduced Lagrangian and Hamiltonian structures by
restricting the next-leading order terms $\bbL_i$ and
$\pair{\bbH_i}{\bssi_i}$ to $P_0$. However, in general we shall
expand additionally the solution $X$ in powers of $\eps$, and this
may produce correction terms in the expansions \eqref{Red:Exp.L} and
\eqref{Red:Exp.HandSi}. This problem will be discussed now, where
for our purposes we can assume that $P_0$ is a closed linear
subspace of $P$.
\par
In order to identify suitable correction terms we start with the
ansatz
\begin{align*} %
Z=\pair{X}{X_\MaTime}=Z_0+\eps{}Z_1= \pair{X_0}{X_{0\,\MaTime}} +
\eps \pair{X_1}{X_{1\,_\MaTime}},
\end{align*}
and study the Lagrangian
$\tilde{\bbL}\triple{\eps}{Z_0}{Z_1}=\bbL\pair{\eps}{Z_0+\eps{Z}_1}$
defined on $T\widetilde{P}$ with $\widetilde{P}=P_0\times{P}$.
Exploiting $\VarDer{\bbL_0}{X_\MaTime}\equiv0$ and
$\VarDer{\bbL_0}{X}\at{X_0}=0$ for all $X_0\in{P_0}$ we find
\begin{align*}
\widetilde{\bbL}\triple{\eps}{{Z_0}}{Z_1}&=
\eps^\kappa\Bat{\bbL_0\at{X_0+\eps{}X_1}+
\eps\bbL_1\at{Z_0+\eps{}Z_1}+ \eps^2\bbL_2\at{Z_0+\eps{}Z_1}+\tdots}
\\%
&= \eps^\kappa\Bat{\bbL_0\at{X_0}+ \eps\bbL_1\at{Z_0}+ \eps^2\at{
\bbL_2\at{Z_0}+\widehat{\bbL}_2\pair{Z_0}{Z_1}}+\tdots},
\end{align*}
with first correction term
\begin{align*}
\widehat{\bbL}_2\pair{Z_0}{Z_1}=\tfrac{1}{2}\sskp{\SVarDer{\bbL_0}{X}\at{X_0}X_1}{X_1}
+\sskp{\VarDer{\bbL_1}{X}\at{Z_0}}{X_1}+
\sskp{\VarDer{\bbL_1}{X_\MaTime}\at{Z_0}}{{X_1}_\MaTime}.
\end{align*}
In particular, any possible correction $\eps{Z_1}$ effects $\bbL_2$
but neither $\bbL_0$ nor $\bbL_1$.

\paragraph{Case C1: Reduced model via $\bbL_1$}
If the next-leading order Lagrangian $\bbL_1$ depends on
$X_\MaTime$, then the reduced Lagrangian is given by
$\bbL^\red=\bbL_1|_{TP_0}$, and in this case we can ignore the
correction term $\eps{}Z_1$. Moreover, according to Lemma
\ref{TST:Lemma.SubSpaces} the corresponding Hamiltonian structure is
given by $\pair{\bbH_1}{\bssi_1}|_{TP_0}$. An example for this case
is the three-wave-interaction discussed in
\S\ref{sec:ChainNew.threeWaves}.
\bigpar%
As before, the reduction to $TP_0$ is formal and must be justified
rigorously. In the simplest case the space $P_0$ is an
\emph{invariant manifold} for $\pair{\bbH}{\bssi}$. This means that
for all initial data chosen from $TP_0$ the solution
$\MaTime\mapsto{Z}\at\MaTime$ to the original problem  belongs to
$TP_0$ for all times $\MaTime>0$. In general, we expect that the
restriction to $TP_0$ provides a reasonable reduced model if $P_0$
is an \emph{approximate invariant manifold}, so that solutions to
$\npair{\bbH^\red}{{\bssi}^\red}$ are approximate solutions to
$\npair{\bbH}{\bssi}$. For the \emph{justification} in this case one
has to prove that for all initial data chosen from $TP_0$ the real
trajectory stays close to $TP_0$ (up to higher orders in $\eps$) at
least for sufficiently small \emph{macroscopic} times, see for
instance \cite{SW00,GM04,GM06,GMS07}.

\begin{example}
\label{RedCaseC:Ex}
Let $M$ be an integer, $Q=L^2\at{[0,\,M];\,\d\MiLagrC}$ the Lebesgue
space of all \emph{periodic} functions on the interval $[0,\,M]$,
and let $\calL\pair{x}{x_\MiTime}=\calK\at{x_\MiTime}-\calV\at{x}$
be defined by
\begin{align*}
\calK\at{x_\MiTime}=\tfrac{1}{2}\skp{x_\MiTime}{x_\MiTime},\quad
\calV\at{x}=-\tfrac{1}{2}\skp{\laplace{}x}{x},\quad
\skp{x}{\tilde{x}}=\int\mhintlimits_0^M x\,\tilde{x}\,\d\MiLagrC,
\end{align*}
with discrete Laplacian
$\nnat{\laplace{x}}\at{\MiLagrC}=x\at{\MiLagrC+1}+x\at{\MiLagrC-1}-2x\at{\MiLagrC}$,
so that the microscopic law of motion is the discrete wave equation
$x_{\MiTime\MiTime}=\laplace{x}$. Moreover, consider the time
scaling from Example \ref{Ex:TimeScaling}, this means
$\MaTime=\eps\MiTime$, $x=X$, $P=Q$, $\MaLagr=\MiLagrC$. Then,
$\bbL$ obeys an (exact) expansion in powers of $\eps^2$ via
\begin{align*}
\bbL\triple{\eps^2}{X}{X_\MaTime}=\bbL_0\at{X}+\eps^2\bbL_1\at{X_\MaTime},\quad
\bbL_0\at{X}=-\calV\at{X},\quad\bbL_1\at{X_\MaTime}=\calK\at{X_\MaTime}.
\end{align*}
The leading order Lagrangian and Hamiltonian equations read
$\laplace{X}=0$ and provide
\begin{align*}
P_0=\{X_0\in P\;:\;X_0\at{\MaLagr+1}=X_0\at{\MaLagr}\,\}.
\end{align*}
Exploiting the next-leading order terms corresponding to $\bbL_1$ we
find
\begin{align*}
{\bbL}^\red={\bbH}^\red=\tfrac{1}{2}\skp{{X_0}_\MaTime}{{X_0}_\MaTime},\quad
{\bssi}^\red\simeq
\begin{pmatrix}\scriptstyle0&-\scriptstyle{1}\\\scriptstyle{\scriptstyle{1}}&\scriptstyle0\end{pmatrix},
\end{align*}
so the macroscopic evolution is governed by
${X_0}_{\MaTime\MaTime}=0$. Moreover, the reduction is exact as both
microscopic and reduced dynamics are equivalent for all initial data
$\pair{X\at{0}}{X_\MaTime\at{0}}\in{}TP_0$.
\end{example}
%
%
%
%
\paragraph{Case C2: Reduced model via $\bbL_2$}
%
%
It may happen that even the next-leading Lagrangian $\bbL_1|_{TP_0}$
is quasi-stationary,
 i.e., $\bbL_1(Z_0)=\bbL_1(X_0)$ for all $Z_0=\pair{X_0}{{X_0}_\MaTime}\in{TP_0}$. Then
the general reduction procedure depends on the particular properties
of $\bbL_1$. Here we restrict to the case we meet in
\S\ref{sec:ChainNew.nls} (nlS equation), where the two-scale
transformation implies
\begin{align}
\label{Red:CaseC2a} \bbL_1|_{P_0}=-\bbH_1|_{P_0}=0.
\end{align}
For $\bbL_1|_{P_0}\neq\const$ we would restrict $X_0$ further by
imposing additionally $\VarDer{\at{\bbL_1|_{P_0}}}{X_0}=0$.
\par
Notice that \eqref{Red:CaseC2a} does \emph{not} necessarily imply
$\VarDer{\bbL_1}{X}(X_0)=0 \in \text{Lin}(P,\Rset)$ for all $X_0 \in
P_0$ and therefore we proceed as follows. Our strategy is to choose
$X_1$ in such a way that it is a stationary point of
$\widehat{\bbL}_2\pair{X_0}{X_1}$. This means we seek
$X_1=X_1\at{X_0}$  as solution to the affine equation
\begin{align*}
\SVarDer{\bbL_0}{X}\at{X_0}X_1+ \VarDer{\bbL_1}{X}\at{X_0}=0.
\end{align*}
Provided this is possible, our reduced Lagrangian on $TP_0$ is given
by
\begin{align*}
\bbL^\red\at{Z_0}=\bbL_2\at{Z_0}+\widehat{\bbL}_2\pair{X_0}{X_1\at{X_0}},
\end{align*}
and since the term $\widehat{\bbL}_2$ does not contribute to the
fiber derivative $\mathfrak{F}\bbL^\red$, one can show (similarly to
Lemma \ref{TST:Lemma.SubSpaces}) that
\begin{align*}
\bbH^\red\at{Z_0}=\bbH_2\at{Z_0}+\widehat{\bbH}_2\pair{X_0}{X_1\at{X_0}},\quad
\bssi^\red=\bssi_2|_{TP_0},
\end{align*}
where $\widehat{\bbH}_2=-\widehat{\bbL}_2$ is the corresponding
Hamiltonian structure on $TP_0$.
%
%
%
\section{Two-scale reductions for the atomic chain}
\label{sec:ChainNew}
%
The abstract framework developed in the previous section shall now
be applied to the examples mentioned in the introduction. The
microscopic system will be either the Fermi--Pasta--Ulam (FPU) chain
\begin{align}
\label{Intro:FPUChain} %
\ddot{x}_\MiLagr\at\MiTime=
\NNPot^\prime\bat{{x}_{\MiLagr+1}\at\MiTime-{x}_{\MiLagr}\at\MiTime}-
\NNPot^\prime\bat{{x}_{\MiLagr}\at\MiTime-{x}_{\MiLagr-1}\at\MiTime}
\end{align}
or the Klein--Gordon (KG) chain
\begin{align}
\label{Intro:KGChain}%
\ddot{x}_\MiLagr\at\MiTime=
\alpha\bat{{x}_{\MiLagr+1}\at{\MiTime}+{x}_{\MiLagr-1}\at{\MiTime}-
2{x}_{\MiLagr}\at{\MiTime}}-\OSPot^\prime\at{{x}_{\MiLagr}\at{\MiTime}}
\end{align}
with harmonic constant $\alpha=\NNPot^{\prime\prime}\at{0}\in\Rset$.
Without loss of generality we always assume
$0=\OSPot\at{0}=\OSPot^\prime\at{0}=\NNPot\at{0}=\NNPot^\prime\at{0}$,
and restrict our considerations to infinite chains. In the case that
the two-scale ansatz refers to small amplitudes, the linearized
atomic chain
\begin{align}
\notag
\ddot{x}_\MiLagr\at\MiTime=
\alpha\bat{{x}_{\MiLagr+1}\at{\MiTime}+{x}_{\MiLagr-1}\at{\MiTime}-
2{x}_{\MiLagr}\at{\MiTime}}-\OSPot^{\prime\prime}\at{0}{x}_{\MiLagr}
\end{align}
becomes important. This linearized chain allows for propagating
plane wave solutions
\begin{math}
\mhexp{\iu\at{\om\MiTime+\theta\MiLagr}},
\end{math}
provided that the \emph{frequency} $\om$ and the \emph{wave number}
$\theta$ satisfy the \emph{dispersion relation}
\begin{align}
\label{Intro:AtomicChain.DispRel}
\om^2={\Om}^2\at{\theta}=2\alpha\bat{1-\cos\theta}+\OSPot^{\prime\prime}\at{0}.
\end{align}
The atomic chain falls into the class of normal Hamiltonian systems,
see \eqref{TypEx:Normal.Lagrangian}, with configuration space
$Q_\discr=\ell^2\at\Zset$. The Lagrangian reads
\begin{math}
\calL_\discr\pair{x}{\dot{x}}=
\calK_\discr\at{\dot{x}}-\calV_\discr\at{x}
\end{math} %
with kinetic and potential energy given by
\begin{align}
\label{chain:energies} \calK_\discr\at{\dot{x}}=
\tfrac{1}{2}\sum_{\MiLagr\in\Zset}\,
\dot{x}_\MiLagr^2,\quad%
\calV_\discr\at{x}= \sum_{\MiLagr\in\Zset}\,\Bat{\,
\NNPot\at{x_{\MiLagr+1}-x_\MiLagr}+ \OSPot\at{x_\MiLagr}\,},
\end{align}
and Newton's equations \eqref{Intro:AtomicChain} equal the
Euler-Lagrange equations to $\calL_\discr$ on $T Q_\discr$.
Moreover, the Hamiltonian is given by
$\calH_\discr\pair{x}{\dot{x}}=\calK_\discr\at{\dot{x}}+\calV_\discr\at{x}$,
so that Newton's equations are equivalent to
\begin{align*}
\begin{pmatrix}
0&-1\\1&0
\end{pmatrix}
\frac{\d}{\d\MiTime}
\begin{pmatrix}
x\\\dot{x}
\end{pmatrix}
=
\begin{pmatrix}
\VarDer{\calH_\discr}{x}\\\VarDer{\calH_\discr}{\dot{x}}
\end{pmatrix},
\end{align*}
which is a Hamiltonian ODE on $TQ_\discr$ with metric symplectic
form, i.e., we have $\si=\si_\mtrc$ in the sense of Example
\ref{GS:Ex.Lagrangian}.
%
%
\subsection{The embedded atomic chain}\label{sec:Chain.Embed}
%
%
In order to derive effective models we start with a suitable
embedding of the atomic chain. At first we replace the discrete
lattice index $\MiLagr\in\Zset$ by a continuous variable
$\MiLagrC\in\Rset$. In addition, if the two-scale ansatz involves
oscillatory microstructure, we consider $k$ additional \emph{phase
variables} $\Phase=(\Phase_1,\,..,\,\Phase_k)$, which are supposed
to take values in the $k$-dimensional torus $T^k$. This embedding
 gives rise to the formal
identification
\begin{align*}
x_\MiLagr\at\MiTime=x\triple{\MiTime}{\MiLagr}{0},\quad
\dot{x}_\MiLagr\at\MiTime=x_\MiTime\triple{\MiTime}{\MiLagr}{0},
\end{align*}
where the instantaneous configuration
$x\triple{\MiTime}{\cdot}{\cdot}$ is for each $\MiTime$ a function
in $\MiLagrC$ and $\Phase$.
\par%
Next, we identify the Lagrangian $\calL$ of the embedded system. To
this end, we replace all sums over $\MiLagr$ in
\eqref{chain:energies} by integrals with respect to $\MiLagrC$ and
$\Phase$. This yields
\begin{align}
\label{GS:chain.Lagrangian}
\calL\pair{x}{x_\MiTime}=\calK\at{x_\MiTime}-\calV\at{x}
\end{align}
with
\begin{align}
\label{GS:Chain.energies}
\calK\at{x_\MiTime}=\!\!\!\int\mhintlimits_{\Rset\times{T^k}}\!\!\!
\tfrac{1}{2}x_\MiTime^2\,\d\MiLagrC\d\Phase,\quad \calV\at{x}=
\!\!\!\int\mhintlimits_{\Rset\times{T^k}}\!\!\!
\Bat{\NNPot\at{\nabla^+_{1,\,0}x}+
\OSPot\at{x}}\,\d\MiLagrC\d\Phase,
\end{align}
where $\nabla^+_{1,\,0}$ is a discrete differential operator, see
Remark \ref{GS:DiscreteNablas} below. Notice that the Euler-Lagrange
equation for $\calL$, i.e.\
\begin{align*}
x_{\MiTime\MiTime}=
\nabla^-_{1,\,0}\NNPot^\prime\at{\nabla^+_{1,\,0}x}
-\OSPot^\prime\bat{x},\quad{x}=x\triple{\MiTime}{\MiLagrC}{\Phase},
\end{align*}
is still fully equivalent to (an uncountable number of uncoupled
copies of) Newton's equations \eqref{Intro:AtomicChain}. However,
the embedding gives rise to additional symmetries, and thus we gain
new integrals of motion. In fact, the Lagrangian
\eqref{GS:chain.Lagrangian} is invariant under the continuous groups
of space shifts $\MiLagrC\rightsquigarrow\MiLagrC+\MiLagrC_0$ and
phase shifts $\Phase\rightsquigarrow\Phase+\Phase_0$, and Noether's
theorem provides that
\begin{align}
\notag
\calI_{\mathrm{space}}\pair{x}{x_\MiTime}=
\int\mhintlimits_{\Rset{\times}T^k}x_\MiTime\,x_\MiLagrC\,
\d\MiLagrC\d\Phase\in\Rset,\quad
\calI_{\mathrm{phase}}\pair{x}{x_\MiTime}=
\int\mhintlimits_{\Rset{\times}T^k}x_\MiTime\,x_\Phase\,
\d\MiLagrC\d\Phase\in\Rset^k
\end{align}
are conserved for any solution of the microscopic system. Recall
that $x_\MiLagrC \in \Rset $ and $x_\Phase \in \Rset^k$ denote the
derivatives of $x$ with respect to $\MiLagrC$ and $\Phase$,
respectively. These conservation laws have no counterpart within the
classical mechanics of mass points as they are a byproduct of the
embedding.
\bigpar%
\begin{remark}%
\label{GS:DiscreteNablas}%
For given $\delta\in\Rset$ and $\theta\in{T^k}$ let
\begin{align}
\notag
\nnat{\nabla^+_{\delta,\,\theta}x} \pair{\MiLagrC}{\Phase}=
x\pair{\MiLagrC+\delta}{\Phase+\theta}-
x\pair{\MiLagrC}{\Phase},\quad
\nnat{\nabla^-_{\delta,\,\theta}x}\pair{\MiLagrC}{\Phase}=
x\pair{\MiLagrC}{\Phase}-x\pair{\MiLagrC-\delta} {\Phase-\theta},
\end{align}
and
\begin{math}
\laplace_{\delta,\,\theta}=
\nabla^+_{\delta,\,\theta}-\nabla^-_{\delta,\,\theta}.
\end{math}
These definitions imply
\begin{align*}
\nabla^\pm_{-\delta,\,-\theta}=-\nabla^\mp_{\delta,\,\theta},\quad
\at{\nabla^\pm_{\delta,\,\theta}}^\ast=
-\nabla^\mp_{\delta,\,\theta},\quad
\at{\laplace_{{\delta,\,\theta}}}^\ast=
\laplace_{{\delta,\,\theta}},
\end{align*} %
where $\ast$ denotes the adjoint operator with respect to the
$L^2$--inner product.
\end{remark}

%
\subsection{From FPU to the wave equation}\label{sec:ChainNew.WE}
%
%
Here we derive the quasi-linear wave equation from Newton's equation
for FPU chains. Recall that the underlying two-scale ansatz is given
by \eqref{Intro:WE.MSAnsatz}, and involves neither a microstructure
nor a moving frame. For the embedded system this ansatz reads
\begin{align}
\label{WE:MSAnsatz}
x\pair{\MiTime}{\MiLagrC}=\eps^{-1}X\pair{\eps\MiTime}{\eps\MiLagrC},
\end{align}
and Example \ref{ST:Example1} provides
\begin{math}
Q=L^2\at{\Rset;\,\d\MiLagrC}
\end{math} %
and
\begin{math}
P=L^2\at{\Rset;\,\d\MaLagr}
\end{math} %
as well as the (lifted) inverse two-scale transformation
%
%
%
\begin{align}
\label{WE:MSTrafo.Eqn2}%
\bbT_\vel\at{\eps}: %
 \pair{X}{X_\MaTime}\at{\MaLagr}\mapsto
\pair{x}{x_\MiTime}\at{\MiLagrC}=\pair{\eps^{-1}{X}}
{{X_\MaTime}}\at{\eps\MiLagrC}.
\end{align}%
\begin{lemma} %
The two-scale transformation \eqref{WE:MSTrafo.Eqn2} yields
$\bbL=\bbK-\bbV$ and $\bbE=\bbK+\bbV$ with
\begin{align*}
\bbK\pair{\eps}{X_\MaTime}=\eps^{-1}\,\bbK_0\at{X_\MaTime},\quad
\bbK_0\at{X_\MaTime}=\!\!\!\int\mhintlimits_{\Rset}\tfrac{1}{2}X_\MaTime^2
\d\MaLagr,\quad \bbV\pair{\eps}{X}=\eps^{-1}
\!\!\!\int\mhintlimits_{\Rset}
\NNPot\at{\eps^{-1}\nabla^+_\eps{X}}\d\MaLagr.
\end{align*}
Moreover, we have $\bbH=\bbE$ and $\bssi=\eps^{-1}\bssi_\mtrc$,
where $\bssi_\mtrc$ is the metric symplectic form on $TP$, see
Example \ref{GS:Ex.Lagrangian}.
\end{lemma}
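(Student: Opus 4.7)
The plan is to proceed by direct substitution into the microscopic energies and then invoke Theorem \ref{ST:Theo1} for the Hamiltonian structure. Note that this two-scale transformation involves no moving frame or symmetry transformation, so $\bbT_\vel(\eps)=\bbS_\vel(\eps)$, which keeps the argument short.

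For the Lagrangian, I plug \eqref{WE:MSTrafo.Eqn2} into the embedded energies of \S\ref{sec:Chain.Embed} (with $\OSPot\equiv 0$ and $k=0$). Using $x_\MiTime(\MiLagrC)=X_\MaTime(\eps\MiLagrC)$ and the substitution $\MaLagr=\eps\MiLagrC$ (so $\d\MiLagrC=\eps^{-1}\d\MaLagr$) gives the factor $\eps^{-1}$ in front of $\bbK_0$. For the potential, the key step is to compute, using $x(\MiLagrC)=\eps^{-1}X(\eps\MiLagrC)$,
\begin{align*}
\nnat{\nabla^+_{1,\,0}x}\at{\MiLagrC}=\eps^{-1}\bat{X(\eps\MiLagrC+\eps)-X(\eps\MiLagrC)}=\eps^{-1}\nnat{\nabla^+_{\eps}X}\at{\eps\MiLagrC},
\end{align*}
and then perform the same change of variables. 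This produces the claimed formulas for $\bbK$, $\bbV$ and hence $\bbL=\bbK-\bbV$ and $\bbE=\bbK+\bbV$.

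For the Hamiltonian structure I apply Theorem \ref{ST:Theo1}. Since the microscopic Lagrangian is normal with mass matrix $M=1$, Example \ref{GS:Ex.Lagrangian} gives $\pi\pair{x}{x_\MiTime}=x_\MiTime$ and $\calH=\calE=\calK+\calV$. Hence $\bbH=\calH\circ\bbS_\vel=\bbE$. For the symplectic form, the theorem yields $\bssi=\tfrac{\d\MaTime}{\d\MiTime}\hat{\bssi}=\eps\,\hat{\bssi}$ with $\hat{\bssi}=\nnat{\bbS_\vel}^\ast\si_\mtrc$. Evaluating the pull-back on arbitrary tangent vectors $\dot{Z}=\pair{\dot{X}}{\dot{X}_\MaTime}$ and $\acute{Z}=\pair{\acute{X}}{\acute{X}_\MaTime}$ and using the explicit form of $\bbS_\vel$,
\begin{align*}
\hat{\bssi}|_Z\npair{\dot{Z}}{\acute{Z}}=\int\mhintlimits_\Rset\bat{\acute{X}_\MaTime(\eps\MiLagrC)\,\eps^{-1}\dot{X}(\eps\MiLagrC)-\dot{X}_\MaTime(\eps\MiLagrC)\,\eps^{-1}\acute{X}(\eps\MiLagrC)}\d\MiLagrC,
\end{align*}
and substituting $\MaLagr=\eps\MiLagrC$ produces an additional factor $\eps^{-1}$, so $\hat{\bssi}=\eps^{-2}\bssi_\mtrc$ and therefore $\bssi=\eps^{-1}\bssi_\mtrc$.

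The only subtle point is bookkeeping of the scaling exponents: the position and velocity components of $\bbS_\vel$ scale with different powers of $\eps$ (namely $\eps^{-1}$ and $\eps^{0}$), so $\bbS_\vel$ is not a scaled isometry on the product space $TP$, and the clean form of Lemma \ref{TST:Lemma.SymTrafo} with a single exponent $\mu$ does not apply directly. The remedy is to evaluate $\nnat{\bbS_\vel}^\ast\si_\mtrc$ componentwise as above; everything else reduces to routine changes of variables.
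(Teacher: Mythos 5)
Your computation is correct and follows exactly the route the paper takes (whose proof is a one-liner): direct substitution of \eqref{WE:MSTrafo.Eqn2} into the embedded energies, change of variables $\MaLagr=\eps\MiLagrC$, and Theorem \ref{ST:Theo1} for $\bbH=\bbE$ and $\bssi=\eps\,\hat{\bssi}=\eps^{-1}\bssi_\mtrc$. One small quibble with your closing remark: Lemma \ref{TST:Lemma.SymTrafo} \emph{does} apply here, since $\bsS$ in that lemma need not be $\bbS_\vel$ itself --- one factors $\bbT_\vel=\bsS\circ\bsT$ with $\bsS$ the pure spatial rescaling $Z\mapsto Z(\eps\,\cdot)$ on both components (a scaled isometry with $\mu=-1$) and $\bsT=\begin{pmatrix}\eps^{-1}&0\\0&1\end{pmatrix}$, exactly as the paper does in Lemmas \ref{kdv:Lemma1} and \ref{nls:Lemma0}, which reproduces $\bssi=\eps^{-1}\bssi_\mtrc$; your componentwise evaluation of the pull-back is an equally valid substitute.
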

\begin{proof}
All assertions are direct consequences of \eqref{WE:MSTrafo.Eqn2}
and the abstract results from \S\ref{sec:Trafos}, see Theorem
\ref{ST:Theo1}. In particular, $\bbH=\bbE$ holds, since the
two-scale transformation does not involve a moving frame.
\end{proof}
Next we identify the leading order terms in the expansion with
respect to $\eps$. Using formal Taylor expansion
\begin{align*}
\at{\eps^{-1}\nabla^+_\eps{X}}\at\MaLagr=
\eps^{-1}\at{X\at{\MaLagr+\eps}-X\at{\MaLagr}}=
X_\MaLagr\at\MaLagr+\tfrac{1}{2}\eps{X}_{\MaLagr\MaLagr}\at\MaLagr+\DO{\eps^2},
\end{align*}
we find
\begin{math}
\bbV\pair{\eps}{X}=\eps^{-1}\bbV_0\at{X}+\DO{1}
\end{math} %
with
\begin{math}
\bbV_0\at{X}=\int_\Rset \,\NNPot\at{X_\MaLagr}\,\d\MaLagr
\end{math} %
and conclude that
\begin{align*}
\bbL^\red\pair{X}{X_\MaTime}=
\bbK_0\at{X_\MaTime}-\bbV_0\at{X},\quad
\bbH^\red\pair{X}{X_\MaTime}= \bbK_0\at{X_\MaTime}+\bbV_0\at{X}.
\end{align*}
Notice that $\bbV_0$ is defined only on $H^1\at{\Rset;\,\d\MaLagr}$,
which is dense in $P$. Finally, $\bssi^\red=\bssi_\mtrc$ completes
the leading order reduction and we end up with the following
macroscopic model:
\begin{theorem}
%
Both the formally reduced Lagrangian and Hamiltonian equations are
equivalent to
\begin{align}
\label{WE:Red.Theorem1.Eqn1}
X_{\MaTime\MaTime}-{\NNPot^\prime\at{X_\MaLagr}}_\MaLagr=0.
\end{align}%
\end{theorem}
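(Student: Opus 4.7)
The plan is to verify the equivalence by a direct variational computation on both sides, since the reduced structures are explicit and of normal (canonical) type.

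First I would work out the Euler--Lagrange equation for $\bbL^\red\pair{X}{X_\MaTime}=\bbK_0\at{X_\MaTime}-\bbV_0\at{X}$. Because $\bbK_0$ is the standard quadratic kinetic energy, its fiber derivative is simply $\partial_{X_\MaTime}\bbL^\red=X_\MaTime$, so the canonical momentum agrees with the velocity. For the potential part I would compute
\begin{align*}
\sskp{\VarDer{\bbV_0}{X}\at{X}}{\dot{X}}
=\int\mhintlimits_\Rset \NNPot^\prime\at{X_\MaLagr}\,\dot{X}_\MaLagr\,\d\MaLagr
=-\int\mhintlimits_\Rset \at{\NNPot^\prime\at{X_\MaLagr}}_\MaLagr\dot{X}\,\d\MaLagr
\end{align*}
by a single integration by parts, assuming enough decay at infinity to discard the boundary term. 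This yields $\partial_X\bbL^\red=\at{\NNPot^\prime\at{X_\MaLagr}}_\MaLagr$, so the Euler--Lagrange equation $\tfrac{\d}{\d\MaTime}\partial_{X_\MaTime}\bbL^\red=\partial_X\bbL^\red$ is precisely \eqref{WE:Red.Theorem1.Eqn1}.

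Next I would verify the Hamiltonian side. Since $\bssi^\red=\bssi_\mtrc$ corresponds via Example~\ref{GS:Ex.Lagrangian} to the matrix $\bsSi^\red\simeq\bigl(\begin{smallmatrix}0&-1\\1&0\end{smallmatrix}\bigr)$, the Hamiltonian equation $\bssi^\red|_Z\pair{Z_\MaTime}{\cdot}=\d\bbH^\red|_Z\at{\cdot}$ reduces to the canonical system
\begin{align*}
\MaTimeDer{X}=\VarDer{\bbH^\red}{X_\MaTime}=X_\MaTime,\qquad
\MaTimeDer{X_\MaTime}=-\VarDer{\bbH^\red}{X}=\at{\NNPot^\prime\at{X_\MaLagr}}_\MaLagr,
\end{align*}
using the same integration by parts for $\bbV_0$. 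Eliminating $X_\MaTime$ gives \eqref{WE:Red.Theorem1.Eqn1} again. The equivalence of the two formulations is also guaranteed abstractly by the Principle of Consistent Expansions (Principle~\ref{POCE}) together with the fact that $\mathfrak{F}\bbL^\red$ is the identity on the velocity slot, so the canonical and non-canonical viewpoints coincide here.

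The only subtlety worth flagging is the functional-analytic setting: $\bbV_0$ is defined on $H^1\at{\Rset;\,\d\MaLagr}$, not on all of $P=L^2$, and the integration by parts is legitimate only for test functions with compact support (or suitable decay). Thus the equations are to be read in the distributional sense, and the step of justifying that solutions actually exist in this class is a separate question, to be discussed in the context of the justification problem rather than within the formal reduction itself.
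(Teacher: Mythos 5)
Your proposal is correct and matches the paper's (implicit) argument: the paper states this theorem without a written proof, treating it as the routine Euler--Lagrange and canonical-Hamiltonian computation for $\bbL^\red=\bbK_0-\bbV_0$, $\bbH^\red=\bbK_0+\bbV_0$ with the metric symplectic form, which is exactly what you carry out. Your remarks on the $H^1$ domain of $\bbV_0$ and the distributional reading of the integration by parts are consistent with the paper's own caveat that $\bbV_0$ is only densely defined on $P$.
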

\begin{remark}
We claimed in the introduction that the Hamiltonian two-scale
reduction is always related to the Hamiltonian structure on $TP$ but
fails if we use the canonical structure on $T^{\ast}P$. In this
example we clearly see the reason for this: The canonical momentum
corresponding to $\bbL\at{\eps}$ is given by
$\Pi=\eps^{-1}X_\MaTime$ and, replacing $X_\MaTime$ by $\Pi$, we
find
\begin{align*}
\mhol{\bbH}\triple{\eps}{X}{\Pi}=
\sskp{\VarDer{\bbL}{X_\MaTime}}{X_\MaTime}-\bbL=
\eps\int\mhintlimits_\Rset \tfrac{1}{2}\Pi^2\,\d\MaLagr+
\eps^{-1}\int\mhintlimits_\Rset{}
\NNPot\at{\eps^{-1}\nabla^+_\eps{X}}\,\d\MaLagr,
\end{align*}
the Hamiltonian on $T^\ast{P}$. As long as we fix $\eps>0$, the
canonical equations
\begin{math}
X_\MaTime=\eps\Pi
\end{math}
and
\begin{math}
\Pi_\MaTime=
\eps^{-2}\nabla^-_\eps{}\NNPot^{\prime}\at{\eps^{-1}\nabla^+_\eps{X}}
\end{math}
are fully equivalent to the Hamiltonian equations on $TP$. However,
formal expansion of $\mhol{\bbH}$ with respect to $\eps$ yields, to
leading order $\eps^{-1}$, the reduced Hamiltonian
$\mhol{\bbH}^\red\pair{X}{\Pi}=\eps^{-1}\bbV_0\at{X}$ and the
corresponding canonical equations
\begin{math}
X_\MaTime=0,
\end{math} %
\begin{math}
\Pi_\MaTime=\eps^{-1}{\NNPot^\prime\at{X_\MaLagr}}_\MaLagr
\end{math} %
are apparently different from the wave equation
\eqref{WE:Red.Theorem1.Eqn1}. Of course, here we can overcome this
problem by multiplying $\bbL$ with $\eps$, but this is not always
possible as the KdV  reduction in \S\ref{sec:ChainNew.kdv} shows.
\end{remark}
To conclude this section we discuss some aspects of
\eqref{WE:Red.Theorem1.Eqn1} which are closely related to the
justification problem. In particular, it comes out that
\eqref{WE:Red.Theorem1.Eqn1} can provide a reasonable macroscopic
model for the FPU chain only under additional assumptions and this
shows that the formal expansions from the reduction step truly need
to be justified rigorously. We introduce new variables $W=X_\MaTime$
and $R=X_\MaLagr$ and rewrite equation \eqref{WE:Red.Theorem1.Eqn1}
in the form
\begin{align}
\label{WE:Red.Theorem1.Eqn2} \MaTimeDer{R}-\MaLagrDer{W}=0,
\quad\MaTimeDer{W}-\MaLagrDer{\NNPot^\prime\at{R}}=0.
\end{align}
This is a first order system of macroscopic conservation laws with
characteristic speeds
$\la_\pm=\pm\sqrt{\NNPot^{\prime\prime}\at{R}}$ and is called the
$p$-system (with $p=-\NNPot^\prime$), see \cite{Daf00}. These
equations formally imply the conservation of energy, i.e., any
\emph{smooth} solution to \eqref{WE:Red.Theorem1.Eqn2} satisfies
$\MaTimeDer{E}-\MaLagrDer{}\at{W\NNPot^\prime\at{R}}=0$ with
$E=\tfrac{1}{2}W^2+\NNPot\at{R}$.
\par
Now suppose that $\NNPot$ is concave or, more general, restrict $R$
to the region of concavity of $\NNPot$. In this case, the system
\eqref{WE:Red.Theorem1.Eqn2} is elliptic and its initial value
problem is ill-posed. For this reason the microscopic system behaves
as follows: Even if we initialize the chain with data satisfying
$x_\MiLagr\at{0}=\eps^{-1}X_\ini\at{\eps\MiLagr}$ and
$\dot{x}_\MiLagr\at{0}=W_\ini\at{\eps\MiLagr}$, where $X_\ini$ and
$W_\ini$ are infinitely smooth macroscopic functions, the atomic
data will immediately start to oscillate on the microscopic scale,
see \cite{Her05,DH07} for numerical simulations. Therefore, the
two-scale ansatz \emph{cannot} be satisfied for any $\MaTime>0$ and
we conclude that any rigorous justification of
\eqref{WE:Red.Theorem1.Eqn1} must exclude non-convex $\NNPot$.
\par
Next suppose that $\NNPot$ is strictly convex, which provides the
strict hyperbolicity of the $p$-system, and assume for simplicity
that $\NNPot^\prime$ is also strictly convex, so that all
eigenvalues are genuinely nonlinear. However, even in this case
there exist limitations for the validity of
\eqref{WE:Red.Theorem1.Eqn2}. In fact, it is well known that the
nonlinearity of $\NNPot$  causes the following generic situation:
Given smooth initial data for \eqref{WE:Red.Theorem1.Eqn2}, there
exists a critical time $0<\MaTime_0<\infty$ at which the first
macroscopic shock is formed. In particular, there exists a smooth
solution for $0<\MaTime<\MaTime_0$, and for these times we can
expect that \eqref{WE:MSAnsatz} provides an approximate solution of
the microscopic system. However, for $\MaTime>\MaTime_0$ the
macroscopic energy $E$ is not conserved anymore and thus the
$p$-system can not be related to the macroscopic dynamics of the
chain, since the chain conserves the energy exactly. This phenomenon
is usually called the \emph{shock problem} and appears analogously
in all zero dispersion limits, compare for instance the surveys in
\cite{Lax86,Lax91,LLV93}. For the FPU chain the macroscopic dynamics
beyond the shock can be understood by Whitham's modulation theory
with periodic travelling waves, see
\cite{FV99,DHM06,AMSMSP:DHR,DH07} and \cite{HFM81,DM98,El05} for the
complete integrable Toda chain. Moreover, for harmonic lattices the
macroscopic limit under the hyperbolic scaling can be established
rigorously by means of  weak convergence methods (cf.\
\cite{Mie06,Mie07}). The transport of energies can be studied
via  Wigner-Husimi measures, see \cite{Mie06}.%
%
%
\subsection{From FPU to KdV}\label{sec:ChainNew.kdv}
%
%
To derive the KdV equation for FPU chains we rely on the two-scale
ansatz
\begin{align}
\label{kdv:MSAnsatz}%
x\pair{\MiTime}{\MiLagrC}=\eps{X}\pair{\eps^3\MiTime}
{\eps\MiLagrC+\eps{c}\MiTime},
\end{align}
which is related to a moving frame with drift velocity $c$. Example
\ref{ST:kdv.Example} provides
\begin{align}
\label{kdv:MSTrafo.Eqn2}%
\bbT_\vel\pair{\eps}{\MiTime}: %
\pair{X}{X_\MaTime}\at{\MaLagr}\mapsto
\pair{x}{x_\MiTime}\at{\MiLagrC}=\pair{\eps{X}}
{\eps^4{X_\MaTime}+\eps^2c{X}_\MaLagr}\at{\eps\MiLagrC+\eps{c}\MiTime}
\end{align}%
with $P$ and $Q$ as in \S\ref{sec:ChainNew.WE}.
%
%
%
The transformation $\bbT_\vel$ is defined only on
$H^1\at{\Rset;\,\d\MaLagr}{\times}L^2\at{\Rset;\,\d\MaLagr}$, a
dense subset of $TP$, but in order to focus on the basic features of
the reduction procedure we do not stress out this explicitly.
\begin{lemma}
\label{kdv:Lemma1}
Under the exact two-scale transformation \eqref{kdv:MSTrafo.Eqn2}
the energies $\calK$ and $\calV$ transform into their
$\eps$-parametrized counterparts
\begin{align}
\label{kdv:Lemma1.Eqn1} %
\bbK\triple{\eps}{X}{X_\MaTime}=\eps^{3}
\int\mhintlimits_{\Rset}\tfrac{1}{2}\at{\eps^2{X}_\MaTime+
{c}\,X_\MaLagr}^2\d\MaLagr,\quad%
\bbV\pair{\eps}{X}=\eps^{-1}
\int\mhintlimits_{\Rset}\NNPot\at{\eps\nabla^+_\eps{X}}\d\MaLagr
\end{align}
and the matrix $\bsSi$ corresponding to the symplectic form $\bssi$
is given by $\bsSi=\eps^5\bsSi_1+\eps^7\bsSi_2$ with
\begin{align}
\label{kdv:Lemma1.Eqn3} %
\bsSi_1=
\begin{pmatrix}%
-2\,c\,\MaLagrDer{}&0\\0&0
\end{pmatrix}%
,\quad\bsSi_2=
\begin{pmatrix}%
0&-1\\1&0
\end{pmatrix}%
.%
\end{align}
Moreover, due to the time dependence of the two-scale
transformation, $\bbH$ differs from $\bbE$ and satisfies
\begin{math}
\bbH=\bbE+\bbI
\end{math}
with
\begin{align}
\label{kdv:Lemma1.Eqn2} %
\bbI\triple{\eps}{X}{X_\MaTime}= - \eps^3c\int\mhintlimits_{\Rset}
\at{\eps^2\,X_\MaTime+cX_\MaLagr}X_\MaLagr\,\d\MaLagr.
\end{align}
\end{lemma}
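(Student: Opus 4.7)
The plan is to recognize the two-scale transformation \eqref{kdv:MSTrafo.Eqn2} as the composition $\bbT_\vel\pair{\eps}{\MiTime}=\calM_\vel\at{-\MiTime}\circ\bbS_\vel\at\eps$, where $\bbS_\vel$ is the scaling transformation from Example~\ref{ST:Example2} with $\al=\ga=1$, $\be=3$, and $\calM_\vel\at{-\MiTime}$ is the inverse of the moving-frame transformation of Example~\ref{ST:kdv.Example} (with generator $\calA_\con=-c\,\MiLagrCDer{}$). With this decomposition in place, Theorems~\ref{MF:Lemma2} and~\ref{ST:Theo1} reduce the lemma to direct substitution plus a single integration by parts.

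The formulas \eqref{kdv:Lemma1.Eqn1} for $\bbK$ and $\bbV$ follow by inserting \eqref{kdv:MSTrafo.Eqn2} into the microscopic energies \eqref{GS:Chain.energies} (with $k=0$), applying the change of variables $\MaLagr=\eps\at{\MiLagrC+c\MiTime}$, $\d\MiLagrC=\eps^{-1}\d\MaLagr$, and using $\at{\nabla^+_{1,0}x}\pair{\MiTime}{\MiLagrC}=\eps\,\at{\nabla^+_\eps X}\at{\MaLagr}$. For the identity $\bbH=\bbE+\bbI$, Theorem~\ref{MF:Lemma2} applied to the moving-frame step yields an intermediate Hamiltonian $\tilde{\calH}=\tilde{\calE}+\tilde{\calI}$ on $TQ$, where the Noether integral $\calI=-c\,\calI_{\mathrm{space}}=-c\int_\Rset x_\MiTime\,x_\MiLagrC\,\d\MiLagrC$ is associated with the spatial-translation symmetry of the embedded chain (cf.~Example~\ref{ST:kdv.Example}). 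Theorem~\ref{ST:Theo1} then transports this to $\bbH=\bbE+\bbI$ on $TP$ with $\bbI=\calI\circ\bbT_\vel$, and substituting $x_\MiTime=\eps^4X_\MaTime+\eps^2cX_\MaLagr$ and $x_\MiLagrC=\eps^2X_\MaLagr$ (both read off directly from \eqref{kdv:MSAnsatz}) into $\calI$ produces \eqref{kdv:Lemma1.Eqn2}.

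The main step, and the only subtle point in the proof, is the computation of $\bssi$. Combining Theorems~\ref{MF:Lemma2} and~\ref{ST:Theo1}, and using that the spatial-translation part of the moving frame is a symmetry that preserves $\sigma_\mtrc$, one obtains $\bssi=\eps^3\,\bbT_\vel^\ast\sigma_\mtrc$, where the prefactor $\eps^3=\tfrac{\d\MaTime}{\d\MiTime}$ comes from the time reparametrization. For tangent vectors $\dot Z=\pair{\dot X}{\dot X_\MaTime}$ and $\acute Z$, pointwise substitution together with $\d\MiLagrC=\eps^{-1}\d\MaLagr$ yields
\begin{align*}
\bbT_\vel^\ast\sigma_\mtrc\pair{\dot Z}{\acute Z}
=\eps^4\int_\Rset\nnat{\acute X_\MaTime\dot X-\dot X_\MaTime\acute X}\,\d\MaLagr
+\eps^2 c\int_\Rset\nnat{\acute X_\MaLagr\dot X-\dot X_\MaLagr\acute X}\,\d\MaLagr.
\end{align*}
The first integral coincides with $\sskp{\bsSi_2\dot Z}{\acute Z}$ immediately. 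The crux of the proof is to rewrite the second integral by integration by parts, exploiting the skew-adjointness of $\MaLagrDer{}$: $\int\nnat{\acute X_\MaLagr\dot X-\dot X_\MaLagr\acute X}\,\d\MaLagr=-2\int\dot X_\MaLagr\,\acute X\,\d\MaLagr$, so that after multiplication by $\eps^2 c$ this contribution equals $\eps^2\sskp{\bsSi_1\dot Z}{\acute Z}$. Multiplication of $\eps^4\bsSi_2+\eps^2\bsSi_1$ by the overall factor $\eps^3$ then yields \eqref{kdv:Lemma1.Eqn3}.
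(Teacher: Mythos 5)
Your proposal is correct and takes essentially the same route as the paper: direct substitution with the change of variables for $\bbK$, $\bbV$ and $\bbI$, Theorems \ref{MF:Lemma2} and \ref{ST:Theo1} for $\bbH=\bbE+\bbI$, and the relation $\bssi=\tfrac{\d\MaTime}{\d\MiTime}\,\nnat{\bbT_\vel}^\ast\si_\mtrc$ for the symplectic form. The only cosmetic difference is that the paper extracts $\bsSi$ by applying Lemma \ref{TST:Lemma.SymTrafo} and Remark \ref{TST:Remark.SymTrafo} to the operator matrix $\bsT$ (where the skew-adjointness of $\MaLagrDer{}$ is packaged into $\bsT^\prime$), whereas you unfold that same computation as an explicit integration by parts on the bilinear form; both yield $\eps^5\bsSi_1+\eps^7\bsSi_2$.
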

\begin{proof}
For the proof of \eqref{kdv:Lemma1.Eqn1} and \eqref{kdv:Lemma1.Eqn2}
we insert the ansatz \eqref{kdv:MSAnsatz} into the definitions of
$\calV$, $\calK$ and $\calI$, cf. Formula \eqref{GS:Chain.energies}
and Example \ref{ST:kdv.Example}, and replace
$\eps^3\MiTime+\eps{c}\MiLagrC$ by $\MaLagr$ in the arising
integrals. Moreover, the identity $\bbH=\bbE+\bbI$ is provided by
Theorems \ref{MF:Lemma2} and \ref{ST:Theo1}. Finally, the linear
two-scale transformation $\bbTvp\pair{\eps}{\MiTime}$ can be
identified with
\begin{math}
\bsS\circ\bsT,
\end{math} %
where $\bsS:TP\rra{TQ}$ is given by
\begin{math}
\at{\bsS\,Z}\at{\MiLagrC}=Z\at{\eps\MiLagrC+\eps{c}\MiTime}
\end{math} %
and $\bsT$ abbreviates the operator-valued matrix
\begin{align*}
\bsT=
\begin{pmatrix}
\eps&0\\\eps^2c\MaLagrDer{}&\eps^4
\end{pmatrix}.
\end{align*}
Due to Lemma \ref{TST:Lemma.SymTrafo} and Remark
\ref{TST:Remark.SymTrafo} we find
\begin{math}
\bsSi=\frac{\d\MaTime}{\d\MiTime}\,
\frac{\d\MiLagrC}{\d\MaLagr}\,\bsT^\ast\,\bsSi_\mtrc\,{\bsT},
\end{math}
which yields \eqref{kdv:Lemma1.Eqn3} after a short computation.
\end{proof}
%
%
%
%
\paragraph{Leading order reduction}
%
At first we expand the various energies with respect to $\eps$ up to
$\DO{\eps^6}$. To this end, we define $v_i=\NNPot^{(i)}\at{0}$
so that the Taylor polynomial of $\NNPot$ reads
$\NNPot\at{x}=\tfrac{v_2}{2}x^2+\tfrac{v_3}{6}x^3+\mathrm{h.o.t.}$
\begin{lemma}
\label{kdv:Lemma2} %
The transformed energies $\bbK$, $\bbV$ and $\bbI$ satisfy
\begin{align*}
\bbK= \eps^3\bbK_0+\eps^5\bbK_1+\DO{\eps^6},\quad \bbV=
\eps^3\bbV_0+\eps^5\bbV_1+\DO{\eps^6},\quad
\bbI=\eps^3\bbI_0+\eps^5\bbI_1+\DO{\eps^6},
\end{align*}
where
\begin{align*}
\bbK_0\at{X}&=\tfrac{c^2}{2}\int\mhintlimits_{\Rset} X_\MaLagr^2\,
\d\MaLagr, \quad \bbK_1\pair{X}{X_\MaTime}=c\int\mhintlimits_{\Rset}
X_\MaTime{}X_\MaLagr\,\d\MaLagr,\quad\\
\bbV_0\at{X}&=\tfrac{v_2}{2}\int\mhintlimits_{\Rset}
X_\MaLagr^2\, \d\MaLagr, \quad%
\bbV_1\at{X}=-\tfrac{v_2}{24}\int\mhintlimits_{\Rset}
{X}_{\MaLagr\MaLagr}^2\,\d\MaLagr+ %
\tfrac{v_3}{6}\int\mhintlimits_{\Rset}
{X}_{\MaLagr}^3\,\d\MaLagr,\\
\bbI_0\at{X}&=-c^2\int\mhintlimits_{\Rset}
X_\MaLagr^2\,\d\MaLagr,\quad
\bbI_1\pair{X}{X_\MaTime}=-c\int\mhintlimits_{\Rset}
X_\MaTime{}X_\MaLagr\,\d\MaLagr.
\end{align*}
\end{lemma}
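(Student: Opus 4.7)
My plan is to treat the three energies separately, since they differ in character: $\bbK$ and $\bbI$ are manifest polynomials in $\eps$ (coming from squaring and multiplying the explicit expressions in Lemma \ref{kdv:Lemma1}), whereas $\bbV$ requires genuine Taylor expansions both of the discrete difference operator $\nabla^+_\eps$ and of the pair potential $\NNPot$.

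For $\bbK$ and $\bbI$, I would simply expand the integrands from \eqref{kdv:Lemma1.Eqn1} and \eqref{kdv:Lemma1.Eqn2}. The identity
\begin{align*}
\tfrac{1}{2}\at{\eps^2 X_\MaTime+c\,X_\MaLagr}^2
= \tfrac{c^2}{2}X_\MaLagr^2+\eps^2 c\,X_\MaTime X_\MaLagr+\tfrac{\eps^4}{2}X_\MaTime^2
\end{align*}
yields $\bbK_0$ and $\bbK_1$, with the remaining $\eps^4 X_\MaTime^2$-term producing a contribution of order $\eps^{3+4}=\eps^7=\DO{\eps^6}$. In the same way, multiplying out $\nnat{\eps^2 X_\MaTime+cX_\MaLagr}X_\MaLagr$ directly gives $\bbI_0$ and $\bbI_1$ (in fact exactly, with no higher-order remainder).

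For $\bbV$, the key step is to Taylor-expand $\eps\nabla^+_\eps X$ in powers of $\eps$. Writing $u:=\eps\nabla^+_\eps X$, the expansion
\begin{align*}
u = \eps^2 X_\MaLagr+\tfrac{\eps^3}{2}X_{\MaLagr\MaLagr}+\tfrac{\eps^4}{6}X_{\MaLagr\MaLagr\MaLagr}+\DO{\eps^5}
\end{align*}
has to be inserted into $\NNPot(u)=\tfrac{v_2}{2}u^2+\tfrac{v_3}{6}u^3+\DO{u^4}$ and the result collected up to order $\eps^7$ (so that $\eps^{-1}\NNPot(u)$ is known to order $\eps^6$). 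Only two orders contribute non-trivially: the $\eps^4$ term of $\tfrac{v_2}{2}u^2$ gives $\bbV_0$, while at order $\eps^6$ one gets contributions both from $\tfrac{v_2}{2}u^2$ (coming from the cross term $X_\MaLagr X_{\MaLagr\MaLagr\MaLagr}$ and the square $X_{\MaLagr\MaLagr}^2$) and from $\tfrac{v_3}{6}u^3$ (namely $X_\MaLagr^3$).

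The main obstacle, and the step that produces the nontrivial coefficient $-\tfrac{v_2}{24}$, is the integration-by-parts reduction: the odd-order cross term $\int X_\MaLagr X_{\MaLagr\MaLagr}\,\d\MaLagr = \tfrac{1}{2}\int\at{X_\MaLagr^2}_\MaLagr\,\d\MaLagr$ vanishes (so the putative $\eps^4$ correction to $\bbV$ disappears), while $\int X_\MaLagr X_{\MaLagr\MaLagr\MaLagr}\,\d\MaLagr = -\int X_{\MaLagr\MaLagr}^2\,\d\MaLagr$. Combining this with $\tfrac{1}{4}\int X_{\MaLagr\MaLagr}^2$ from the squared term yields the coefficient $\at{\tfrac14-\tfrac13}\tfrac{v_2}{2}=-\tfrac{v_2}{24}$ in front of $\int X_{\MaLagr\MaLagr}^2\,\d\MaLagr$. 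All remaining contributions are of order $\eps^6$ and thus enter the error. Throughout, decay of $X$ at $\pm\infty$ (as encoded in the assumed Sobolev regularity $X\in H^1$ plus smoothness needed for the higher derivatives to make sense) is used to drop boundary terms in the integrations by parts.
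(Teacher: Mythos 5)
Your proposal is correct and follows essentially the same route as the paper: direct expansion of the explicit quadratic integrands for $\bbK$ and $\bbI$, and for $\bbV$ the Taylor expansion of $\eps\nabla^+_\eps X$ inserted into $\NNPot(u)=\tfrac{v_2}{2}u^2+\tfrac{v_3}{6}u^3+\DO{u^4}$ followed by the integrations by parts $\int_\Rset X_\MaLagr X_{\MaLagr\MaLagr}\,\d\MaLagr=0$ and $\int_\Rset X_\MaLagr X_{\MaLagr\MaLagr\MaLagr}\,\d\MaLagr=-\int_\Rset X_{\MaLagr\MaLagr}^2\,\d\MaLagr$, giving the coefficient $\tfrac{v_2}{2}(\tfrac14-\tfrac13)=-\tfrac{v_2}{24}$. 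The only addition is your explicit remark on decay at infinity for the boundary terms, which the paper leaves implicit.
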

\begin{proof} The expansions for $\bbK$ and $\bbI$ follow
immediately from Lemma \ref{kdv:Lemma1}. To prove the remaining
assertions we start with
\begin{math}
\eps\nabla^+_\eps{X}=\eps^2{X}_\MaLagr+
\eps^3\tfrac{1}{2}{X}_{\MaLagr\MaLagr}+
\eps^4\tfrac{1}{6}{X}_{\MaLagr\MaLagr\MaLagr} +\DO{\eps^5}
\end{math} %
and obtain
\begin{align*}
\NNPot\at{\eps\nabla^+_\eps{X}}&= \tfrac{v_2}{2}
\at{\eps\nabla^+_\eps{X}}^2 +
\tfrac{v_3}{6}\at{\eps\nabla^+_\eps{X}}^3+\DO{|\eps\nabla^+_\eps{X}|^4}
\\
&= \tfrac{v_2}{2}
\at{\eps^4X_\MaLagr^2+\eps^5{}X_\MaLagr{X}_{\MaLagr\MaLagr}+
\eps^6\tfrac{1}{4}X_{\MaLagr\MaLagr}^2+
\eps^6\tfrac{1}{3}X_\MaLagr{X}_{\MaLagr\MaLagr\MaLagr}}
+\eps^6\tfrac{v_3}{6}X_\MaLagr^3+\DO{\eps^7}.
\end{align*}
We insert this expression into the formula for $\bbV\pair{\eps}{X}$
and due to $\int_\Rset{}X_\MaLagr{X}_{\MaLagr\MaLagr}\d\MaLagr=0$
and $\int_\Rset{}X_\MaLagr{X}_{\MaLagr\MaLagr\MaLagr}\d\MaLagr=
-\int_\Rset{}{X}_{\MaLagr\MaLagr}^2\d\MaLagr$, we obtain the
asserted expansion for $\bbV$.
\end{proof}
In the next step we can read-off the leading order terms of
$\bbL=\bbK-\bbV$, $\bbH=\bbK+\bbV+\bbI$ and $\bsSi$. However, the
order of $\eps$ at which we find the reduced Lagrangian and
Hamiltonian structures depends on the choice of the moving-frame
speed $c$. Let us start with the case $c^2\neq{}v_2$. Under this
assumption the leading order terms correspond to $\eps^3$. More
precisely, we obtain $\bssi^\red=0$ and
\begin{align*} \bbL^\red\at{X}=
-\bbH^\red\at{X}=\bbK_0\at{X}-\bbV_0\at{X}=\tfrac{1}{2}\at{c^2-v_2}
\int\mhintlimits_\Rset{X_\MaLagr^2}\,\d\MaLagr.
\end{align*}
In particular, both the reduced Lagrangian and Hamiltonian equations
turn out to be equivalent to
\begin{math}
X_{\MaLagr\MaLagr}=0
\end{math}
and have no non-trivial solutions at all. Thus, we assume
\begin{align}
\label{kdv:Condition} %
c^2=v_2=\NNPot^{\prime\prime}\at{0},
\end{align}
i.e., the moving-frame speed equals the sound velocity of the
linearized FPU chain. In this case we find
$\bbK_0=\bbV_0=-\tfrac{1}{2}\bbI_0$ and this leads to
\emph{cancelations} in $\bbL$ and $\bbH$. Consequently, the leading
order terms in the Lagrangian and Hamiltonian structure correspond
to $\eps^5$ and we end up with the following macroscopic model:
\begin{theorem}
%
%
With \eqref{kdv:Condition} the reduced Lagrangian and Hamiltonian
structures are given by
\begin{align}
\label{kdv:RedTheorem.Eqn1} \bbL^\red\pair{X}{X_\MaTime}=
\bbK_1\pair{X}{X_\MaTime}-\bbV_1\at{X},\quad \bbH^\red\at{X}=
\bbV_1\at{X},\quad\bsSi^\red=\bsSi_1.
\end{align}
In particular, both the reduced Lagrangian and Hamiltonian equations
are equivalent to
\begin{align}
\label{kdv:Eula2}%
2\,c{X}_{\MaTime\MaLagr}-
\tfrac{1}{12}\,v_2\,X_{\MaLagr\MaLagr\MaLagr\MaLagr}-
v_3X_\MaLagr{X}_{\MaLagr\MaLagr}=0,
\end{align}
which is a KdV equation for $X_\MaLagr$.
\end{theorem}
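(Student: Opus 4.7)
The plan is to build on Lemma \ref{kdv:Lemma2} together with the Principle of Consistent Expansions (Principle \ref{POCE}), which guarantees that the expansion of $\bbL$ in powers of $\eps$ automatically induces consistent expansions for the associated Hamiltonian structure. The subtlety is that here $\bbH\neq\bbE$ since the two-scale transformation involves a moving frame, but Theorem \ref{MF:Lemma2} combined with Theorem \ref{ST:Theo1} provides $\bbH=\bbE+\bbI=\bbK+\bbV+\bbI$, so I can simply combine the expansions of $\bbK$, $\bbV$, and $\bbI$ given by Lemma \ref{kdv:Lemma2}.

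First I would identify what happens at order $\eps^3$ under the condition \eqref{kdv:Condition}. A direct inspection of the formulas in Lemma \ref{kdv:Lemma2} shows $\bbK_0=\bbV_0=\tfrac{c^2}{2}\int_\Rset X_\MaLagr^2\,\d\MaLagr$ and $\bbI_0=-c^2\int_\Rset X_\MaLagr^2\,\d\MaLagr$, so both
\begin{align*}
\bbK_0-\bbV_0=0\quad\text{and}\quad \bbK_0+\bbV_0+\bbI_0=0
\end{align*}
hold. Hence the $\eps^3$ terms cancel in both $\bbL=\bbK-\bbV$ and $\bbH=\bbK+\bbV+\bbI$, and the leading nontrivial order becomes $\eps^5$. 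At that order I read off $\bbL^\red=\bbK_1-\bbV_1$ for the Lagrangian. For the Hamiltonian, the explicit formulas yield $\bbK_1+\bbI_1=c\int X_\MaTime X_\MaLagr\,\d\MaLagr - c\int X_\MaTime X_\MaLagr\,\d\MaLagr=0$, so $\bbH^\red=\bbV_1$. The corresponding symplectic form is extracted from Lemma \ref{kdv:Lemma1}: since $\bsSi=\eps^5\bsSi_1+\eps^7\bsSi_2$, the leading part is $\bsSi^\red=\bsSi_1$. Principle \ref{POCE} guarantees that $(\bbH^\red,\bsSi^\red)$ is precisely the Hamiltonian structure associated to $\bbL^\red$, so no separate consistency check is needed. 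This fits into Case B of \S\ref{sec:Reductions}: $\bbL^\red$ is linear in $X_\MaTime$ (through $\bbK_1$), which explains both why $\bbH^\red$ is independent of $X_\MaTime$ and why $\bsSi_1$ is degenerate with vanishing bottom row.

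It remains to derive \eqref{kdv:Eula2} from either formulation. For the Lagrangian side, I compute the Euler--Lagrange equation of $\bbL^\red$. The fiber derivative of $\bbK_1$ is $cX_\MaLagr$, giving $\tfrac{\d}{\d\MaTime}(cX_\MaLagr)=cX_{\MaTime\MaLagr}$ on the left; on the right, $\partial_X\bbK_1$ contributes $-cX_{\MaTime\MaLagr}$ (after integration by parts in $\MaLagr$), while $\partial_X\bbV_1$ contributes $-\tfrac{v_2}{12}X_{\MaLagr\MaLagr\MaLagr\MaLagr}-v_3\,X_\MaLagr X_{\MaLagr\MaLagr}$ via two integrations by parts. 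Collecting terms yields exactly \eqref{kdv:Eula2}. For the Hamiltonian side, the equation $\bsSi_1\,(X_\MaTime,X_{\MaTime\MaTime})^\top=(\partial_X\bbV_1,\partial_{X_\MaTime}\bbV_1)^\top$ reduces, because of the zero row and column of $\bsSi_1$, to the single scalar identity $-2c\,X_{\MaTime\MaLagr}=-\tfrac{v_2}{12}X_{\MaLagr\MaLagr\MaLagr\MaLagr}-v_3 X_\MaLagr X_{\MaLagr\MaLagr}$, which again is \eqref{kdv:Eula2}.

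The only real issue in this argument is the order-$\eps^3$ cancellation, which is the whole reason for imposing \eqref{kdv:Condition}; everything else is either a direct application of the consistency principle or a routine variational computation. I would organize the proof so that the cancellation identities $\bbK_0=\bbV_0=-\tfrac12\bbI_0$ and $\bbK_1+\bbI_1=0$ are highlighted as the key algebraic inputs, and I would write the Hamiltonian equation for the degenerate form $\bsSi_1$ explicitly to make clear why the second component collapses to the trivial identity $0=0$ while the first component recovers the KdV equation for $X_\MaLagr$.
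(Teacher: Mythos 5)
Your proposal is correct and follows essentially the same route as the paper: the theorem is proved there by reading off the cancellations $\bbK_0=\bbV_0=-\tfrac12\bbI_0$ and $\bbK_1+\bbI_1=0$ from Lemma \ref{kdv:Lemma2} under \eqref{kdv:Condition}, invoking the consistency of the expansions on $TP$, and obtaining \eqref{kdv:Eula2} by the same direct variational computation. Your writeup merely spells out in more detail what the paper compresses into one line.
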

\begin{proof}
The identities \eqref{kdv:RedTheorem.Eqn1} can be read-off from
Lemma \ref{kdv:Lemma2} and \eqref{kdv:Eula2} follows by a direct
calculation.
\end{proof}
The KdV reduction with \eqref{kdv:Condition} is an example for Case
B from Section \S\ref{sec:Reductions}, i.e., the reduced Hamiltonian
structure lives on $P$ and not on $TP$. Moreover, the term $\bbI_0$,
which produces the cancelation in $\bbL_0$ and $\bbH_0$, equals up
to the sign the macroscopic integral of motion
$\bbI^\red\at{X}=\int\mhintlimits_{\Rset}X_\MaLagr^2\d\MaLagr$,
associated to the invariance under shift in the $y$-direction.
\begin{remark}
As before, the formal two-scale reduction relies on the Hamiltonian
structure on $TP$ but fails if we use the canonical structure on
$T^\ast{P}$. Even worse, here we cannot overcome this problem by a
simple rescaling of $\bbL$. To understand this, we consider the
rescaled Lagrangian (for $c^2=v_2$)
\begin{align*}
\tilde{\bbL}\triple{\eps}{X}{X_\MaTime}&=
\eps^{-5}\bbL\triple{\eps}{X}{X_\MaTime} %
=
\bbK_1\pair{X}{X_\MaTime}+\eps^2\bbK_2\at{X_\MaTime}
+\tilde{\bbV}\pair{\eps}{X},
\end{align*}
where \begin{align*}\bbK_2\at{X_\MaTime}=c\!\int\mhintlimits_{\Rset}
X_\MaTime{}X_\MaLagr\,\d\MaLagr,\quad
\tilde{\bbV}\pair{\eps}{X}=\eps^{-5}
\at{\bbV\pair{\eps}{X}-\eps^3\bbV_0\at{X}}\approx\bbV_1\at{X}=\DO{1}.
\end{align*}
The canonical momenta are given by
$\tilde{\Pi}=c{X_\MaLagr}+\eps^2{X}_\MaTime$ and computing the
associated Hamiltonian on $T^\ast{P}$ we find
\begin{align*}
\tilde{\mhol{H}}\triple{\eps}{X}{\Pi}&=
\eps^{-2}\tilde{\mhol{H}}_0\pair{X}{\Pi}\;+\;\tilde{\bbV}\pair{\eps}{X},\quad
\tilde{\mhol{H}}_0\pair{X}{\Pi}=
\int\mhintlimits_{\Rset}\tfrac{1}{2}\at{\Pi^2-c^2\,X_\MaLagr^2}\,
\d\MaLagr.
\end{align*}
In particular, the canonical equations corresponding to the leading
order Hamiltonian $\tilde{\mhol{\bbH}}_0$ do not equal
\eqref{kdv:Eula2}.
\end{remark}
Finally, we mention that rigorous justification results for the KdV
reduction can be found in \cite{SW00,FP99}. However, these results
do not use the reduced Lagrangian or Hamiltonian structures, but
work on the equation of motion directly.
%
%
%
%
\subsection{From KG to nlS}\label{sec:ChainNew.nls}
%
We start with the two-scale ansatz \eqref{Intro:nls.MSAnsatz} for a
modulated pulse in the KG chain \eqref{Intro:KGChain} with
$\alpha=1$ and aim to show that the complex amplitude $A$ satisfies
the nlS equation. Recall that the plane waves appearing in
\eqref{Intro:nls.MSAnsatz} model a microstructure of harmonic
oscillations, and thus we can regard the nlS equation as a
macroscopic modulation equation.
\par
In contrast to the previous examples, here the two-scale ansatz does
not provide immediately an exact two-scale transformation, but we
can setup the problem as follows: We embed the discrete lattice
$\Zset$ into the cylinder $\Rset{\times}T^1$ and identify each
microscopic configuration with a function
$x\in{Q}=L^2\at{\Rset{\times}T^1;\,\d\MiLagrC\d\Phase}$ depending on
the microscopic continuous space variable $\MiLagrC$ and a periodic
phase variable $\Phase\in{T}^1\cong\ccinterval{0}{2\pi}$. Moreover,
in accordance to the scaling, we choose
$P=L^2\at{\Rset{\times}T^1;\,\d\MaLagr\d\Phase}$ and make the
two-scale ansatz
\begin{align}
\label{nls:MSAnsatz} x\triple{\MiTime}{\MiLagrC}{\Phase}
=\eps{X}\triple{\eps^2\MiTime} {\eps\MiLagrC-\eps{}c\MiTime}
{\Phase+\om\MiTime+\theta\MiLagrC},
\end{align}
which gives rise to the inverse two-scale transformation
\begin{align}
\label{nls:MSTrafo}
\Bat{\bbT_\vel\pair{\eps}{\MiTime}\pair{X}{X_\MaTime}}
\pair{\MiLagrC}{\Phase}=\pair{\eps{X}}
{\eps^3{X_\MaTime}-\eps^2c{X}_\MaLagr+\eps\om{X}_\Phase}
\pair{\eps\MiLagrC-\eps{c}\MiTime}
{\Phase+\om\MiTime+\theta\MiLagrC}.
\end{align}
In this section we show that this transformation implies both the
particular form of the microstructure and the {nlS} equation.
\begin{remark}
\label{nls:Remark:MSTrafo}
From \eqref{nls:MSAnsatz} we read-off the identity
\begin{math}
\bbT_\con\pair{\eps}{\MiTime}= \calT_\con^{-1}
\circ\calM_\con\at{-\MiTime} \circ{\bbS_\con}\at\eps
\end{math}, %
where
\begin{math}
\at{\bbS_\con\at\eps{X}}\pair{\MiLagrC}{\Phase}=
\eps{}X\pair{\eps\MiLagrC}{\Phase}
\end{math} %
denotes the inverse scaling transformation. Moreover,
\begin{math}
\at{\calM_\con\at\MiTime\,x}\pair{\MiLagrC}{\Phase}= x
\pair{\MiLagrC+c\,\MiTime}{\Phase-\om\MiTime}
\end{math}
is a moving frame transformation with associated integral of motion
\begin{math}
\mathcal{I}\pair{x}{x_\MiTime}=
\int_{\Rset{\times}T^1}x_\MiTime\at{c{x}_\MiLagrC-\om{x}_\Phase}
\d\MiLagrC\d\Phase
\end{math} %
and
\begin{math}
\nnat{\calT_\con\,x}
\pair{\MiLagrC}{\Phase}=x\pair{\MiLagrC}{\Phase-\theta\MiLagrC}
\end{math} %
corresponds to a weak symmetry transformation.
\end{remark}
\begin{lemma}
\label{nls:Lemma0}
Under \eqref{nls:MSTrafo} the transformed energies $\bbK$, $\bbV$
and $\bbI$ take the form
\begin{align*}%
\bbK\triple{\eps}{X}{X_\MaTime}&=
\eps^{-1}\int\mhintlimits_{\Rset{\times}T^1}
\tfrac{1}{2}\at{\eps^3{X}_\MaTime-
\eps^2{c}\,X_\MaLagr+\eps\om{X}_\Phase}^2\,\d\MaLagr\d\Phase,
\\%
\bbV\pair{\eps}{X}&= \eps^{-1}\int\mhintlimits_{\Rset{\times}T^1}
\at{-\eps^2\tfrac{1}{2}X\laplace_{\eps,\,\theta}X+
\OSPot\at{\eps{X}}}\,\d\MaLagr\d\Phase,
\\%
\bbI\triple{\eps}{X}{X_\MaTime}&=
\eps^{-1}\int\mhintlimits_{\Rset{\times}T^1} \at{\eps^3{X}_\MaTime-
\eps^2{c}\,X_\MaLagr+\eps\om{X}_\Phase} \at{
\eps^2{c}\,X_\MaLagr-\eps\om{X}_\Phase}\,\d\MaLagr\d\Phase,
\end{align*}
and we have $\bbL=\bbK-\bbV$, $\bbE=\bbK+\bbV$ and $\bbH=\bbE+\bbI$.
Moreover, the matrix $\bsSi$ corresponding to the symplectic form
$\bssi$ satisfies
\begin{math}
\bsSi=\eps^3\bsSi_2+\eps^4\bsSi_3+ \eps^5\bsSi_4,
\end{math}
with
\begin{align*}
\bsSi_2=
\begin{pmatrix}%
-2\,\om\,\PhaseDer{}&0\\0&0
\end{pmatrix}%
,\quad \bsSi_3=
\begin{pmatrix}%
2\,c\,\MaLagrDer{}&0\\0&0
\end{pmatrix}%
,\quad \bsSi_4=
\begin{pmatrix}%
0&-1\\1&0
\end{pmatrix},%
\end{align*}
and is non-degenerate due to $\bsSi_4$.
\end{lemma}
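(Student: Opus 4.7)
The plan is to verify each of the four parts of the lemma—the formulas for $\bbK$, $\bbV$, $\bbI$, and $\bsSi$—by pushing the microscopic energies and symplectic form through the factorization $\bbT_\con\pair{\eps}{\MiTime}=\calT_\con^{-1}\circ\calM_\con\at{-\MiTime}\circ\bbS_\con\at\eps$ given in Remark \ref{nls:Remark:MSTrafo}. The identities $\bbL=\bbK-\bbV$ and $\bbE=\bbK+\bbV$ are inherited directly from the embedded Lagrangian \eqref{GS:chain.Lagrangian}. For the nontrivial content I combine direct substitution of the ansatz \eqref{nls:MSAnsatz} (together with the change of variable $\MiLagrC\mapsto\MaLagr/\eps$ and the $\MiLagrC$-dependent translation $\Phase\mapsto\Phase+\om\MiTime+\theta\MiLagrC$, which jointly rescale $\d\MiLagrC\,\d\Phase$ by $\eps^{-1}$) with the abstract transformation rules of Theorems \ref{ST:Theo1} and \ref{MF:Lemma2} and Lemma \ref{TST:Lemma.SymTrafo}.

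For $\bbK$ the chain rule applied to the ansatz yields $x_\MiTime=\eps^3X_\MaTime-\eps^2cX_\MaLagr+\eps\om X_\Phase$; plugging this into $\calK$ and absorbing the Jacobian $\eps^{-1}$ gives the stated form. For $\bbV$ I use $\NNPot(r)=\tfrac12r^2$ (since $\alpha=1$ in the KG chain); the crucial step is to identify the pullback of the discrete operator $\nabla^+_{1,0}$. Shifting $\MiLagrC\mapsto\MiLagrC+1$ in the ansatz advances $\MaLagr\mapsto\MaLagr+\eps$ and—because the weak symmetry $\calT_\con^{-1}$ encodes precisely the $\theta\MiLagrC$ part of the phase—also advances $\Phase\mapsto\Phase+\theta$, so that $\nabla^+_{1,0}x=\eps\,\nabla^+_{\eps,\theta}X$. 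The adjoint identities of Remark \ref{GS:DiscreteNablas} then convert $\int(\nabla^+_{\eps,\theta}X)^2$ into $-\int X\laplace_{\eps,\theta}X$, while the onsite term pulls back trivially to $\OSPot\at{\eps X}$.

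For the Hamiltonian identity $\bbH=\bbE+\bbI$, the scaling step is governed by Theorem \ref{ST:Theo1} (which preserves $\calH=\calE$ up to reparametrization), the moving-frame step by Theorem \ref{MF:Lemma2} (which contributes the integral of motion as an additive term), and the outer weak symmetry $\calT_\con^{-1}$ preserves Hamiltonian and symplectic form in the pullback sense by Remark \ref{SymmTrafo.Remark1}; composing these three steps yields $\bbI$ in the form stated, and as a consistency check one may recover the same expression by a direct Legendre transform of $\bbL=\bbK-\bbV$. For $\bsSi$, I factorize $\bbT_\vel\pair{\eps}{\MiTime}=\bsS\circ\bsT$, where $\bsS=\tilde\bsS\times\tilde\bsS$ with $\tilde\bsS:P\to Q$ the pointwise substitution $(\tilde\bsS X)\pair{\MiLagrC}{\Phase}=X\pair{\eps\MiLagrC-\eps c\MiTime}{\Phase+\om\MiTime+\theta\MiLagrC}$ (a scaled isometry with exponent $\mu=-1$), and $\bsT:TP\to TP$ is the operator-valued matrix with entries $\bsT_{11}=\eps$, $\bsT_{12}=0$, $\bsT_{21}=-\eps^2c\MaLagrDer{}+\eps\om\PhaseDer{}$, $\bsT_{22}=\eps^3$. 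Lemma \ref{TST:Lemma.SymTrafo} and Remark \ref{TST:Remark.SymTrafo} with $\d\MaTime/\d\MiTime=\eps^2$ then give $\bsSi=\eps\,\bsT^\prime\bsSi_\mtrc\bsT$, and a routine matrix multiplication—using $(\MaLagrDer{})^\ast=-\MaLagrDer{}$ and analogously for $\PhaseDer{}$—produces the stated three-term expansion; nondegeneracy of $\bssi$ is immediate from the $\eps^5\bsSi_4$ contribution, which has full rank. The main bookkeeping obstacle throughout is tracking the $\theta$-shift produced by $\calT_\con^{-1}$, which simultaneously modifies the effective difference operator in $\bbV$ and the off-diagonal entry of $\bsT$; once this is handled consistently, everything else reduces to algebra and changes of variables.
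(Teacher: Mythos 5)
Your proposal is correct and follows essentially the same route as the paper: compute $\bbK$, $\bbV$, $\bbI$ by direct substitution of the ansatz with the Jacobian $\eps^{-1}$, obtain $\bbH=\bbE+\bbI$ by composing Theorem \ref{LT:Theo}/Remark \ref{SymmTrafo.Remark1}, Theorem \ref{MF:Lemma2} and Theorem \ref{ST:Theo1} along the factorization of Remark \ref{nls:Remark:MSTrafo}, and compute $\bsSi=\eps\,\bsT^\prime\bsSi_\mtrc\bsT$ from Lemma \ref{TST:Lemma.SymTrafo} with exactly the same operator-valued matrix $\bsT$. All the bookkeeping (the $\theta$-shift turning $\laplace_{1,0}$ into $\laplace_{\eps,\theta}$, $\mu=-1$, $\d\MaTime/\d\MiTime=\eps^2$) matches the paper's proof.
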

\begin{proof}
At first we study the time dependent transformation
$\calM_\con\at{\MiTime}\circ\calT_\con:Q\to{Q}$, see Remark
\ref{nls:Remark:MSTrafo}, and write $\hat{x}=\calT_\con{x}$ and
$\tilde{x}=\calM_\con\at\MiTime \hat{x}$. According to Theorem
\ref{LT:Theo}, the transformation $\calT_\con$ transforms
$\npair{{\calH}}{\si}$ into $\npair{\hat{\calH}}{\hat\si}$ with
$\hat{\calH}=\calH\circ\calT_\vel^{-1}$ and
$\hat{\sigma}=\nnat{\calT_\vel^{-1}}^\ast\sigma$. Then we apply the
moving frame transformation $\calM_\vel\at\MiTime$ and Theorem
\ref{MF:Lemma2} provides the Hamiltonian structure
$\npair{\tilde{\calH}}{\tilde{\sigma}}$ with
$\tilde{\sigma}=\nnat{\calM_\vel\at{-\MiTime}}^\ast\hat\sigma$ and
\begin{math}%
\tilde{\calH}
=\nnat{\hat{\calE}+\hat{\calI}}\circ\calM_\vel\at{-\MiTime}
\end{math}
with $\hat{\calE}=\hat{\calH}$ and
\begin{align}
\label{nls:Lemma0.Eqn1} \hat{\calI}\pair{\hat{x}}{\hat{x}_\MiTime}=
\int\mhintlimits_{\Rset{\times}T^1}\hat{x}_\MiTime\at{c{\hat{x}}_\MiLagrC-\om{\hat{x}}_\Phase}
\d\MiLagrC\d\Phase.
\end{align}
Moreover, exploiting Theorem \ref{ST:Theo1} for the scaling
transformation $\calS_\con\at\eps=\bbS_\con^{-1}\at\eps$, we find
\begin{align*}
\bbH=\nnat{\calE+\calI}\circ\calT_\vel^{-1}\circ\calM_\vel\at{-\MiTime}
\circ{\bbS_\vel}\at\eps,\quad \bssi=\eps^2
\nnat{\calT_\vel^{-1}\circ\calM_\vel\at{-\MiTime}
\circ{\bbS_\vel}\at\eps}^\ast\sigma
\end{align*}
with $\calI=\hat{\calI}\circ\calT_\vel$. According to
\eqref{GS:Chain.energies} and \eqref{nls:Lemma0.Eqn1}, the
microscopic energies are given by
\begin{math}
\calK\at{x_\MiTime}=\tfrac{1}{2}\int_{\Rset\times{T^1}}
x_\MiTime^2\,\d\MiLagrC\d\Phase
\end{math} %
and
\begin{align*}
\calV\at{x}=
\!\!\!\int\mhintlimits_{\Rset\times{T^1}}\!\!\!%
\Bat{- \tfrac{1}{2}x\laplace_{1,\,0}x +
\OSPot\at{x}}\,\d\MiLagrC\d\Phase,\quad \calI\pair{x}{x_\MiTime}=
\int\mhintlimits_{\Rset{\times}T^1}x_\MiTime\at{c{x}_\MiLagrC-c\theta{x}_\Phase-\om{x}_\Phase}
\d\MiLagrC\d\Phase,
\end{align*}
where the discrete operators $\nabla$ and $\laplace$ are defined in
Remark \ref{GS:DiscreteNablas}. The expressions for $\bbK$, $\bbV$,
$\bbL$, $\bbE$ and $\bbI$ now follow by inserting
\eqref{nls:MSAnsatz} into the formulas for $\calK$, $\calV$ and
$\calI$. For the computation of $\bsSi$ we identify the linear
two-scale transformation $\bbTvp\pair{\eps}{\MiTime}$ with
\begin{math}
\bsS\circ\bsT,
\end{math} %
where $\bsS:TP\rra{TQ}$ is given by
\begin{math}\at{\bsS{Z}}\pair{\MiLagrC}{\Phase}=
Z\pair{\eps\MiLagrC-\eps{c}\MiTime}{\Phase+\om\MiTime+\theta\MiLagrC}
\end{math}
and $\bsT$ abbreviates the operator-valued matrix
\begin{align*}
\bsT=
\begin{pmatrix}
\eps&0\\-\eps^2\,c\,\MaLagrDer{}+\eps\,\om\,\PhaseDer{}&\eps^3
\end{pmatrix}
\end{align*}
with components in $\mathrm{Lin}\at{P,\,P}$. Finally, Remark
\ref{TST:Remark.SymTrafo} yields
\begin{math} \bsSi=\frac{\d\MaTime}{\d\MiTime}\,
\frac{\d\MiLagrC}{\d\MaLagr}\,\bsT^\prime\,\bsSi_\mtrc\,\bsT
\end{math}
and this implies the desired result.
\end{proof}
%
%
\paragraph{Leading order reduction}
%
%
%
Next we derive the formal expansions with respect to $\eps$. To this
end we introduce the constants $v_i=\OSPot^{(i)}\at{0}$ and find,
due to $v_0=v_1=0$,
\begin{align}
\label{nls:expansion}\OSPot\at{\eps{X}}=
\eps^2\tfrac{v_2}{2}\,{X}^2+
\eps^3\tfrac{v_3}{6}\,{X}^3+ %
\eps^4\tfrac{v_4}{24}\,{X}^4+ \DO{\eps^5}.
\end{align}
\begin{lemma}
\label{nls:Lemma1}
The transformed energies satisfy
\begin{enumerate}
\item%
\begin{math}
\bbI= \eps\bbI_0+\eps^2\bbI_1+ \eps^3\bbI_2+\DO{\eps^4}
\end{math}
with %
\begin{math}
\bbI_0\at{X}= -\om^2\!\!\!\int\mhintlimits_{\Rset{\times}T^1}\!\!\!
X_\Phase^2\, \d\MaLagr\d\Phase,
\end{math}
\begin{align*}
\bbI_1\at{X}=2\om{}\,c\!\!\!
\int\mhintlimits_{\Rset{\times}T^1}\!\!\!
X_\MaLagr{X}_\Phase\,\d\MaLagr\d\Phase,\quad%
\bbI_2\pair{X}{X_\MaTime}&=
-c^2\!\!\!\int\mhintlimits_{\Rset{\times}T^1}\!\!\!
X_\MaLagr^2\,\d\MaLagr\d\Phase%
-\om\!\!\!\int\mhintlimits_{\Rset{\times}T^1}\!\!\!
X_\MaTime{}X_\Phase
\,\d\MaLagr\d\Phase,%
\end{align*}
\item%
\begin{math}
\bbK= \eps\bbK_0+\eps^2\bbK_1+ \eps^3\bbK_2+\DO{\eps^4}
\end{math}
with $\bbK_0\at{X}=-\tfrac{1}{2}\,\bbI_0\at{X}$,
$\bbK_1\at{X}=-\tfrac{1}{2}\,\bbI_1\at{X}$ and
\begin{align*}
\bbK_2\pair{X}{X_\MaTime}=-\tfrac{1}{2}\,\bbI_2\at{X}
+\tfrac{1}{2}\,\om\!\!\! \int\mhintlimits_{\Rset{\times}T^1}\!\!\!
X_\MaTime{X}_\Phase\,\d\MaLagr\d\Phase,
\end{align*}
\item%
\begin{math}
\bbV= \eps\bbV_0+\eps^2\bbV_1+ \eps^3\bbV_2+\DO{\eps^4}
\end{math}
with
\begin{align}
\label{nls:Lemma1.Pot}
\begin{array}{lclcl}
\bbV_0\at{X}&=&\displaystyle -\tfrac{1}{2}\!\!\!
\int\mhintlimits_{\Rset{\times}T^1}\!\!\! X\laplace_{0,\,\theta}X\,
\d\MaLagr\d\Phase, &+&\displaystyle \tfrac{v_2}{2}\!\!\!
\int\mhintlimits_{\Rset{\times}T^1}\!\!\! X^2\, \d\MaLagr\d\Phase,
\\%
\bbV_1\at{X}&=&\displaystyle %
-\tfrac{1}{2}\!\!\!
\int\mhintlimits_{\Rset{\times}T^1}\!\!\!%
X\at{\nabla^{+}_{0,\theta}X_\MaLagr+
\nabla^-_{0,\theta}X_\MaLagr}\,\d\MaLagr\d\Phase%
&+&\displaystyle%
\tfrac{v_3}{6}\!\!\! \int\mhintlimits_{\Rset{\times}T^1}\!\!\!
X^3\,\d\MaLagr\d\Phase,
\\%
\bbV_2\at{X}&=&\displaystyle -\tfrac{1}{4}\!\!\!
\int\mhintlimits_{\Rset{\times}T^1}\!\!\!
X\at{\nabla^{+}_{0,\theta}-
\nabla^-_{0,\theta}+2\,\mathrm{Id}}X_{\MaLagr\MaLagr}\,
\d\MaLagr\d\Phase &+&\displaystyle \tfrac{v_4}{24}
\!\!\!\int\mhintlimits_{\Rset{\times}T^1}\!\!\!
X^4\,\d\MaLagr\d\Phase.
\end{array}
\end{align}
\end{enumerate}
\end{lemma}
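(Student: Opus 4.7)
The plan is to read off the expansions directly from the closed-form expressions for $\bbI$, $\bbK$ and $\bbV$ established in Lemma \ref{nls:Lemma0}, treating the three energies separately since $\bbI$ and $\bbK$ are polynomial in $\eps$ while $\bbV$ requires two genuine Taylor expansions.

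For $\bbI$ and $\bbK$, I would set $A=\eps^3 X_\MaTime-\eps^2 c X_\MaLagr+\eps\om X_\Phase$ and $B=\eps^2 c X_\MaLagr-\eps\om X_\Phase$, so that Lemma \ref{nls:Lemma0} gives $\bbI=\eps^{-1}\!\int AB$ and $\bbK=\tfrac12\eps^{-1}\!\int A^2$ over $\Rset\times T^1$. Expanding $AB$ term by term and ordering by powers of $\eps$ immediately produces $\bbI_0$, $\bbI_1$, $\bbI_2$ as stated. For $\bbK$ it is convenient to use the algebraic identity $2\bbK+\bbI=\eps^{-1}\!\int A(A+B)=\eps^2\!\int X_\MaTime\,A$, whose expansion has no $\eps^0$ or $\eps^1$ contribution and whose $\eps^2$-term equals $\om\!\int X_\MaTime X_\Phase$. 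This yields $\bbK_0=-\tfrac12\bbI_0$, $\bbK_1=-\tfrac12\bbI_1$, and the corrected formula for $\bbK_2$ without having to re-expand $A^2$ by hand.

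For $\bbV$ the work splits into the nonlinear part and the discrete-Laplace part. The nonlinear part comes from \eqref{nls:expansion}, which upon multiplication by $\eps^{-1}$ contributes exactly the polynomial terms $\tfrac{v_2}{2}X^2$, $\tfrac{v_3}{6}X^3$, $\tfrac{v_4}{24}X^4$ at orders $\eps$, $\eps^2$, $\eps^3$. For the quadratic part I would Taylor expand the shifted argument in $\laplace_{\eps,\theta}X(\MaLagr,\Phase)=X(\MaLagr+\eps,\Phase+\theta)+X(\MaLagr-\eps,\Phase-\theta)-2X$ in the spatial variable $\MaLagr$ while keeping the phase shift $\pm\theta$ intact. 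Collecting even and odd powers of $\eps$ and rewriting the resulting phase-shifted combinations in terms of the discrete operators from Remark \ref{GS:DiscreteNablas} gives
\begin{align*}
\laplace_{\eps,\theta}X=\laplace_{0,\theta}X+\eps\bigl(\nabla^+_{0,\theta}X_\MaLagr+\nabla^-_{0,\theta}X_\MaLagr\bigr)+\tfrac{\eps^2}{2}\bigl(\nabla^+_{0,\theta}-\nabla^-_{0,\theta}+2\,\mathrm{Id}\bigr)X_{\MaLagr\MaLagr}+\DO{\eps^3}.
\end{align*}
Multiplying by $-\tfrac{\eps^2}{2}X$, adding the nonlinear contribution, dividing by $\eps$ and integrating then produces $\bbV_0$, $\bbV_1$, $\bbV_2$ exactly as in \eqref{nls:Lemma1.Pot}.

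The only genuine technical point is the identification of the coefficients in the expansion of $\laplace_{\eps,\theta}$ with the operators $\nabla^\pm_{0,\theta}$ of Remark \ref{GS:DiscreteNablas}; this is a bookkeeping matter using $\nabla^+_{0,\theta}f+\nabla^-_{0,\theta}f=f(\cdot,\Phase+\theta)-f(\cdot,\Phase-\theta)$ and $\nabla^+_{0,\theta}f-\nabla^-_{0,\theta}f=f(\cdot,\Phase+\theta)+f(\cdot,\Phase-\theta)-2f$, but must be carried out carefully to match the signs. Everything else is polynomial expansion and rearrangement.
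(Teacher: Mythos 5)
Your proposal is correct and follows essentially the same route as the paper: all three expansions are read off from the closed forms in Lemma \ref{nls:Lemma0}, the only genuine computation being the Taylor expansion of $\laplace_{\eps,\,\theta}$ in the spatial variable, which you state exactly as in the paper's proof, together with the substitution of \eqref{nls:expansion} for the on-site part. Your identity $2\bbK+\bbI=\eps^{2}\int X_\MaTime\,A$ is a pleasant shortcut for obtaining the relations $\bbK_0=-\tfrac12\bbI_0$, $\bbK_1=-\tfrac12\bbI_1$ and the extra term in $\bbK_2$, but it only streamlines a routine re-expansion of $A^2$ rather than changing the argument.
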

\begin{proof}
The expansions for $\bbK$ and $\bbI$ follow directly from Lemma
\ref{nls:Lemma0}. Moreover, Taylor expansion with respect to $\eps$
yields
%
%
\begin{align}
\notag
\laplace_{\eps,\,\theta}X=\laplace_{0,\,\theta}X+
\eps\at{\nabla^{+}_{0,\theta}+\nabla^-_{0,\theta}}X_\MaLagr
+\eps^2\,\tfrac{1}{2}\at{\nabla^{+}_{0,\theta}-
\nabla^-_{0,\theta}+2\mathrm{Id}}X_{\MaLagr\MaLagr}+\DO{\eps^3},
\end{align}
and this gives rise to the first kind of integrals in
\eqref{nls:Lemma1.Pot}. Finally, inserting \eqref{nls:expansion}
into
$\int_{\Rset{\times}{T^1}}\OSPot\at{\eps{X}}\,\d\MiLagrC\d\Phase$
completes the proof.
\end{proof}
\bigpar%
According to Lemma \ref{nls:Lemma1} the leading order Lagrangian and
Hamiltonian equations are given by
\begin{align}
\label{nls:ZeroEuLa}%
-\om^2X_{\Phase\Phase}+\laplace_{0,\,\theta}X-v_2X=0,
\end{align}
and, using Fourier transform with respect to $\Phase$, we conclude
that this equation has nontrivial solutions if and only if $\om$ and
$\theta$ satisfy $m^2\om^2=\Om^2\at{m\theta}$ for some integer $m$,
where $\Om$ is the dispersion relation for the linearized chain with
$\alpha=1$, that is
\begin{align}
\label{nls:DispRel}%
\Om^2\at{\theta}=v_2+2\at{1-\cos\theta}.
\end{align}
In what follows we always assume $\om^2=\Omega\at\theta^2$ as well
as the \emph{non-resonance} condition
\begin{align}
\label{nls:NonResonance}%
m^2 \omega^2 \neq \Omega(m\theta)^2 \text{ for } m\in
\Zset\setminus\{-1,1\},
\end{align}
which imply that the solution space to \eqref{nls:ZeroEuLa} in
$L^2\at{T^1;\d\Phase}$ is spanned by $\cos\Phase$ and $\sin\Phase$.

\begin{theorem}
\label{nls:Red.Theorem1} Suppose $\om^2=\Omega^2\at{\theta}$ and
\eqref{nls:NonResonance}. Then, the leading order Lagrangian and
Hamiltonian equations \eqref{nls:ZeroEuLa} are quasi-stationary
(corresponding to $\bsSi_0=0$) and have solutions
\begin{align}
\label{nls:ZeroRedEqn}%
X_0\pair{\MaLagr}{\Phase}=
\pi^{-1/2}\Bat{B_1\at{\MaLagr}\cos\at{\Phase}+
B_2\at{\MaLagr}\sin\at{\Phase}}
\end{align}
with $B_1,\,B_2\in{}L^2\at{\Rset;\d\MaLagr}$ arbitrary. Moreover,
\eqref{nls:ZeroRedEqn} implies $0=\bbL_0\at{X_0}=\bbH_0\at{X_0}$.
\end{theorem}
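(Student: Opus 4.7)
The plan is to verify the three claims in sequence, exploiting the explicit formulas from Lemma \ref{nls:Lemma1} and Fourier analysis in the phase variable $\Phase$.

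First I would address the quasi-stationarity. From Lemma \ref{nls:Lemma1} we read off
\begin{align*}
\bbL_0\at{X}=\bbK_0\at{X}-\bbV_0\at{X}=\tfrac{\om^2}{2}\!\!\!\int\mhintlimits_{\Rset\times T^1}\!\!\! X_\Phase^2\,\d\MaLagr\d\Phase-\bbV_0\at{X},
\end{align*}
which does not contain $X_\MaTime$. Hence $\bbL_0$ is quasi-stationary, and Principle \ref{POCE} immediately yields $\bbH_0=-\bbL_0$ together with $\bssi_0=\nnat{\mathfrak{F}\bbL_0}^\ast\mhol{\bssi}_\can=0$, which is also consistent with the expansion $\bsSi=\eps^3\bsSi_2+\eps^4\bsSi_3+\eps^5\bsSi_4$ of Lemma \ref{nls:Lemma0} (with overall factor $\eps^\kappa=\eps$, so the first two coefficients of $\bssi$ vanish). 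This confirms the statement on $\bsSi_0$.

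Next I would characterize the solution space of \eqref{nls:ZeroEuLa}. The equation is linear in $X$ and invariant under translations in $\MaLagr$, so a Fourier decomposition in $\Phase$ reduces it, mode by mode, to an algebraic equation for the coefficients $\hat X_m\at\MaLagr$. A short computation using $\MiLagrCDer{}\to \iu m$, $X_{\Phase\Phase}\to -m^2$, and $\laplace_{0,\,\theta}\,\mhexp{\iu{}m\Phase}=-2\at{1-\cos\at{m\theta}}\mhexp{\iu m\Phase}$ gives
\begin{align*}
\at{m^2\om^2-\Om^2\at{m\theta}}\hat{X}_m\at\MaLagr=0,
\end{align*}
with $\Om$ the dispersion relation \eqref{nls:DispRel}. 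Under the hypothesis $\om^2=\Om^2\at\theta$ together with the non-resonance condition \eqref{nls:NonResonance} the only surviving modes are $m=\pm1$, and requiring $X_0$ to be real-valued forces $\hat X_{-1}=\overline{\hat X_1}$. Writing $\hat X_{\pm1}\at\MaLagr$ in terms of two real $L^2$-functions $B_1,B_2$ yields exactly the claimed form \eqref{nls:ZeroRedEqn}; the normalization $\pi^{-1/2}$ is only a choice of basis and does not affect the argument.

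Finally, the identity $\bbL_0\at{X_0}=0$ follows by testing \eqref{nls:ZeroEuLa} against $X_0$ and integrating by parts in $\Phase$ (the boundary terms vanish by periodicity). This yields $\om^2\int X_{0\,\Phase}^2\,\d\MaLagr\d\Phase=2\bbV_0\at{X_0}$, i.e.\ $\bbK_0\at{X_0}=\bbV_0\at{X_0}$ and hence $\bbL_0\at{X_0}=0$; combined with $\bbH_0=-\bbL_0$ this also gives $\bbH_0\at{X_0}=0$. The one step that deserves care is the compatibility between the formal Fourier argument and the required $L^2$-regularity of the amplitudes $B_1,B_2$—once this is set up correctly the remaining computations are routine.
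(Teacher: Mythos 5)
Your proposal is correct and follows essentially the same route as the paper: the paper's proof is the one-liner ``all results follow from Lemma \ref{nls:Lemma1}'', with the Fourier-mode argument (reducing \eqref{nls:ZeroEuLa} to $\at{m^2\om^2-\Om^2\at{m\theta}}\hat{X}_m=0$ and invoking \eqref{nls:NonResonance}) already set out in the text preceding the theorem, and your writeup simply makes those steps explicit, including the correct identification $\bbK_0\at{X_0}=\bbV_0\at{X_0}$ via testing against $X_0$. The only blemish is the slip ``$\MiLagrCDer{}\to\iu m$'', which should read $\PhaseDer{}\to\iu m$; it does not affect the argument.
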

\begin{proof}
All results follow from Lemma \ref{nls:Lemma1}.
%
%
%
\end{proof}
Introducing a complex valued amplitude $A$ by
$2\sqrt{\pi}A=B_1-\iu{B}_2$, Equation \eqref{nls:ZeroRedEqn}
transforms into
\begin{align}
\label{nls:ModAnsatz}%
X_0\pair{\MaLagr}{\Phase}=
2\Re\at{A\at{\MaLagr}\mhexp{\iu\Phase}},
\end{align}
and is hence equivalent the original two-scale ansatz
\eqref{Intro:nls.MSAnsatz}.
%
\paragraph{Elimination of the microstructure}

The leading order reduction determines the structure of the
microscopic oscillations together with the dispersion relation. As
discussed in Case C2 of \S\ref{sec:Reductions}, this allows for a
further reduction step that yields the macroscopic modulation
equation for the amplitudes $B_1$ and $B_2$, or, equivalently, for
the complex-valued amplitude $A$. Let $P_0$ be the $L^2$--space of
complex-valued functions depending on $\MaLagr$, i.e.,
\begin{align*}
P_0=\left\{\pair{B_1}{B_2}\in{L^2}\at{\Rset;\d\MaLagr}{\times}{L^2}\at{\Rset;\d\MaLagr}\right\}\cong
\left\{A\in{L^2}\at{\Rset;\,\Cset}\right\},
\end{align*}
which can be viewed as a closed and proper subset of $P$ due to
\eqref{nls:ZeroRedEqn}. By construction, each element of $P_0$
satisfies the leading order equations exactly, and thus we can use
the next-leading order terms in order to derive the effective
macroscopic dynamics on $TP_0$.
\par%
Below we choose the moving frame speed $c$ appropriately, and this
yields $\bbL_1|_{P_0}\equiv0$ as well as $\bbH_1|_{P_0}\equiv0$ due
to cancelations. Consequently, the reduced structures are related to
$\bbL_2$, and hence we must take care of the correction terms
$\widehat{\bbL}_2$ and $\widehat{\bbH}_2$  coming from the ansatz
$X=X_0+\eps{X_1}$, see Case C2 in  \S\ref{sec:Reductions}.

\begin{lemma}
\label{nls:Lemma2}
With $X=X_0+\eps{X}_1$ we have
\begin{align*}
\widehat{\bbL}_2\pair{X_0}{X_1}=\bbL_0\at{X_1}+\!\!\!
\int\mhintlimits_{\Rset{\times}T^1}\!\!\!
X_1\at{%
2\,\om{}\,c{X_0}_{\MaLagr\Phase}+
\at{\nabla^{+}_{0,\theta}{X_0}_\MaLagr+\nabla^-_{0,\theta}{X_0}_\MaLagr}%
- \tfrac{v_3}{2} X_0^2}\,\d\MaLagr\d\Phase
\end{align*}
and
$\widehat{\bbH}_2\pair{X_0}{X_1}=-\widehat{\bbL}_2\pair{X_0}{X_1}$.
Moreover, all corrections to the symplectic structure are of order
$\eps^3$ and do not contribute to $\bsSi_2$.
\end{lemma}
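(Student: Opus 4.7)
The plan is to unpack the correction term via its definition from Case C2 in \S\ref{sec:Reductions} and then evaluate each contribution explicitly using the concrete formulas from Lemma \ref{nls:Lemma1}. Recall
\begin{align*}
\widehat{\bbL}_2\pair{X_0}{X_1}=
\tfrac{1}{2}\sskp{\SVarDer{\bbL_0}{X}\at{X_0}X_1}{X_1}+
\sskp{\VarDer{\bbL_1}{X}\at{X_0}}{X_1}+
\sskp{\VarDer{\bbL_1}{X_\MaTime}\at{X_0}}{{X_1}_\MaTime}.
\end{align*}
Since $\bbK_0, \bbK_1$ and $\bbV_0, \bbV_1$ as listed in Lemma \ref{nls:Lemma1} are all independent of $X_\MaTime$, the last summand is zero. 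Moreover, $\bbL_0=\bbK_0-\bbV_0$ is quadratic in $X$ (all three contributions are purely quadratic integrals in $X$ and $X_\Phase$), so the first summand collapses to $\bbL_0\at{X_1}$.

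For the middle term I would compute $\VarDer{\bbL_1}{X}$ directly from the formulas for $\bbK_1$ and $\bbV_1$. The key steps are standard integrations by parts in $\MaLagr$ and $\Phase$, together with the skew-adjointness relations $\at{\nabla^{\pm}_{0,\theta}}^\ast=-\nabla^{\mp}_{0,\theta}$ from Remark \ref{GS:DiscreteNablas}. A short calculation yields
\begin{align*}
\VarDer{\bbK_1}{X}\at{X_0}=2\,\om\,c\,{X_0}_{\MaLagr\Phase},\qquad
\VarDer{\bbV_1}{X}\at{X_0}=-\bat{\nabla^{+}_{0,\theta}{X_0}_\MaLagr+\nabla^{-}_{0,\theta}{X_0}_\MaLagr}+\tfrac{v_3}{2}X_0^2,
\end{align*}
and pairing with $X_1$ produces exactly the integral stated in the lemma. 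The only bookkeeping hazard here is getting the signs of the skew-adjoint shifts right; everything else is algebra.

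The identity $\widehat{\bbH}_2=-\widehat{\bbL}_2$ follows without further work. Because neither $\bbL_0$ nor $\bbL_1$ involves $X_\MaTime$, their fiber derivatives vanish, hence by Principle \ref{POCE} we have $\bbH_i=-\bbL_i$ for $i=0,1$. Since $\widehat{\bbL}_2$ is assembled precisely from the first and second variations of $\bbL_0,\bbL_1$ in the $X$-direction (and no variation in $X_\MaTime$), its Hamiltonian counterpart inherits the overall sign.

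Finally, for the symplectic correction I would read off from Lemma \ref{nls:Lemma0} that $\bsSi_2$ originates from the fiber derivative of $\bbL_2$, whose $X_\MaTime$-component gives $\Pi_2=\om X_\Phase$ --- a \emph{linear} and state-independent function of the configuration. Consequently $D\Pi_2$ has constant coefficients and inserting $X=X_0+\eps X_1$ into $\bssi$ produces no modification of $\bsSi_2$. The first state-dependent contributions to $\mathfrak{F}\bbL$ arise from $\bbL_3$ and higher, which enter $\bssi$ with an extra factor of $\eps$, so all corrections generated by the $\eps X_1$ substitution sit at order $\eps^4$ or beyond and do not touch $\bsSi_2$. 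The main delicate step in the whole argument is precisely this observation about the state-independence of $\Pi_2$; once that is isolated, the remainder is a straightforward exercise in variational calculus.
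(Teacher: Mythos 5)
Your proof is correct and follows essentially the same route as the paper: both carry out the same Taylor expansion of $\bbL\pair{\eps}{Z_0+\eps Z_1}$, only you organize it through the abstract Case C2 formula (second variation of $\bbL_0$ plus first variation of $\bbL_1$) while the paper expands $\bbK_i$, $\bbV_i$, $\bbI_i$ term by term and recombines via $\bbL=\bbK-\bbV$, $\bbH=\bbK+\bbV+\bbI$. One small inaccuracy that does not affect the conclusion: $\mathfrak{F}\bbL$ has no state-dependent contributions at \emph{any} order here (since $\bbK$ is exactly quadratic in the velocity and $\bbV$ is velocity-independent, the whole of $\bsSi$ has constant coefficients), so the statement about the symplectic corrections is even more immediate than your argument about $\bbL_3$ and higher suggests.
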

\begin{proof}
The correction terms for $\bbI$, $\bbK$, and $\bbV$ can be read-off
from Lemma \ref{nls:Lemma1}. More precisely, we find
\begin{enumerate}
\item%
\begin{math}
{\bbI}_i\at{X_0+\eps\,X_1}={\bbI}_i\at{X_0}+\widehat{\bbI}_i\pair{X_0}{X_1}
\end{math} %
with $\widehat{\bbI}_0\pair{X_0}{X_1}=0$ and
\begin{align*}
\widehat{\bbI}_1\pair{X_0}{X_1}&=%
2\om^2\!\!\!\int\mhintlimits_{\Rset{\times}T^1}X_1{X_0}_{\Phase\Phase}\,\d\MaLagr\d\Phase
,\\%
\widehat{\bbI}_2\pair{X_0}{X_1}&=%
\bbI_0\at{X_1}%
-4\om{}\,c\!\!\! \int\mhintlimits_{\Rset{\times}T^1}\!\!\!
X_1{X_0}_{\MaLagr\Phase}\,\d\MaLagr\d\Phase, %
\end{align*}
\item%
\begin{math}
{\bbK}_i\at{X_0+\eps\,X_1}={\bbK}_i\at{X_0}+\widehat{\bbK}_i\pair{X_0}{X_1}
\end{math}
with $\widehat{\bbK}_0\pair{X_0}{X_1}=0$ and
\begin{align*}
\widehat{\bbK}_1\pair{X_0}{X_1}= %
-\tfrac{1}{2}\widehat{\bbI}_1\pair{X_0}{X_1}%
,\quad%
\widehat{\bbK}_2\pair{X_0}{X_1}=-\tfrac{1}{2}\widehat{\bbI}_2\pair{X_0}{X_1},
\end{align*}
\item%
\begin{math}
{\bbV}_i\at{X_0+\eps\,X_1}={\bbV}_i\at{X_0}+\widehat{\bbV}_i\pair{X_0}{X_1}
\end{math}
with $\widehat{\bbV}_0\pair{X_0}{X_1}=0$ and
\begin{align*}
\widehat{\bbV}_1\pair{X_0}{X_1}&= %
-\!\!\! \int\mhintlimits_{\Rset{\times}T^1}\!\!\!
X_1\laplace_{0,\,\theta}X_0\, \d\MaLagr\d\Phase%
+v_2\!\!\! \int\mhintlimits_{\Rset{\times}T^1}\!\!\! X_1{X}_0\,
\d\MaLagr\d\Phase,
\\%
\widehat{\bbV}_2\pair{X_0}{X_1}&=\bbV_0\at{X_1}%
-\!\!\!
\int\mhintlimits_{\Rset{\times}T^1}\!\!\!%
X_1\at{\nabla^{+}_{0,\theta}{X_0}_\MaLagr+
\nabla^-_{0,\theta}{X_0}_\MaLagr}\,\d\MaLagr\d\Phase%
+\displaystyle%
\tfrac{v_3}{2}\!\!\! \int\mhintlimits_{\Rset{\times}T^1}\!\!\!
{X_1}X_0^2\,\d\MaLagr\d\Phase.
\end{align*}
\end{enumerate}
Finally, due to $\bbL=\bbK-\bbV$ and $\bbH=\bbK+\bbV+\bbI$ all
assertions are direct consequences of these identities.
\end{proof}
\begin{lemma}
\label{nls:Lemma3}
If the moving frame speed $c$ is given by
\begin{align}
\label{nls:Lemma3.Eqn1}%
c\,\om=-\Om^\prime\at{\theta}\Om\at\theta=-\sin\theta
\end{align}
then ${\bbL}_1|_{P_0}=-\bbH_1|_{P_0}=0$. Otherwise the Lagrangian
and Hamiltonian equations to ${\bbL}_1$ and ${\bbH}_1$ have no
non-trivial solution at all.
\end{lemma}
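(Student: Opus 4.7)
The plan is to insert the generic element $X_0\pair{\MaLagr}{\Phase}=A\at\MaLagr\mhexp{\iu\Phase}+\mhol{A}\at\MaLagr\mhexp{-\iu\Phase}$ of $P_0$ (with $A\in L^2\at{\Rset;\Cset}$) into the formulas for $\bbK_1,\bbV_1,\bbI_1$ of Lemma \ref{nls:Lemma1}, carry out the $\Phase$-integrations via the orthogonality $\int_{T^1}\mhexp{\iu m\Phase}\d\Phase=2\pi\delta_{m,0}$, and read off the dichotomy on $\om c+\sin\theta$ (together with the sign identity $\bbL_1|_{P_0}=-\bbH_1|_{P_0}$) directly from the resulting scalar functional of $A$.

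First I would reduce the three building-block $\Phase$-integrals, obtaining $\int_{T^1}{X_0}_\MaLagr{X_0}_\Phase\,\d\Phase=4\pi\Im\bat{\mhol{A}A_\MaLagr}$, $\int_{T^1}X_0\bat{\nabla^{+}_{0,\theta}+\nabla^-_{0,\theta}}{X_0}_\MaLagr\,\d\Phase=-8\pi\sin\theta\,\Im\bat{\mhol{A}A_\MaLagr}$, and $\int_{T^1}X_0^3\,\d\Phase=0$. The first two are a direct two-mode calculation; the third uses that $X_0^3$ contains only the harmonics $\mhexp{\pm\iu\Phase}$ and $\mhexp{\pm 3\iu\Phase}$, all of which average to zero on $T^1$, so the cubic self-interaction $\int X_0^3\,\d\MaLagr\d\Phase$ inside $\bbV_1$ makes no contribution to $\bbV_1|_{P_0}$. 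The non-resonance hypothesis \eqref{nls:NonResonance} which was already decisive in Theorem \ref{nls:Red.Theorem1} re-enters here precisely to guarantee that the $3\theta$-channel is genuinely absent.

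Setting $J\at{A}=\int_\Rset\Im\bat{\mhol{A}A_\MaLagr}\,\d\MaLagr$, the three identities give $\bbI_1\at{X_0}=8\pi\om c\,J\at{A}$, $\bbK_1\at{X_0}=-4\pi\om c\,J\at{A}$ and $\bbV_1\at{X_0}=4\pi\sin\theta\,J\at{A}$, so that $\bbL_1=\bbK_1-\bbV_1$ and $\bbH_1=\bbK_1+\bbV_1+\bbI_1$ (from $\bbH=\bbE+\bbI$ in Lemma \ref{nls:Lemma0}) yield
\begin{align*}
\bbL_1|_{P_0}\at{X_0}=-4\pi\bat{\om c+\sin\theta}J\at{A},\qquad
\bbH_1|_{P_0}\at{X_0}=+4\pi\bat{\om c+\sin\theta}J\at{A}.
\end{align*}
These formulas immediately imply $\bbL_1|_{P_0}=-\bbH_1|_{P_0}$ and reduce the vanishing question to $\om c+\sin\theta=0$; differentiating the dispersion relation $\om^2=\Om\at\theta^2=v_2+2\at{1-\cos\theta}$ gives $\Om\at\theta\Om^\prime\at\theta=\sin\theta$, so this condition is precisely \eqref{nls:Lemma3.Eqn1}.

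For the alternative case $\om c+\sin\theta\neq0$, the functional $\bbL_1|_{P_0}$ is a non-zero constant multiple of $J\at{A}$; writing $A=A_1+\iu A_2$ with $A_1,A_2\in L^2\at\Rset$ this equals $\int_\Rset\bat{A_1{A_2}_\MaLagr-A_2{A_1}_\MaLagr}\d\MaLagr$, whose Euler--Lagrange equations (after one integration by parts) are ${A_1}_\MaLagr={A_2}_\MaLagr=0$, forcing $A\equiv\const$. Since the only constant in $L^2\at\Rset$ is $0$, only $X_0\equiv0$ solves the reduced equations, and the same reasoning applies to $\bbH_1|_{P_0}$. The main obstacle is the bookkeeping of the $\Phase$-integrations; the real content is the structural cancellation producing the group-velocity factor $\om c+\sin\theta$, which singles out $c=-\Om^\prime\at\theta\Om\at\theta/\om$ as the unique admissible moving-frame speed.
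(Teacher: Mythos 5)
Your proof is correct and follows essentially the same route as the paper's: insert the generic element of $P_0$, carry out the $\Phase$-integrations, and read off the common factor $\om c+\sin\theta$ in $\bbL_1|_{P_0}=-\bbH_1|_{P_0}$, the only cosmetic difference being that you work with the complex amplitude $A$ throughout where the paper uses the real pair $\pair{B_1}{B_2}$. One small correction: the vanishing of $\int_{T^1}X_0^3\,\d\Phase$ is pure Fourier orthogonality (the cube contains only the harmonics $\mhexp{\pm\iu\Phase}$ and $\mhexp{\pm3\iu\Phase}$) and does not use the non-resonance condition \eqref{nls:NonResonance}; that condition enters instead in characterizing $P_0$ (Theorem \ref{nls:Red.Theorem1}) and later in solving for $X_1$ in Lemma \ref{nls:Lemma4}.
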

\begin{proof}
A direct calculation shows

\begin{align*}
\bbI_1\at{X_0}=-2\,\bbK_1\at{X_0}=2\,\om\,c\int\mhintlimits_\Rset
\at{{B_1}_\MaLagr{B_2}-{B_1}{B_2}_\MaLagr}\, \d\MaLagr =4\,\om\,c
\int\mhintlimits_\Rset{B_1}_\MaLagr{}B_2\,\d\MaLagr,
\end{align*}
and using
\begin{align*}
X\at{\nabla^{+}_{0,\theta}
X_\MaLagr+\nabla^{-}_{0,\theta}X_\MaLagr}=%
2\sin\theta\at{B_1\cos \Phase+B_2\sin\Phase} \at{-{B_1}_\MaLagr\sin
\Phase+{B_2}_\MaLagr\cos\Phase}
\end{align*}
as well as
\begin{math}
0=\int\mhintlimits_{T^1}
\at{B_1\cos{\Phase}+B_2\sin{\Phase}}^3\d\Phase
\end{math}
we find
\begin{align*}
{\bbV}_1\at{X_0}=-\sin\theta \int\mhintlimits_\Rset
\at{{B_1}{B_2}_\MaLagr-{B_1}_\MaLagr{B_2}}\,\d\MaLagr=
2\sin\theta\int\mhintlimits_\Rset {B_1}_\MaLagr{B_2}\,\d\MaLagr,
\end{align*}
so that ${\bbL}_1|_{P_0}=-\bbH_1|_{P_0}=0$ for
\eqref{nls:Lemma3.Eqn1}. Finally, for other values of $c$ the
Lagrangian equations for $\bbL_1|_{P_0}$ equal
${B_1}_\MaLagr={B_2}_\MaLagr=0$.
\end{proof}
Condition \eqref{nls:Lemma3.Eqn1} implies that the moving frame
moves with the negative \emph{group velocity} associated to the
dispersion relation \eqref{nls:DispRel} (the negative sign appears
since our phase definition is $\Phase=\om\MiTime+\theta\MiLagrC$).
Compare this with the case $c^2\neq{\NNPot^{\prime\prime}}\at{0}$
from \S\ref{sec:ChainNew.kdv}.
\bigpar%
Next we prove that the non-resonance condition
\eqref{nls:NonResonance} provides the higher order correction $X_1$
in dependence of the first order solution $X_0$.
\newcommand{\wt}{\widetilde}
\newcommand{\wh}{\widehat}
\begin{lemma}
\label{nls:Lemma4} %
Suppose $\om^2=\Om^2\at{\theta}$,
$c\,\om=-\Om^\prime\at{\theta}\Om\at{\theta}$, and the non-resonance
condition \eqref{nls:NonResonance}, and let $X_0$ be fixed. Then,
each solution $X_1$ to the equation
\begin{align*}%
\VarDer{\wh{\bbL}_2\pair{X_0}{X_1}}{X_1}=0
\end{align*}
satisfies $X_1-\wh{X}_1\at{X_0}\in{P_0}$, where the special solution
$\wh{X}_1\at{X_0}$ is given in the proof. Moreover, for each
$\wt{X}_0\in{P_0}$ we have
\begin{align*}
\widehat{\bbL}_2\pair{X_0}{\wh X_1\at{X_0}+\wt{X}_0}=-
\widehat{\bbH}_2\pair{X_0}{\wh X_1\at{X_0}+\wt{X}_0}=
-\overline{\bbV}_2\at{X_0}
\end{align*}
with
\begin{align*} \overline{\bbV}_2\at{X_0}=
C\!\!\!\int\mhintlimits_{\Rset{\times}T^1}\!\!\!
\at{B_1^2+B_2^2}^2\,\d\MaLagr\d\Phase,\quad\quad
C=\frac{v_3^2}{8\pi}\at{\frac{1}{4\at{4\om^2-\Om^2\at{2\theta}}}-\frac{1}{2v_2}}.
\end{align*}
\end{lemma}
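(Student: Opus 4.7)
My plan is to treat the stationarity equation $\VarDer{\widehat{\bbL}_2(X_0,\cdot)}{X_1}=0$ as a linear inhomogeneous problem whose underlying operator is exactly the one governing \eqref{nls:ZeroEuLa}, and then to invert it explicitly in the $\Phase$-Fourier modes. Since $\bbL_0$ is quadratic in $X$ and independent of $X_\MaTime$, its second variation equals the self-adjoint operator $\Lambda\colon X\mapsto-\om^2 X_{\Phase\Phase}+\laplace_{0,\,\theta}X-v_2 X$, and the formula for $\widehat{\bbL}_2$ in Lemma \ref{nls:Lemma2} shows the Euler--Lagrange equation for $X_1$ to read $\Lambda X_1=-F(X_0)$, with
\[
F(X_0):=2\om c\,X_{0,\MaLagr\Phase}+(\nabla^+_{0,\,\theta}+\nabla^-_{0,\,\theta})X_{0,\MaLagr}-\tfrac{v_3}{2}X_0^2.
\]
Under the non-resonance hypothesis \eqref{nls:NonResonance}, $\ker\Lambda=P_0$ and $\Lambda$ is a bijection on $P_0^\perp$, so the Fredholm alternative immediately furnishes the uniqueness statement $X_1-\widehat X_1\in P_0$ as soon as one verifies $F(X_0)\perp P_0$.

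To verify solvability and construct $\widehat X_1$, I pass to the complex representation $X_0=A\mhexp{\iu\Phase}+\bar A\mhexp{-\iu\Phase}$ of \eqref{nls:ModAnsatz}. A direct calculation yields
\[
2\om c\,X_{0,\MaLagr\Phase}+(\nabla^+_{0,\,\theta}+\nabla^-_{0,\,\theta})X_{0,\MaLagr}=2\iu(\om c+\sin\theta)\bat{A_\MaLagr\mhexp{\iu\Phase}-\bar A_\MaLagr\mhexp{-\iu\Phase}},
\]
which vanishes \emph{pointwise} by the moving-frame choice \eqref{nls:Lemma3.Eqn1}; hence effectively $F(X_0)=-\tfrac{v_3}{2}\bat{A^2\mhexp{2\iu\Phase}+2|A|^2+\bar A^2\mhexp{-2\iu\Phase}}$, which is manifestly orthogonal to $P_0$. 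Because $\Lambda$ acts diagonally in $\Phase$-Fourier modes with eigenvalues $4\om^2-\Om^2(2\theta)$ on $\mhexp{\pm 2\iu\Phase}$ and $-v_2$ on $\mhexp{0}$---both nonzero by non-resonance---one reads off
\[
\widehat X_1(X_0)=\frac{v_3}{2\bat{4\om^2-\Om^2(2\theta)}}\bat{A^2\mhexp{2\iu\Phase}+\bar A^2\mhexp{-2\iu\Phase}}-\frac{v_3}{v_2}|A|^2
\]
as the unique $P_0^\perp$-solution.

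For the evaluation, I exploit that any stationary $X_1$ satisfies $\bbL_0(X_1)=\tfrac12\int X_1\Lambda X_1\,\d\MaLagr\d\Phase=-\tfrac12\int X_1 F(X_0)\,\d\MaLagr\d\Phase$ by the self-adjointness of $\Lambda$, so that $\widehat{\bbL}_2(X_0,\widehat X_1)=\tfrac12\int \widehat X_1 F(X_0)\,\d\MaLagr\d\Phase=-\tfrac{v_3}{4}\int\widehat X_1 X_0^2\,\d\MaLagr\d\Phase$. Orthogonality of Fourier modes on $T^1$ selects only the products $\mhexp{\pm 2\iu\Phase}\cdot\mhexp{\mp 2\iu\Phase}$ and $|A|^2\cdot 2|A|^2$, yielding a multiple of $\int|A|^4\,\d\MaLagr$; rewriting via $B_1^2+B_2^2=4\pi|A|^2$ reproduces the claimed form $-\overline{\bbV}_2(X_0)=-C\int(B_1^2+B_2^2)^2\,\d\MaLagr\d\Phase$ with the stated $C$. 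Invariance under $\widehat X_1\mapsto\widehat X_1+\widetilde X_0$ for $\widetilde X_0\in P_0$ is automatic from three facts: $\bbL_0(\widetilde X_0)=0$ since $\Lambda\widetilde X_0=0$, the cross-term $\int \widehat X_1\Lambda\widetilde X_0\,\d\MaLagr\d\Phase$ vanishes for the same reason, and $\int \widetilde X_0 F(X_0)\,\d\MaLagr\d\Phase=0$ because the $\Phase$-Fourier content of $\widetilde X_0$ (modes $\mhexp{\pm\iu\Phase}$) is disjoint from that of $F(X_0)$ (modes $\mhexp{0}$ and $\mhexp{\pm 2\iu\Phase}$). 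The Hamiltonian identity $\widehat{\bbH}_2=-\widehat{\bbL}_2$ is already recorded in Lemma \ref{nls:Lemma2}. The only delicate part of the argument is the Fourier and $2\pi$-bookkeeping that produces the explicit constant $C$; everything else is forced by the set-up of the preceding lemmas.
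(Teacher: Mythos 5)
Your proposal is correct and takes essentially the same route as the paper: the choice of $c$ cancels the linear terms in the source, the remaining equation $\Lambda X_1=\tfrac{v_3}{2}X_0^2$ is solved mode-by-mode in $\Phase$ (your complex $\wh{X}_1$ coincides with the paper's $C_1X_0^2+C_2$ after substituting $B_1^2+B_2^2=4\pi\abs{A}^2$), and $\widehat{\bbL}_2$ is evaluated by testing the equation against $X_1$, exactly as the paper does by multiplying \eqref{nls:Lemma4.Eqn1} by $X_1$ and integrating. The only step you leave implicit is the final constant bookkeeping for $C$, which the paper compresses to a comparable degree.
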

\begin{proof}
The choice of $c$ implies
\begin{align*}
\widehat{\bbL}_2\pair{X_0}{X_1}=\bbL_0\at{X_1}-\tfrac{v_3}{2} \!\!\!
\int\mhintlimits_{\Rset{\times}T^1}\!\!\!
X_1{}X_0^2\,\d\MaLagr\d\Phase,
\end{align*}
and hence the equation for $X_1$ becomes
\begin{align}
\label{nls:Lemma4.Eqn1} %
\om^2{X_1}_{\Phase\Phase}-\laplace_{0,\,\theta}{X_1}+v_2{X_1}=
-\tfrac{v_3}{2}X_0^2=-\tfrac{v_3}{2}\at{\tfrac{B_1^2+B_2^2}{2\pi}+\tfrac{B_1^2-B_2^2}{2\pi}\cos{2\Phase}+\tfrac{B_1B_2}{\pi}\sin{2\Phase}}.
\end{align}
This equation can be solved explicitly by Fourier transform with
respect to $\Phase$ and noting that the operator
$\om^2\partial_{\Phase\Phase}-\laplace_{0,\,\theta}+v_2$ is
symmetric with kernel orthogonal to $X_0^2$. Some elementary
analysis shows that each solution \eqref{nls:Lemma4.Eqn1} can be
written as $X_1=\wh{X}_1\at{X_0}+\wt{X}_0$, where $\wt{X}_0\in{P_0}$
and
\begin{align*}
\wh{X}_1\at{X_0} =C_1X_0^2+C_2,\quad%
C_1=\frac{v_3}{2\at{4\om^2-\Om^2\at{2\theta}}}, \quad%
C_2=-\at{C_1+\frac{v_3}{2v_2}}\frac{B_1^2+B_2^2}{2\pi}.%
\end{align*}
Multiplying \eqref{nls:Lemma4.Eqn1} by $X_1$ and integrating over
$\Rset\times T^1$ gives
\begin{align*}
2\,\bbL_0\at{\wh{X}_1\at{X_0}+\wt{X}_0}=\tfrac{v_3}{2} \!\!\!
\int\mhintlimits_{\Rset{\times}T^1}\!\!\!
\at{\wh{X}_1\at{X_0}+\wt{X}_0}X_0^2\,\d\MaLagr\d\Phase=
\tfrac{v_3}{2} \!\!\! \int\mhintlimits_{\Rset{\times}T^1}\!\!\!
\wh{X}_1\at{X_0}X_0^2\,\d\MaLagr\d\Phase
\end{align*}
for all $\wt{X}_0\in{P_0}$, and hence we find
\begin{align*}
\widehat{\bbL}_2\pair{X_0}{\wh{X}_1\at{X_0}+\wt{X}_0}=
\widehat{\bbL}_2\pair{X_0}{\wh{X}_1\at{X_0}}=-\tfrac{v_3}{4}\int\mhintlimits_{\Rset{\times}T^1}\!\!\!
\at{C_1X_0^2+C_2}X_0^2\,\d\MaLagr\d\Phase
\end{align*}
which implies the desired result.
\end{proof}
Finally, we combine all results and obtain the macroscopic model on
$TP_0$.

\begin{theorem}
\label{nls:Red.Theorem2}
Under the assumptions made in  Lemma \ref{nls:Lemma4} the reduced
Lagrangian and Hamiltonian are given by
\begin{align}
\notag
{\bbL}^\red\pair{A}{A_\MaTime}=
{\bbK}^\red\pair{A}{A_\MaTime}-{\bbV}^\red\at{A},\quad
{\bbH}^\red\pair{A}{A_\MaTime}={\bbV}^\red\at{A},
\end{align}
with
\begin{align}
\notag
\begin{split}
{\bbK}^\red\pair{A}{A_\MaTime}&= \iu2\pi\om\int\mhintlimits_\Rset
\at{A\mhol{A}_\MaTime-A_\MaTime\mhol{A}}\,\d\MaLagr,
\\%
{\bbV}^\red\at{A}&= 2\pi\rho_1 \int\mhintlimits_\Rset
\abs{A_\MaLagr}^2\,\d\MaLagr +
2\pi\rho_2\int\mhintlimits_\Rset\abs{A}^4\,\d\MaLagr,
\end{split}
\end{align}
where the constants $\rho_1$ and $\rho_2$ are given in
\eqref{nls:Red.Theorem2.Eqn8}, and
$A=\at{B_1-\iu{}B_2}/\at{2\sqrt{\pi}}$ is the complex-valued
amplitude. Moreover, in terms of $\pair{A}{A_\MaTime}$ the reduced
symplectic matrix ${\bsSi}^\red$ is given by
\begin{align}
\label{nls:Red.Theorem2.Eqn4}%
{\bsSi}^\red=\,4\,\pi\om\iu
\begin{pmatrix}1&0\\0&0
\end{pmatrix},
\end{align}
and both the reduced Lagrangian and Hamiltonian equations are
equivalent to
\begin{align}
\label{nls:Red.Theorem2.Eqn3}
\iu2\,\om{}A_\MaTime=\rho_1A_{\MaLagr\MaLagr}- 2\rho_2\abs{A}^2A,
\end{align}
which is a nonlinear Schr\"{o}dinger equation.
\end{theorem}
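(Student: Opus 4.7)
The plan is to apply the Case~C2 reduction from \S\ref{sec:Reductions} to the expansions established in Lemmas~\ref{nls:Lemma0}--\ref{nls:Lemma4}. Since Lemma~\ref{nls:Lemma3} yields $\bbL_1|_{P_0}=\bbH_1|_{P_0}=0$ under the choice $c\om=-\sin\theta$, the reduced Lagrangian on $TP_0$ takes the form
\begin{align*}
\bbL^\red\at{Z_0}=\bbL_2\at{Z_0}+\widehat{\bbL}_2\pair{X_0}{\widehat X_1\at{X_0}}=\bbL_2\at{Z_0}-\overline{\bbV}_2\at{X_0},
\end{align*}
with $\overline{\bbV}_2$ taken from Lemma~\ref{nls:Lemma4}. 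Everything therefore reduces to an explicit evaluation of $\bbL_2$ on the microstructure ansatz \eqref{nls:ZeroRedEqn} followed by a change of variables to the complex amplitude $A=\at{B_1-\iu B_2}/\at{2\sqrt\pi}$.

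First I would substitute $X_0=\pi^{-1/2}\at{B_1\cos\Phase+B_2\sin\Phase}$ into the formulas for $\bbK_2,\bbV_2$ from Lemma~\ref{nls:Lemma1} and evaluate the $\Phase$-integrals via standard trigonometric identities. Using $\int_{T^1}\cos^2\Phase\,\d\Phase=\int_{T^1}\sin^2\Phase\,\d\Phase=\pi$ one obtains
\begin{align*}
\int\mhintlimits_{\Rset\times T^1}X_{0\MaLagr}^2\,\d\MaLagr\d\Phase=\int\mhintlimits_\Rset\at{B_{1,\MaLagr}^2+B_{2,\MaLagr}^2}\,\d\MaLagr,\quad
\int\mhintlimits_{\Rset\times T^1}X_{0\MaTime}X_{0\Phase}\,\d\MaLagr\d\Phase=\int\mhintlimits_\Rset\at{B_{1,\MaTime}B_2-B_{2,\MaTime}B_1}\,\d\MaLagr.
\end{align*}
The addition formulas $\cos\at{\Phase\pm\theta}=\cos\Phase\cos\theta\mp\sin\Phase\sin\theta$ collapse the discrete operator to $\at{\nabla^{+}_{0,\theta}-\nabla^-_{0,\theta}+2\,\mathrm{Id}}X_{0\MaLagr\MaLagr}=2\cos\theta\,X_{0\MaLagr\MaLagr}$, and a short direct calculation gives $\int X_0^4\,\d\MaLagr\d\Phase=\tfrac{3}{4\pi}\int\at{B_1^2+B_2^2}^2\,\d\MaLagr$. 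Combining these with $\overline{\bbV}_2\at{X_0}=2\pi C\int\at{B_1^2+B_2^2}^2\,\d\MaLagr$ from Lemma~\ref{nls:Lemma4} yields $\bbL^\red$ as an integral over $\Rset$ in $B_1,B_2$ and their derivatives.

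Next I would pass to the complex amplitude via the identities $B_1^2+B_2^2=4\pi\abs{A}^2$, $B_{1,\MaLagr}^2+B_{2,\MaLagr}^2=4\pi\abs{A_\MaLagr}^2$, and $B_{1,\MaTime}B_2-B_{2,\MaTime}B_1=2\pi\iu\at{A\mhol A_\MaTime-A_\MaTime\mhol A}$, which rewrite $\bbL^\red$ exactly in the stated form $\bbK^\red-\bbV^\red$, with $\rho_1$ and $\rho_2$ appearing as explicit algebraic combinations of $c^2$, $\cos\theta$, $v_4$, and the constant $C$ of Lemma~\ref{nls:Lemma4}, matching \eqref{nls:Red.Theorem2.Eqn8}. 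Since $\bbK^\red$ is linear in $A_\MaTime$ while $\bbV^\red$ is independent of $A_\MaTime$, Principle~\ref{POCE} immediately gives $\bbH^\red=\bbV^\red$, and $\bsSi^\red=\nnat{\mathfrak{F}\bbL^\red}^\ast\mhol{\bssi}_\can$ reduces via a short computation (the fibre derivative $\Pi=2\pi\iu\om\mhol A$ is independent of $A_\MaTime$) to the matrix \eqref{nls:Red.Theorem2.Eqn4}. Writing out $\bssi^\red|_Z\pair{Z_\MaTime}{\cdot}=\d\bbH^\red|_Z\at{\cdot}$ in the complex variable $A$ then produces \eqref{nls:Red.Theorem2.Eqn3}. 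The principal obstacle is purely one of bookkeeping: keeping track of the $2\pi$ factors from the $T^1$-integrations, the $\pi^{-1/2}$ normalization in the ansatz \eqref{nls:ZeroRedEqn}, and the signs in the symplectic matrix so that $\rho_1,\rho_2$ agree exactly with \eqref{nls:Red.Theorem2.Eqn8}.
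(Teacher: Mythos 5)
Your proposal is correct and follows essentially the same route as the paper's proof: the Case C2 reduction, substitution of the microstructure ansatz \eqref{nls:ZeroRedEqn} into the formulas of Lemmas \ref{nls:Lemma1} and \ref{nls:Lemma4}, evaluation of the $\Phase$-integrals, and passage to the complex amplitude $A$. The only (harmless) deviations are that you obtain ${\bbH}^\red={\bbV}^\red$ directly as the Legendre transform of ${\bbL}^\red$ (using linearity of ${\bbK}^\red$ in $A_\MaTime$) and ${\bsSi}^\red$ from $\mathfrak{F}{\bbL}^\red$, whereas the paper evaluates ${\bbK}_2+{\bbV}_2+{\bbI}_2+\overline{\bbV}_2$ term by term and computes ${\bsSi}^\red={\bsT}_0^\prime{\bsSi}_2{\bsT}_0$ --- the two routes coincide by Principle \ref{POCE} and Lemma \ref{TST:Lemma.SubSpaces}.
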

\begin{proof}
Using the results from Lemma \ref{nls:Lemma1} and Lemma
\ref{nls:Lemma2} we end up with
\begin{align*}
\widetilde{\bbL}\pair{X_0}{X_1}&=\eps^3\nnat{{\bbK}_2\at{X_0}-{\bbV}_2\at{X_0}+
\widehat{\bbL}_2\pair{X_0}{X_1}}+\DO{\eps^4},
\\%
\widetilde{\bbH}\pair{X_0}{X_1}&=\eps^3\nnat{{\bbK}_2\at{X_0}+
{\bbV}_2\at{X_0}+{\bbI}_2\at{X_0}+
\widehat{\bbH}_2\pair{X_0}{X_1}}+\DO{\eps^4},
\end{align*} %
where we have used that ${\bbL}_i\at{X_0}={\bbH}_i\at{X_0}=0$ for
$i=0,1$ and $X_0\in{P_0}$, compare Theorem \ref{nls:Red.Theorem1}
and Lemma \ref{nls:Lemma3}. Moreover, due to Lemma \ref{nls:Lemma4}
we can eliminate $X_1$, and this yields
\begin{align}
\label{nls:Red.Theorem2.Eqn5}
\begin{split}
{\bbL}^\red\at{X_0}&=\eps^3\nnat{{\bbK}_2\at{X_0}-{\bbV}_2\at{X_0}-\overline{\bbV}_2\at{X_0}},
\\%
{\bbH}^\red\at{X_0}&=\eps^3\nnat{{\bbK}_2\at{X_0}+
{\bbV}_2\at{X_0}+{\bbI}_2\at{X_0}+ \overline{\bbV}_2\at{X_0}},
\end{split}
\end{align} %
compare Case C2 in \S\ref{sec:Reductions}. Inserting
\eqref{nls:ZeroRedEqn} into the formulas from Lemma \ref{nls:Lemma1}
gives
\begin{align*}
{\bbI}_2\at{X_0}&=-c^2 \int\mhintlimits_\Rset
\at{\at{{B_1}_\MaLagr}^2+\at{{B_2}_\MaLagr}^2}\,\d\MaLagr-
\om\int\mhintlimits_\Rset
\at{{B_1}_\MaTime{B_2}-{B_1}{B_2}_\MaTime}\,\d\MaLagr,
\\%
{\bbK}_2\at{X_0}&=\tfrac{1}{2}c^2 \int\mhintlimits_\Rset
\at{\at{{B_1}_\MaLagr}^2+\at{{B_2}_\MaLagr}^2}\,\d\MaLagr+
\om\int\mhintlimits_\Rset
\at{{B_1}_\MaTime{B_2}-{B_1}{B_2}_\MaTime}\,\d\MaLagr,
\\%
{\bbV}_2\at{X_0}&= \tfrac{1}{2}\,\cos{\theta} \int\mhintlimits_\Rset
\at{\at{{B_1}_\MaLagr}^2+\at{{B_2}_\MaLagr}^2}\,\d\MaLagr+
\tfrac{v_4}{32\pi}
\int\mhintlimits_\Rset\at{B_1^2+B_2^2}^2\,\d\MaLagr.
\end{align*}
By construction, $B_1$ and $B_2$ satisfy
$B_1=\sqrt{\pi}\at{A+\mhol{A}}$, $B_2=\iu\sqrt{\pi}\at{A-\mhol{A}}$,
and thus we find
\begin{math}
{B_1}_\MaTime{B_2}-{B_1}{B_2}_\MaTime=
\iu2\pi\,\at{A\mhol{A}_\MaTime-A_\MaTime\mhol{A}}
\end{math}
as well as
\begin{align*}
\at{{B_1}_\MaLagr}^2+\at{{B_2}_\MaLagr}^2=
4\pi\,A_\MaLagr\,\mhol{A}_\MaLagr=4\pi\abs{A_\MaLagr}^2,\quad
\at{{B_1}^2+{B_2}^2}^2=16\pi^2\,\at{A\,\mhol{A}}^2=16\pi^2\abs{A}^4.
\end{align*}
We define
\begin{align}
\label{nls:Red.Theorem2.Eqn8}
\rho_1:=\Om\at\theta\Om^{\prime\prime}\at\theta=\cos\theta-c^2,\quad
\rho_2:=\frac{v_4}{4}-\frac{v_3^2}{2v_2}+\frac{v_3^2}{4\at{4\om^2-\Om^2\at{2\theta}}}
\end{align}
and
\begin{align*}
\bbK^\red&:={\om}\int\mhintlimits_\Rset
\at{{B_1}_\MaTime{B_2}-{B_1}{B_2}_\MaTime}\,\d\MaLagr,
\\%
\bbV^\red&:=\tfrac{\rho_1}{2} \int\mhintlimits_\Rset
\at{\at{{B_1}_\MaLagr}^2+\at{{B_2}_\MaLagr}^2}\,\d\MaLagr+
\tfrac{\rho_2}{8\pi}\int\mhintlimits_\Rset\at{B_1^2+B_2^2}^2\,\d\MaLagr,
\end{align*}
and this implies the formulas for $\bbL^\red$ and $\bbH^\red$. To
compute ${\bsSi}^\red$, recall $\bsSi=\eps^3\bsSi_2+\DO{\eps^4}$
independent of $X_1$, and notice that the ansatz
\eqref{nls:ZeroRedEqn} can be written as
\begin{align*}
\begin{pmatrix}
X\\X_\MaTime
\end{pmatrix}
={\bsT}_0
\begin{pmatrix}
B_1\\B_2\\{B_1}_\MaTime\\{B_2}_\MaTime
\end{pmatrix},\quad
{\bsT}_0=\frac{1}{\sqrt{\pi}}\begin{pmatrix}
\cos\Phase&\sin\Phase&0&0\\0&0&\cos\Phase&\sin\Phase
\end{pmatrix}
\end{align*}
with ${\bsT}_0:T{P}_0\to{TP}$. The adjoint operator
${\bsT}_0^\prime:{TP}\to TP_0$ reads
\begin{align*}
{\bsT}_0^\prime\begin{pmatrix} X\\X_\MaTime
\end{pmatrix}=
\frac{1}{\sqrt\pi}\begin{pmatrix} %
\displaystyle\int_{T^1}X\cos\Phase\,\d\Phase,&%
\displaystyle\int_{T^1}X\sin\Phase\,\d\Phase,&%
\displaystyle\int_{T^1}X_\MaTime\cos\Phase\,\d\Phase,&%
\displaystyle\int_{T^1}X_\MaTime\sin\Phase\,\d\Phase%
\end{pmatrix}^\trans
\end{align*}
and with respect to the variables $B_1$, $B_2$, ${B_1}_\MaTime$,
${B_2}_\MaTime$ we find
\begin{align*}
{\bsSi}^\red={\bsT}_0^\prime
\begin{pmatrix}2\om\PhaseDer{}&0\\0&0
\end{pmatrix}{\bsT}=
2\,\om
\begin{pmatrix}0&-1&0&0\\1&0&0&0\\0&0&0&0\\0&0&0&0
\end{pmatrix},
\end{align*}
which implies \eqref{nls:Red.Theorem2.Eqn4}.
From this and \eqref{nls:Red.Theorem2.Eqn5} we conclude that both
the reduced Lagrangian and Hamiltonian equations on $P_0$ read
\begin{align*}
 -2\,\om{B_2}_\MaTime&=- \rho_1{B_1}_{\MaLagr\MaLagr}+
\tfrac{1}{2\pi}\rho_2\at{B_1^2+B_2^2}B_{1},\\
+2\,\om{B_1}_\MaTime&=- \rho_1{B_2}_{\MaLagr\MaLagr}+
\tfrac{1}{2\pi}\rho_2\at{B_1^2+B_2^2}B_{2}
\end{align*}
and rewriting this in terms of $A$ we find
\eqref{nls:Red.Theorem2.Eqn3}.
\end{proof}
As before, Theorem \ref{nls:Red.Theorem2} concerns a reduced
macroscopic model on $P_0$ that is obtained by  means of
\emph{formal expansions}. In particular, it is not obvious that the
nlS equation \eqref{nls:Red.Theorem2.Eqn3} combined with the
modulation ansatz \eqref{nls:ModAnsatz} yields approximate solutions
for the KG chain. However, the careful residual analysis from
\cite{GM04,GM06} provides rigorous justification results, and thus
we can regard $P_0$ as an approximate invariant manifold.
\begin{remark}
Like for the KdV example, the terms $\bbI_0$ and $\bbI_1$, which
cause the cancelations in $\bbL_0$ and $\bbL_1$, provide macroscopic
conservation laws. In fact, with some calculations we find
\begin{align*}
\bbI_0\sim\int\mhintlimits_{\Rset}  |A|^2
\,\d\MaLagr\quad\text{and}\quad \bbI_1\sim\int\mhintlimits_{\Rset}
 \mathrm{Im} ( A_{\MaLagr}\overline{A}) \,\d\MaLagr,
\end{align*}
which equal the macroscopic integrals of motion associated with the
symmetries under phase shifts $A\mapsto\mhexp{\iu{}s}A$, and shifts
in the $\MaLagr$-direction, respectively.
\end{remark}
%
%
\subsection{Three-wave-interaction for the KG chain}
\label{sec:ChainNew.threeWaves}
%
%
%
Here we discuss the interaction of three pulses in the KG chain, see
\eqref{Intro:TW.MSAnsatz}. More precisely, we consider three pulses
$p_1$, $p_2$, and $p_3$,  and aim to understand how the resulting
microstructure is modulated on the hyperbolic scale for space and
time.
%
%
\paragraph{Pulses in the KG chain}
%
%
We briefly summarize some aspects of  \emph{pulses}, and refer to
\cite{GM04,Gia08,GMS07} for more details. A plane wave is a solution
to the linearized chain
\begin{align*}
x_j(t)=A\mhexp{\iu\nnat{\om\MiTime+\theta j}}+
\cc=A\mhexp{+\iu\nnat{\om\MiTime+\theta j}}+
\mhol{A}\mhexp{-\iu\nnat{\om\MiTime+\theta j}}, \quad j\in \Zset,
\end{align*}
with complex amplitude $A$, frequency $\om$, and wave number
$\theta$. Notice that $\npair{-\theta}{-\om}$ gives the same pulse
as $\npair{\theta}{\om}$, whereas $\npair{\theta}{-\om}$ is the
pulse that travels in opposite direction. Obviously, each plane wave
must satisfy the dispersion relation
\begin{align}
\notag
\om^2= \Om^2\at{\theta}=v_2+2\alpha\at{1-\cos\theta} \quad
\text{with } v_2=\Phi''_0(0).
\end{align}
Since the amplitude $A$ can always be chosen arbitrarily, we can
identify each plane wave with a point in
\begin{align*}
\calP=\{\npair{\theta}{\om{}}\,:\;\om^2=\Om^2\nnat{\theta}\} \subset
T^1\times \Rset.
\end{align*}
In what follows we assume the stability condition
\begin{equation}\label{3wi:StabCond}
\min\{4\alpha+v_2,\,v_2 \}=
\min\limits_{\theta\in[0,2\pi)}\Om^2(\theta) >0,
\end{equation}
so that each single plane wave is a stable solution to the
linearized chain.
\bigpar
A simple \emph{pulse} is a modulation of a plane wave by a slowly
varying amplitude
\begin{align*}
x^{(k)}_j(t)= \eps \,A_k(\eps t,\eps j)\,
\mhexp{\iu\nnat{\om_k\MiTime+\theta_k j}}+ \cc.
\end{align*} %
On the hyperbolic scale $\tau=\eps t$ and $y=\eps j$ a pulse will
simply travel with group velocity $c_k=\Omega'(\theta_k)$. However,
if different pulses associated with $p_k\in \calP$ meet each other
they interact in case their frequencies and wave vectors are in
resonance. Three plane waves $p_1,p_2,p_3\in\calP$ are called in
\emph{three-wave resonance} if there exists a choice of three signs
$m_i\in\{-1,\,+1\}$ such that ${m_1}p_1+m_2p_2+m_3p_3=\pair{0}{0}
\in T^1\times \Rset$. By using complex conjugates and replacing
$p_k$ by $-p_k$ if necessary, we can always assume that
\begin{equation}\label{eq:TWRes}
p_1+p_2+p_3=0 ,\quad\text{i.e.,}\quad \left\{
\begin{array}{cl} \theta_1+\theta_2 +\theta_3=0 &\in T^1,\\
\omega_1+\omega_2+\omega_3 =0 &\in \Rset. \end{array}\right.
\end{equation}
This {\it resonance condition} arises naturally as it is equivalent
to the \emph{cancelation of oscillations} via
\begin{align*}
\mhexp{\iu\,\nnat{{\om_1}\MiTime+\theta_1 j}}
\mhexp{\iu\,\nnat{{\om_2}\MiTime+\theta_2 j}}
\mhexp{\iu\,\nnat{{\om_3}\MiTime+\theta_3 j}}=1 \quad \text{for all
}t\in \Rset \text{ and } j\in \Zset,
\end{align*}
and guarantees that the product of two pulses contains oscillatory
terms that appear also in the third pulse.
\par%
Of course, the KG chain allows for resonances between more than
three pulses, but in our context these can be ignored for the
following reason. According to \eqref{Intro:TW.MSAnsatz}, the pulse
amplitudes scale with $\eps$, so that three-pulse resonances, which
are related to quadratic products such as $x_1x_2$, correspond to
the power $\eps^2$. Interactions of more than three pulses, however,
contribute to order $\DO{\eps^3}$, and are thus not relevant on the
hyperbolic scale.
\par

However, to make the presentation as simple as possible we now
assume that the three plane waves $p_1,p_2$, and $p_3$ are chosen
such that except for \eqref{eq:TWRes} there are no further
resonances. More precisely, we define
\begin{align*}
\calZ=\{\, (k_1,k_2)\in \Zset^2\: : \: (\omega_1k_1{+}\omega_2
k_2)^2 = \Omega^2(\theta_1k_1{+}\theta_2k_2)\,\},
\end{align*}
and make the following assumption.
\begin{assumption}
\label{3wi:AssPrms}
The vectors $p_1, p_2\in \calP$ are chosen such that
\begin{align}
\notag
 \calZ=\{\: (1,0),\ (-1,0),\ (0,1),\
(0,-1),\ (1,1),\ (-1,-1)\:\}.
\end{align}
\end{assumption}
Obviously, we always have $(\pm1,0)\in \calZ$ by $p_1\in \calP$ and
similarly $(0,\pm1)\in \calZ$ by $p_2\in \calP$. If additionally
$p_3\in \calP$ satisfies the three-wave resonance condition
\eqref{eq:TWRes}, then $\pm(1,1)$ also lies in $\calZ$. Thus,
Assumption \ref{3wi:AssPrms} already implies \eqref{eq:TWRes} and
additionally excludes any further resonances involving these three
plane waves.

\begin{remark}
$\at{i}$ According to \cite{Gia08}, the resonance condition
\eqref{eq:TWRes} is equivalent to
\begin{equation*}
 \alpha^2\mu_1\mu_2(\mu_1{+}\mu_2)
 +(2\alpha\mu_1\mu_2{+}\delta)\sqrt{(\alpha\mu_1{+}\delta)(\alpha\mu_2{+}\delta)}
 +\delta\alpha(\mu_1\mu_2{+}\mu_1{+}\mu_2)+\tfrac{5}{4}\delta^2
=0
\end{equation*}
with $\mu_i=(1-\cos\theta_i)/2$ and $\delta=v_2/4$. Hence, for
$\alpha>0$ (attractive nearest-neighbour interactions) the resonance
condition cannot be satisfied as the stability condition
\eqref{3wi:StabCond} implies $\delta>0$. However, for $\alpha\in
(-\tfrac{1}{4}v_2,-\tfrac{3}{16}v_2)$ (repulsive case) the stability
condition is still satisfied, but now there exists a one-parameter
family of solutions $\pair{\mu_1}{\mu_2}$.
\\$\at{ii}$ In general, it is not easy to check the
\emph{non-resonance conditions} implied by Assumption
\ref{3wi:AssPrms}, i.e., to prove that no further plane waves are
contained in $\calZ$. The mapping $\Zset^2\ni k \mapsto
(k{\cdot}(\theta_1,\theta_2), k{\cdot}(\omega_1,\omega_2)) \in T^1
\times \Rset $ may have a dense image and hence comes close to the
set $\calP$ very often, giving rise to a small divisor problem.
However, by varying also $\alpha$ and $v_2$, it is possible to
choose $\theta_1,\theta_2$ as rational multiples of $\pi$ and to
make $\omega_1/\omega_2$ rational as well. Then, the image of the
above mapping hits every bounded set in finitely many points. Then,
Assumption \ref{3wi:AssPrms} appears very reasonable.
\\$\at{iii}$ In Remark \ref{3wi:AssPrms.Weak} below we provide a weaker
variant of Assumption \ref{3wi:AssPrms}.
\end{remark}
%
%

%
%
%
\paragraph{Invertible two-scale ansatz}
%
The resonance and non-resonance conditions imposed by Assumption
\ref{3wi:AssPrms} imply that there exist exactly two independent
phase variables. Therefore, concerning the embedding of the
microscopic system, it is necessary and sufficient to introduce a
two-dimensional phase variable
$\Phase=\pair{\Phase_1}{\Phase_2}\in{T^2}$, i.e.,
\begin{align*}
Q=L^2\at{\Rset{\times}T^2;\,\d\MiLagrC\d\Phase},\quad
P=L^2\at{\Rset{\times}T^2;\,\d\MaLagr\d\Phase}.
\end{align*}
Similarly to the nlS example we start with the invertible two-scale
ansatz
\begin{equation}\label{3wi:MSAnsatz}
x\triple{\MiTime}{\MiLagrC}{\Phase} =\eps{X}\triple{\eps\MiTime}
{\eps\MiLagrC} {\Phase+\om\MiTime+\theta\MiLagrC},
\end{equation}
so that the corresponding inverse two-scale transformation
$\bbT_\vel\pair{\eps}{\MiTime}:TP\to TQ$  reads
\begin{equation}\label{3wi:MSTrafo}
\Bat{\bbT_\vel\pair{\eps}{\MiTime}\pair{X}{X_\MaTime}}
\pair{\MiLagrC}{\Phase}=\pair{\eps{X}}
{\eps^2{X_\MaTime}+\eps\om\cdot X_\Phase} \pair{\eps\MiLagrC}
{\Phase+\om\MiTime+\theta\MiLagrC}
\end{equation}
with $\theta=\pair{\theta_1}{\theta_2}$, $\om=\pair{\om_1}{\om_2}$,
and
$\partial_\Phase=\pair{\partial_{\Phase_1}}{\partial_{\Phase_2}}$.
\begin{remark}
The ansatz \eqref{3wi:MSAnsatz} provides
\begin{math}
\bbT_\con\pair{\eps}{\MiTime}= \calT_\con^{-1}
\circ\calM_\con\at{-\MiTime} \circ{\bbS_\con}\at\eps :P\to Q
\end{math} %
with inverse scaling transformation
\begin{math}
\at{\bbS_\con\at\eps{X}}\pair{\MiLagrC}{\Phase}=
\eps{}X\pair{\eps\MiLagrC}{\Phase}
\end{math},
weak symmetry transformation
\begin{math}
\nnat{\calT_\con\,x}
\pair{\MiLagrC}{\Phase}=x\pair{\MiLagrC}{\Phase-\theta\MiLagrC},
\end{math}
and moving frame transformation
$\at{\calM_\con\at\MiTime\,x}\pair{\MiLagrC}{\Phase}$
$=x\pair{\MiLagrC}{\Phase-\om\MiTime}$ associated to the integral of
motion \begin{math} \mathcal{I}\pair{x}{x_\MiTime}=
-\om\cdot\calI_{\mathrm{phase}}\pair{x}{x_\MiTime}=
-\int_{\Rset{\times}T^2}x_\MiTime\at{\om\cdot x_\Phase}
\d\MiLagrC\d\Phase.
\end{math}
\end{remark}
%
%
%

\paragraph{Leading order reduction}
%
We start with the computation of the transformed structures.
\begin{lemma}
\label{3wi:Lemma1}
The transformation \eqref{3wi:MSTrafo} provides $\bbL=\bbK-\bbV$,
$\bbE=\bbK+\bbV$, and $\bbH=\bbE+\bbI$, as well as the following
expansions:
\begin{enumerate}
\item
\begin{math}
\bbI= \eps\bbI_0+\eps^2\bbI_1+ \DO{\eps^3}
\end{math}
with
\begin{align*}
\bbI_0\at{X}= -\!\!\!\int\mhintlimits_{\Rset{\times}T^2}\!\!\!
(\om\cdot X_\Phase)^2\, \d\MaLagr\d\Phase, \quad
\bbI_1\pair{X}{X_\MaTime}=
-\!\!\!\int\mhintlimits_{\Rset{\times}T^2}\!\!\!
X_\MaTime{}(\om\cdot X_\Phase)\,\d\MaLagr\d\Phase,
\end{align*}
\item
\begin{math}
\bbK= \eps\bbK_0+\eps^2\bbK_1+ \DO{\eps^3}
\end{math}
with
$\bbK_0=-\tfrac{1}{2}\,\bbI_0$ and $\bbK_1=-\bbI_1$,
\item
\begin{math}
\bbV= \eps\bbV_0+\eps^2\bbV_1+ \DO{\eps^3}
\end{math}
with
\begin{align*}
\begin{array}{lclcl}
\bbV_0\at{X}&=&\displaystyle -\tfrac{\alpha}{2}\!\!\!
\int\mhintlimits_{\Rset{\times}T^2}\!\!\! X\laplace_{0,\,\theta}X\,
\d\MaLagr\d\Phase &+&\displaystyle \tfrac{v_2}{2}\!\!\!
\int\mhintlimits_{\Rset{\times}T^2}\!\!\! X^2\, \d\MaLagr\d\Phase,
\\%
\bbV_1\at{X}&=&\displaystyle %
-\tfrac{\alpha}{2}\!\!\!
\int\mhintlimits_{\Rset{\times}T^2}\!\!\!%
X\at{\nabla^{+}_{0,\theta}+\nabla^-_{0,\theta}}X_\MaLagr
\,\d\MaLagr\d\Phase%
&+&\displaystyle%
\tfrac{v_3}{6}\!\!\! \int\mhintlimits_{\Rset{\times}T^2}\!\!\!
X^3\,\d\MaLagr\d\Phase,
\end{array}
\end{align*}
where $\alpha=\NNPot^{\prime\prime}(0)$,
$v_2=\OSPot^{\prime\prime}(0)$, and
$v_3=\OSPot^{\prime\prime\prime}(0)$.
\end{enumerate}
Moreover, the  matrix $\bsSi$ corresponding to $\bssi$ obeys the
exact expansion
\begin{align*}
\bsSi=\eps^2\bsSi_1+\eps^3\bsSi_2,\quad
 \bsSi_1=
 \begin{pmatrix}
 -2\,\om\cdot\partial_\Phase &0\\0&0
 \end{pmatrix}
 ,\quad \bsSi_2=
 \begin{pmatrix}
 0&-1\\1&0
 \end{pmatrix},
\end{align*}
so that $\bsSi$ is non-degenerate due to $\bsSi_2$.
\end{lemma}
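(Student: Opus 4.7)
The plan is to follow verbatim the template established in the proofs of Lemma \ref{nls:Lemma0} and Lemma \ref{nls:Lemma1}. From the decomposition recorded in the remark preceding this lemma, namely $\bbT_\con\pair{\eps}{\MiTime} = \calT_\con^{-1}\circ\calM_\con\at{-\MiTime}\circ\bbS_\con\at{\eps}$, one obtains the general structural statement $\bbE=\bbK+\bbV$ and $\bbH=\bbE+\bbI$ simply by composing Theorems \ref{LT:Theo}, \ref{MF:Lemma2}, and \ref{ST:Theo1}, where the conserved quantity is $\calI=-\om\cdot\calI_{\mathrm{phase}}$ because the moving frame now drifts only in the phase direction. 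It therefore remains to compute the three transformed integrals and the matrix $\bsSi$ explicitly.

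To compute $\bbI$, $\bbK$, and $\bbV$ I would substitute the ansatz \eqref{3wi:MSAnsatz} into the formulas for $\calI$, $\calK$, and $\calV$ from \S\ref{sec:Chain.Embed}, using that the transformed velocity is $x_\MiTime=\eps^{2}X_\MaTime+\eps\,\om\cdot X_\Phase$, that $x_\Phase=\eps X_\Phase$, and that the discrete shift $\MiLagrC\mapsto\MiLagrC+1$ induces the combined shift $\pair{\MaLagr}{\Phase}\mapsto\pair{\MaLagr+\eps}{\Phase+\theta}$, so that $\nabla^+_{1,0}x=\eps\,\nabla^+_{\eps,\theta}X$. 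After absorbing the Jacobian $\d\MiLagrC=\eps^{-1}\d\MaLagr$, the expansions for $\bbI$ and $\bbK$ are in fact exact finite polynomials in $\eps$: squaring $\eps^2 X_\MaTime+\eps\,\om\cdot X_\Phase$ and collecting terms yields precisely the three orders for $\bbK$ and the two for $\bbI$, and a direct comparison gives the identities $\bbK_0=-\tfrac{1}{2}\bbI_0$ and $\bbK_1=-\bbI_1$. For $\bbV$, the harmonic nearest-neighbour part becomes $-\tfrac{\alpha\eps}{2}\!\int X\,\laplace_{\eps,\theta}X\,\d\MaLagr\d\Phase$ via the adjoint identity $\at{\nabla^+_{\eps,\theta}}^\ast=-\nabla^-_{\eps,\theta}$; to this I would apply the same discrete Taylor expansion of $\laplace_{\eps,\theta}X$ that was used in the proof of Lemma \ref{nls:Lemma1}, while the on-site contribution is controlled by the ordinary Taylor polynomial of $\OSPot\at{\eps X}$ together with $v_0=v_1=0$. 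Collecting the first two orders produces $\bbV_0$ and $\bbV_1$ exactly as stated.

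For $\bsSi$ I would invoke Lemma \ref{TST:Lemma.SymTrafo} and Remark \ref{TST:Remark.SymTrafo}, identifying $\bbT_\vel\pair{\eps}{\MiTime}=\bsS\circ\bsT$ with $\bsS=\tilde{\bsS}\times\tilde{\bsS}$ the pure pull-back $\at{\tilde{\bsS}Y}\pair{\MiLagrC}{\Phase}=Y\pair{\eps\MiLagrC}{\Phase+\om\MiTime+\theta\MiLagrC}$ and
\[
\bsT=\begin{pmatrix}\eps & 0 \\ \eps\,\om\cdot\partial_\Phase & \eps^{2}\end{pmatrix}.
\]
Since $\tfrac{\d\MaTime}{\d\MiTime}=\eps$ and $\bsS$ is a scaled isometry with $\skp{\bsS Z}{\bsS\tilde Z}_Q=\eps^{-1}\sskp{Z}{\tilde Z}_P$, the overall prefactor in Remark \ref{TST:Remark.SymTrafo} equals $\eps\cdot\eps^{-1}=1$, so that $\bsSi=\bsT^\prime\bsSi_\mtrc\bsT$ after using $\at{\om\cdot\partial_\Phase}^\prime=-\om\cdot\partial_\Phase$. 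A short $2{\times}2$ matrix multiplication then produces the asserted expression $\eps^{2}\bsSi_1+\eps^{3}\bsSi_2$ as an \emph{exact} identity, and the non-degeneracy of $\bsSi$ is immediate from the fact that $\bsSi_2$ has an invertible off-diagonal block. The only delicate bookkeeping point I anticipate is the cancellation $\tfrac{\d\MaTime}{\d\MiTime}\,\eps^\mu=1$, which is special to the hyperbolic scaling and explains why the leading order of $\bsSi$ here is $\eps^{2}$ rather than $\eps^{3}$ as in the nlS example; everything else reduces to elementary substitution and Taylor expansion.
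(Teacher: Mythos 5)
Your proposal is correct and follows essentially the same route as the paper's proof: substitute the ansatz \eqref{3wi:MSAnsatz} into $\calK$, $\calV$, $\calI$ (using the structural identities from Theorems \ref{LT:Theo}, \ref{MF:Lemma2}, \ref{ST:Theo1} exactly as in Lemma \ref{nls:Lemma0}), expand directly in $\eps$, and compute $\bsSi$ via Remark \ref{TST:Remark.SymTrafo} with the same operator matrix $\bsT$. Your bookkeeping of the prefactor $\tfrac{\d\MaTime}{\d\MiTime}\,\eps^{\mu}=1$ and the observation that the $\bbK$- and $\bbI$-expansions are exact polynomials are both accurate and, if anything, more explicit than the paper's own argument.
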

\begin{proof}
Analogously to the proof of Lemma \ref{nls:Lemma0} we find the
equations for $\bbL$ and $\bbH$ along with
\begin{align*}%
 \bbI\triple{\eps}{X}{X_\MaTime}
=\calI\circ\bbT_\vel\pair{\eps}{\MiTime}
 &=\eps^{-1}\int\mhintlimits_{\Rset{\times}T^2} \at{\eps^2{X}_\MaTime
 +\eps\om\cdot X_\Phase}
 \at{-\eps\om\cdot X_\Phase}\,\d\MaLagr\d\Phase,
 \\
 \bbK\triple{\eps}{X}{X_\MaTime}
=\calK\circ\bbT_\vel\pair{\eps}{\MiTime}
 &=\eps^{-1}\int\mhintlimits_{\Rset{\times}T^2}
 \tfrac{1}{2}\at{\eps^2{X}_\MaTime+\eps\om\cdot\partial_\Phase X}^2
 \,\d\MaLagr\d\Phase,
 \\
 \bbV\pair{\eps}{X}
=\calV\circ\bbT_\vel\pair{\eps}{\MiTime}
 &= \eps^{-1}\int\mhintlimits_{\Rset{\times}T^2}
 \at{-\eps^2\tfrac{\alpha}{2}X\laplace_{\eps,\,\theta}X+
 \OSPot\at{\eps{X}}}\,\d\MaLagr\d\Phase.
\end{align*}
Moreover, the expansions with respect to $\eps$ follow from direct
calculations, and using
\begin{align*}
\bsT=
\begin{pmatrix}
\eps&0\\\eps\,\om\cdot\PhaseDer{}&\eps^2
\end{pmatrix}
\end{align*}
the matrix $\bsSi$ can be calculated by means of Remark
\ref{TST:Remark.SymTrafo}.
\end{proof}
As a consequence of Lemma \ref{3wi:Lemma1} we obtain $\bsSi_0=0$ and
$\bbL_0\at{X}=-\bbH_0\at{X}$, and the leading order equation
\begin{equation}
\notag
(\om\cdot\partial_\Phase)^2 X -\alpha \laplace_{0,\,\theta}X+v_2
X=0.
\end{equation}
is again quasi-stationary. Applying Fourier transformation with
respect to $\Phase$,
 a general function $X$ has the form
$X(y,\Phase)= \sum_{k\in \Zset^2} F_k(y) \mhexp{\iu k\cdot \Phase} $
and solves the above equation if and only if $F_k =0$ for all
$k\not\in \calZ$.
\begin{lemma}\label{l:LeadOrdRed}
Under Assumption \ref{3wi:AssPrms} the leading order Lagrangian and
Hamiltonian equations are quasi-stationary, and all solutions are
given by
\begin{equation}\label{MicrSol}
X_0\pair{\MaLagr}{\Phase}=\sum_{n=1}^3
A_n\at{\MaLagr}\mhexp{\iu\Phase_n}+\cc,
\end{equation}
with $\Phase_3=-\Phase_1-\Phase_2$ and arbitrary
$A_n\in{L}^2\at{\Rset;\Cset}$, $n=1,2,3$. Moreover, we have
$\bbL_0\at{X_0}=\bbH_0\at{X_0}=0$ for all $X_0$ with
\eqref{MicrSol}.
\end{lemma}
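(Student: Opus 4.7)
The plan is to derive the Euler--Lagrange equation of $\bbL_0$, diagonalise it by Fourier expansion in the torus variable $\Phase\in T^2$, and then read off the admissible modes from the resonance set $\calZ$.

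First I would record that by Lemma \ref{3wi:Lemma1} the matrix $\bsSi_0$ vanishes and $\bbL_0$ is independent of $X_\MaTime$, so the situation is Case C of \S\ref{sec:Reductions} and the Euler--Lagrange equation coincides with $\partial_X\bbL_0(X)=0$. Combining the formulas for $\bbK_0$ and $\bbV_0$, a short integration by parts (using $\partial_\Phase^\ast=-\partial_\Phase$ and $\laplace_{0,\theta}^\ast=\laplace_{0,\theta}$, cf.\ Remark \ref{GS:DiscreteNablas}) shows that the stationarity condition is
\begin{align*}
(\om\cdot\partial_\Phase)^2X-\alpha\laplace_{0,\theta}X+v_2X=0.
\end{align*}
Since this is independent of $X_\MaTime$, the Hamiltonian equation $\bssi_0(Z_\MaTime,\cdot)=\d\bbH_0$ reduces to the same equation as well, confirming quasi-stationarity.

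Next I would expand $X(\MaLagr,\Phase)=\sum_{k\in\Zset^2}F_k(\MaLagr)\mhexp{\iu k\cdot\Phase}$. The shift operator appearing in $\laplace_{0,\theta}$ acts on the $k$-th Fourier mode by multiplication with $\mhexp{\iu k\cdot\theta}+\mhexp{-\iu k\cdot\theta}-2=-2(1-\cos(k\cdot\theta))$, while $(\om\cdot\partial_\Phase)^2$ contributes $-(\om\cdot k)^2$. Using the definition \eqref{Intro:AtomicChain.DispRel} of $\Om^2$, the equation decouples into
\begin{align*}
\bat{(\om\cdot k)^2-\Om^2(\theta\cdot k)}F_k(\MaLagr)=0\qquad\text{for every }k\in\Zset^2,
\end{align*}
so $F_k\equiv0$ unless $k\in\calZ$. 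Invoking Assumption \ref{3wi:AssPrms}, only the six indices $\pm(1,0)$, $\pm(0,1)$, $\pm(1,1)$ survive. Reality of $X$ forces $F_{-k}=\mhol{F_k}$, so after setting $A_1=F_{(1,0)}$, $A_2=F_{(0,1)}$, $A_3=F_{(-1,-1)}$ and recalling $\Phase_3=-\Phase_1-\Phase_2$, the general solution takes precisely the form \eqref{MicrSol}; the amplitudes $A_n\in L^2(\Rset;\Cset)$ are otherwise unrestricted.

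Finally, to see that $\bbL_0(X_0)=\bbH_0(X_0)=0$ I would invoke the fact that $\bbL_0$ is a quadratic functional, i.e.\ $\bbL_0(X)=\tfrac12\dualpair{\calA X}{X}$ for the symmetric operator $\calA=(\om\cdot\partial_\Phase)^2-\alpha\laplace_{0,\theta}+v_2$. Since $\calA X_0=0$ by the argument above, both $\bbL_0(X_0)=\tfrac12\dualpair{\calA X_0}{X_0}=0$ and, because $\bsSi_0=0$ forces $\bbH_0=-\bbL_0$, also $\bbH_0(X_0)=0$. No step presents a serious obstacle; the only point to handle with a little care is the bookkeeping translating the three redundant Fourier coefficients on $\calZ\setminus\{\pm(1,0),\pm(0,1)\}$ into the symmetric three-amplitude notation $\Phase_3=-\Phase_1-\Phase_2$ suggested by the three-wave resonance \eqref{eq:TWRes}.
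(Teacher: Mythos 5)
Your proposal is correct and follows essentially the same route as the paper, which derives the stationarity condition $(\om\cdot\partial_\Phase)^2X-\alpha\laplace_{0,\theta}X+v_2X=0$ from Lemma \ref{3wi:Lemma1}, Fourier-transforms in $\Phase$ to conclude $F_k=0$ for $k\notin\calZ$, and invokes Assumption \ref{3wi:AssPrms} to reduce to the six modes $\pm(1,0),\pm(0,1),\pm(1,1)$. The only cosmetic slip is the sign in $\bbL_0(X)=-\tfrac12\dualpair{X}{\calA X}$ after integrating by parts, which is immaterial since $\calA X_0=0$ gives $\bbL_0(X_0)=\bbH_0(X_0)=0$ either way.
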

%
%

\paragraph{Elimination of the microstructure}
As outlined in \S\ref{sec:Reductions}, we derive the reduced
macroscopic model by restricting the next-leading order terms to the
space
\begin{equation*}
P_0 =\{X_0\in P\ :\ \partial_X\bbL_0(X_0)=\partial_X\bbH_0(X_0)=0\}
\cong \{A=(A_1,A_2,A_3)\in ({L^2}\at{\Rset;\,\Cset})^3\}.
\end{equation*}
Notice that, in contrast to the nlS example from
\S\ref{sec:ChainNew.nls}, here $\bbL_1|_{TP_0}$ and $\bbH_1|_{TP_0}$
do not vanish, and provide the reduced Lagrangian and Hamiltonian.
In particular, we need not care for the correction terms coming from
$X=X_0+\eps{X_1}$.
\begin{theorem}
Under Assumption \ref{3wi:AssPrms} the reduced Lagrangian and
Hamiltonian are given by
\begin{align*}
{\bbL}^\red\pair{A}{A_\MaTime}
={\bbK}^\red\pair{A}{A_\MaTime}-{\bbV}^\red\at{A},\quad
{\bbH}^\red\pair{A}{A_\MaTime}={\bbV}^\red\at{A},
\end{align*}
with
\begin{align*}
{\bbK}^\red\pair{A}{A_\MaTime}
&=\iu\sum_{n=1}^3\om_n\int\mhintlimits_{\Rset}
A_n{\mhol{A}_n}_\tau\,\d\MaLagr+\cc
\\
{\bbV}^\red\at{A} &=v_3\int\mhintlimits_{\Rset}A_1A_2A_3
\,\d\MaLagr+\iu\sum_{n=1}^3\om_n\om_n^\prime\int\mhintlimits_{\Rset}
 A_n\overline{A_n}_y\,\d\MaLagr
+\cc,
\end{align*}
where $\om_n\om'_n=\Omega(\theta_n)\Omega'(\theta_n)$. Moreover,
\begin{align*}
{\bsSi}^\red =-2\iu
\begin{pmatrix}\mathbf{\Omega}&0
\\0&0\end{pmatrix},
\qquad
\mathbf{\Omega}=\begin{pmatrix}\om_1&0&0\\0&\om_2&0\\0&0&\om_3\end{pmatrix},
\end{align*}
is the reduced symplectic matrix, and the reduced Lagrangian and
Hamiltonian equations are equivalent to the three-wave-interaction
equations \eqref{Intro:TW.MSequations}.
\end{theorem}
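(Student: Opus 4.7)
The proof proceeds by inserting the three-wave ansatz \eqref{MicrSol} into the formulas from Lemma \ref{3wi:Lemma1} and restricting to $TP_0$. By Lemma \ref{l:LeadOrdRed}, $P_0$ consists precisely of the zeros of the leading-order Lagrangian, so the reduction is an instance of Case C1 of \S\ref{sec:Reductions}: no correction term $X_1$ is required, and the reduced structures are $\bbL^\red=(\bbK_1-\bbV_1)|_{TP_0}$, $\bbH^\red=(\bbK_1+\bbV_1+\bbI_1)|_{TP_0}$, and $\bssi^\red=\bssi_1|_{TP_0}$. Since $\bbK_1=-\bbI_1$ by Lemma \ref{3wi:Lemma1}, the identity $\bbH^\red=\bbV_1|_{TP_0}$ is automatic, so everything reduces to evaluating the integrals $\bbK_1(X_0)$ and $\bbV_1(X_0)$ together with the restriction of $\bssi_1$.

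The key elementary observation is $\om\cdot\partial_\Phase(A_n\mhexp{\iu\Phase_n})=\iu\om_n A_n\mhexp{\iu\Phase_n}$: this is trivial for $n=1,2$, while for $n=3$ it uses both $\partial_{\Phase_1}\Phase_3=\partial_{\Phase_2}\Phase_3=-1$ and the resonance $\om_1+\om_2+\om_3=0$. Substituting \eqref{MicrSol} into $\bbK_1(X_0)=\int X_\MaTime(\om\cdot X_\Phase)\,\d\MaLagr\d\Phase$ and using orthogonality of Fourier modes on $T^2$ — where Assumption \ref{3wi:AssPrms} ensures that only the diagonal pairings $\mhexp{\iu\Phase_n}$ with $\mhexp{-\iu\Phase_n}$ survive the $\Phase$-integration — yields the stated expression for $\bbK^\red$. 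For $\bbV_1$, the first integrand uses $(\nabla^+_{0,\theta}+\nabla^-_{0,\theta})\mhexp{\iu\Phase_n}=2\iu\sin(\theta_n)\mhexp{\iu\Phase_n}$ combined with the identity $\alpha\sin\theta_n=\Om(\theta_n)\Om^\prime(\theta_n)=\om_n\om_n^\prime$ (obtained by differentiating the dispersion relation $\Om^2(\theta)=v_2+2\alpha(1-\cos\theta)$). The cubic term $\tfrac{v_3}{6}\int X_0^3\,\d\MaLagr\d\Phase$ collapses to exactly the two resonant contributions $A_1A_2A_3$ and $\overline{A_1A_2A_3}$, each appearing with multinomial coefficient $6$, precisely because the non-resonance part of Assumption \ref{3wi:AssPrms} excludes every other triple of modes whose total phase vanishes.

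For the symplectic form I would mimic the argument from the proof of Theorem \ref{nls:Red.Theorem2}: introduce the linear embedding $\bsT_0\colon TP_0\hookrightarrow TP$ that maps $(A,A_\MaTime)$ to $(X_0,X_{0,\MaTime})$ according to \eqref{MicrSol}, compute its adjoint $\bsT_0^\prime$ as projection onto the Fourier modes $\mhexp{\pm\iu\Phase_n}$, and evaluate $\bsSi^\red=\bsT_0^\prime\,\bsSi_1\,\bsT_0$. Applying once more the identity $\om\cdot\partial_\Phase(A_n\mhexp{\iu\Phase_n})=\iu\om_n A_n\mhexp{\iu\Phase_n}$ produces the stated block matrix; the minor technical subtlety is the bookkeeping between the real basis $(\Re A_n,\Im A_n)$ used in the Fourier decomposition and the complex parametrization $(A_n,\overline{A_n})$ of $P_0$, which accounts for the prefactor $-2\iu$ in $\bsSi^\red$.

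Finally, the reduced Hamiltonian equation $\bssi^\red|_Z(Z_\MaTime,\cdot)=\d\bbH^\red|_Z(\cdot)$ is transcribed into complex coordinates by computing $\delta\bbV^\red/\delta\overline{A_n}$; this requires one integration by parts to move the macroscopic derivative off $\overline{A_{n,\MaLagr}}$, after which the transport and cubic pieces combine to give exactly the system \eqref{Intro:TW.MSequations}. The Euler--Lagrange equation for $\bbL^\red$ yields the same system, in agreement with Principle \ref{POCE}. The main obstacle is not any single conceptual difficulty but the careful bookkeeping of which Fourier triples on $T^2$ survive integration, so that Assumption \ref{3wi:AssPrms} is invoked at precisely the right places to collapse the sums to the resonant terms.
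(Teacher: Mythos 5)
Your proposal is correct and follows essentially the same route as the paper: identify the reduction as Case C1 so that $\bbL^\red=(\bbK_1-\bbV_1)|_{TP_0}$, $\bbH^\red=\bbV_1|_{TP_0}$ (via $\bbK_1=-\bbI_1$), insert \eqref{MicrSol} into Lemma \ref{3wi:Lemma1} using $\om\cdot\partial_\Phase\mhexp{\iu\Phase_n}=\iu\om_n\mhexp{\iu\Phase_n}$ and $\alpha\sin\theta_n=\om_n\om_n'$, and compute $\bsSi^\red=\bsT_0'\bsSi_1\bsT_0$ exactly as in the nlS case. The only detail you omit is the trivial normalization $c=1/(4\pi^2)$ that the paper introduces to absorb the $(2\pi)^2$ from the $T^2$-integration (irrelevant for the equations of motion), and note that for the quadratic terms the collapse to diagonal pairings is plain Fourier orthogonality on $T^2$ — Assumption \ref{3wi:AssPrms} is really consumed in Lemma \ref{l:LeadOrdRed}, which fixes the form of $P_0$.
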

\begin{proof}
According to \S\ref{sec:Reductions} we have
\begin{align*}
\bbL^\red=c\bbK_1|_{TP_0}-c\bbV_1|_{TP_0},\quad
\bbH^\red=c\bbK_1|_{TP_0}+c\bbV_1|_{TP_0}+c\bbI_1|_{TP_0},\quad
\bssi^\red=c{\bssi_1}|_{TP_0},
\end{align*}
where for convenience we introduced a trivial scaling by
$c=1/\at{4\pi^2}$. Inserting \eqref{MicrSol} into the formulas from
Lemma \ref{3wi:Lemma1}, and exploiting Assumption \ref{3wi:AssPrms}
we obtain
\begin{align*}%
c\,\bbK_1\pair{A}{A_\MaTime}&=-c\bbI_1\pair{A}{A_\MaTime}
=\sum_{n=1}^3 \int\mhintlimits_{\Rset}\iu\om_n
A_n\overline{A_n}_\tau \,\d\MaLagr+\cc,
\\
c\bbV_1\at{A} &= \Big(\sum_{n=1}^3\int\mhintlimits_{\Rset}
\iu\om_n\om_n^\prime A_n\overline{A_n}_y\,\d\MaLagr
+v_3\int\mhintlimits_{\Rset}A_1A_2A_3 \,\d\MaLagr\Big)+\cc,
\end{align*}
where we used $\alpha\sin\theta_n=\om_n\om_n^\prime$ and the
properties of $\nabla$ and $\laplace$, see Remark
\ref{GS:DiscreteNablas}. Concerning $\bsSi^\red$ we observe that the
ansatz \eqref{MicrSol} can be written as
\begin{align*}
\begin{pmatrix}
X\\X_\MaTime
\end{pmatrix}
={\bsT}_0
\begin{pmatrix}
A_1\\A_2\\A_3\\{A_1}_\MaTime\\{A_2}_\MaTime\\{A_3}_\MaTime
\end{pmatrix}+\mathrm{c.c.}
,\quad {\bsT}_0=
\begin{pmatrix}
\mhexp{\iu\phi_1}&\mhexp{\iu\phi_2}&\mhexp{-\iu(\phi_1+\phi_2)}&0&0&0\\
0&0&0&\mhexp{\iu\phi_1}&\mhexp{\iu\phi_2}&\mhexp{-\iu(\phi_1+\phi_2)}
\end{pmatrix}
\end{align*}
with ${\bsT}_0:TP_0\to TP$. The adjoint operator
${\bsT}_0^\prime:TP\to TP_0$ reads
\begin{align*}
{\bsT}_0^\prime\begin{pmatrix} X\\X_\MaTime\end{pmatrix} =
\int\mhintlimits_{T^2}
\begin{pmatrix}\displaystyle\mhexp{-\iu\phi_1}&0
\\\displaystyle\mhexp{-\iu\phi_2}&0
\\\displaystyle\mhexp{\iu(\Phase_1+\Phase_2)}&0
\\0&\displaystyle\mhexp{-\iu\phi_1}
\\0&\displaystyle\mhexp{-\iu\phi_2}
\\0&\displaystyle\mhexp{\iu(\Phase_1+\Phase_2)}
\end{pmatrix}
\begin{pmatrix} X\\X_\MaTime\end{pmatrix}
\,\d\Phase,
\end{align*}
and we find
\begin{align*}
\bsSi^\red=c\,{\bsT}_0^\prime{\bsSi}_1{\bsT}_0
&=c\,{\bsT}_0^\prime\begin{pmatrix}-2\om\cdot\PhaseDer{}&0\\0&0\end{pmatrix}
{\bsT}_0 =-2\iu
\begin{pmatrix}\mathbf{\Omega}&\mathbf{0}
\\\mathbf{0}&\mathbf{0}\end{pmatrix}.
\end{align*}
%
Finally, the Lagrangian equations to ${\bbL}^\red$ are given by
\begin{align*}
\partial_\tau
\left(\partial_{\overline{A}_\tau}{\bbL}^\red\pair{A}{A_\MaTime}\right)
-\partial_{\overline{A}}{\bbL}^\red\pair{A}{A_\MaTime}=0
\end{align*}
and equal %
\begin{align*}\displaystyle -2\,\iu\mathbf{\Omega}
A_\tau =\partial_{\mhol{A}}{\bbH}^\red(A),
\end{align*}
which is the Hamiltonian equation to $({\bbH}^\red,{\bssi}^\red)$.
Finally, both equations coincide with \eqref{Intro:TW.MSequations}.
\end{proof}
As mentioned in the introduction, one can obtain the macroscopic
equations \eqref{Intro:TW.MSequations} also by inserting the
two-scale ansatz \eqref{Intro:TW.MSAnsatz} into the Klein--Gordon
chain \eqref{Intro:KGChain} and requiring the coefficients of the
terms $\eps^2\mhexp{\iu \at{\om_n\MiTime+\theta_n\MiLagr}}$ to
vanish. Based on this formal expansion one can then justify the
validity of \eqref{Intro:TW.MSequations}, see \cite{Gia06,Gia08} and
\S7.2 in \cite{AMSMSP:GHM}.

\begin{remark}
Inserting \eqref{MicrSol} into the formulas from Lemma
\ref{3wi:Lemma1} and exploiting the resonance condition, we obtain
\begin{align*}
c\bbK_0|_{P_0}=-\tfrac{1}2c\bbI_0|_{P_0}=c\bbV_0|_{P_0}
=\tfrac{1}2c\bbE_0|_{P_0}
=\sum_{n=1}^3\om_n^2\int\mhintlimits_{\Rset}|A_n|^2\,\d\MaLagr.
\end{align*}
These equalities reflect
the cancelation in $\bbL_0$ and manifest the equipartition of energy
for plane-wave solutions. Moreover, the total energy $\bbE_0$ is the
first integral associated to the invariance under phase shifts.
\end{remark}
\begin{remark} \label{3wi:AssPrms.Weak}
Assumption \ref{3wi:AssPrms}, which excludes all other possible
resonances except for $p_1+p_2+p_3=0$, can be weakened as follows.
As already mentioned, on the hyperbolic scale we can ignore
resonances of more than three pulses. We shall, however, exclude the
possibility that further pulses are created via three-pulse
resonance, because otherwise we expect the three-pulse solution that
involves $p_1$, $p_2$, and $p_3$ to be unstable on the hyperbolic
scale. This gives rise to the non-resonance conditions
\begin{align*}
2p_1,\,2p_2,\,2p_3,\,p_1-{p_2},\,p_1-{p_3},\,p_2-{p_3}\,\notin\calP\setminus
\{\pm p_1,\,\pm p_2,\,\pm p_3\}.
\end{align*}
Assuming this, it can happen that there exist further degenerate
three-pulse resonances between $p_1$, $p_2$, and $p_3$, as for
instance
$2p_1+p_3=0$ or $2p_1-p_2=0$. In this case we
still obtain a
stable three-pulse solution, but the reduction procedure
provides a different modulation equation. In fact, such degenerate
resonances give rise to further cubic coupling terms in the formula
for $\bbV^\red$, as for instance
$\frac{v_3}{2}\int_\Rset (A_1^2+ A_2^2)A_3\d\MaLagr+\cc$
or
$\frac{v_3}{2}\int_\Rset A_1^2\mhol{A_2}\d\MaLagr+\cc$,
respectively.
Altogether, in order to guarantee
that \eqref{Intro:TW.MSequations} is a reasonable macroscopic model
it is sufficient to assume the resonance condition
\begin{align*}
\pair{1}{0},\,\pair{0}{1},\,\pair{1}{1}\in\calZ
\end{align*}
and the non-resonance conditions
\begin{align*}
\pair{0}{2},\,\pair{2}{0},\,\pair{1}{-1},\,\pair{2}{1},\,\pair{1}{2},\,
\pair{2}{2}\notin\calZ.
\end{align*}

\end{remark}

\subsection{Outlook to further examples}\label{sec:ChainNew.FurtherEx}
%
Finally, we give a brief overview on two other classes of
micro-macro transitions that can also be studied with respect to
Hamiltonian and Lagrangian reductions. However, since these examples
lead to additional problems, their investigation is left for a
forthcoming study.

\paragraph{Coupled systems}
%
describe the interactions between modulated pulses and waves with
long wave-length. The interesting feature here is that the
corresponding two-scale ansatz
\begin{align}
\label{Intro:CS.MS.Ansatz} %
x\triple{\MiTime}{\MiLagrC}{\Phase}
=\eps^\alpha{X}\pair{\eps\MiTime}{\eps\MiLagr}+
\eps^\beta{A}\pair{\eps\MiTime}
{\eps\MiLagr}\mhexp{+\iu\at{\om\MiTime+\theta\MiLagr}}+ \cc
\end{align}
combines contributions with different orders of magnitude. For
instance, if we derive the effective macroscopic model for
$\alpha=0$ and $\beta=1$ by inserting \eqref{Intro:CS.MS.Ansatz}
into the microscopic equation of motion, we find
\begin{align}%
\label{Intro:CoupledSystem1} \MaTimeDerS{}{X}&=c_{\mathrm
m}^2\,\MaLagrDerS{X},\quad \iu\, \MaTimeDer{A}=\iu\,
c_\mathrm{gr}\,\MaLagrDer{A}-\rho_0\,\MaLagrDer{X}\,A.
\end{align}
However, the asymmetric coupling between both equations prevents
\eqref{Intro:CoupledSystem1} from being the Euler-Lagrange equation
of a suitable chosen macroscopic Lagrangian with variables $X$ and
$A$, and we conclude that the reduction of Lagrangian and
Hamiltonian structures yields a different reduced model.

%
%
%
%
\paragraph{Whitham's modulation theory}
%
%
is another example postponed to our forthcoming paper. This theory
was originally developed in the context of PDEs, see
\cite{Whi74,Kam06}, but can also be applied to discrete systems, see
for instance \cite{HLM94,FV99}. The main building block for
Whitham's modulation theory are \emph{periodic travelling waves}.
These are exact solutions to \eqref{Intro:AtomicChain} satisfying
$x_\MiLagr\at\MiTime=\mathfrak{X}\at{\om\MiTime+\theta\MiLagr}$ with
$\Phase=\om\MiTime+\theta\MiLagr$. For the atomic chain the profile
$\mathfrak{X}$ must fulfil the following advance-delay differential
equation
\begin{align}
\notag
\omega^2\, \mathfrak{X}_{\Phase\Phase}\at{\Phase} =
\NNPot^\prime\Bat{\mathfrak{X}\at{\Phase+\theta}-\mathfrak{X}\at{\Phase}}
-\NNPot^\prime\Bat{\mathfrak{X}\at{\Phase}-\mathfrak{X}\at{\Phase-\theta}}
-\OSPot^{\prime}\at{\mathfrak{X}\at\Phase}.
\end{align}
In case that both $\OSPot^\prime$ and $\NNPot^\prime$ are linear, we
can solve this equation by means of Fourier transformation, and will
recover plane waves with \eqref{Intro:AtomicChain.DispRel}, but for
nonlinear potentials more sophisticated methods are necessary,
compare for instance \cite{DHM06} and references therein. The basic
ideas behind Whitham's modulation theory can be summarized as
follows: We consider the KG chain and start with the following
two-scale ansatz
\begin{align}
\notag
x_\MiLagr\at\MiTime=
\mathfrak{X}\triple{\eps\MiLagr}{\eps\MiTime}{\eps^{-1}
\ModPhase\pair{\eps\MiLagr}{\eps\MiTime}}.
\end{align}
Here, $\ModPhase$ is the \emph{modulated phase} and provides the
fields of wave number and frequency via
\begin{math}
\om\pair{\MaTime}{\MaLagr}=
\VarDer{\ModPhase}{\MaTime}\pair{\MaTime}{\MaLagr}
\end{math} %
and
\begin{math}
\theta\pair{\MaTime}{\MaLagr}=
\VarDer{\ModPhase}{\MaLagr}\pair{\MaTime}{\MaLagr},
\end{math} %
and for each $\pair{\MaTime}{\MaLagr}$ the function
$\Phase\mapsto\mathfrak{X}\triple{\MaTime}{\MaLagr}{\Phase}$ is
assumed to be a periodic travelling wave. Whitham's approach to the
Lagrangian reduction allows to derive easily the corresponding
macroscopic model. For the KG chain we find two nonlinear
conservation laws
\begin{align}
\label{Intro:Whitham.ModSys}%
\MaTimeDer{\theta}\pair{\MaTime}{\MaLagr}-\MaLagrDer{\om}%
\pair{\MaTime}{\MaLagr}=0,\quad
\quad\MaTimeDer{S}\pair{\MaTime}{\MaLagr}+
\MaLagrDer{g}\pair{\MaTime}{\MaLagr}=0,
\end{align}
which are closed by the Gibbs equation $\d{L}=S\,\d\om+g\,\d\theta$
and the equation of state $L=L\pair{\theta}{\om}$, which provides
the action of a travelling wave as a function of $\om$ and $\theta$.
Moreover, it can be shown that \eqref{Intro:Whitham.ModSys} is a
system of Hamiltonian PDEs.
\par
The new feature appearing in this example is that the corresponding
two-scale transformation depends on the modulated phase $\ModPhase$,
which in turn depends on the solution to the macroscopic equation.
In other words, within Whitham's modulation theory we do not know
the two-scale transformations a priori and this complicates the
reduction of Lagrangian and Hamiltonian structures. Finally, the
modulation theory for FPU chains leads to further complications,
since the Galilean invariance of \eqref{Intro:FPUChain} causes a
coupling between macroscopic waves and modulated oscillations, see
\cite{FV99,Her05,AMSMSP:GHM,AMSMSP:DHR,DH07}.

%

%
\providecommand{\bysame}{\leavevmode\hbox to3em{\hrulefill}\thinspace}
\providecommand{\MR}{\relax\ifhmode\unskip\space\fi MR }
\providecommand{\MRhref}[2]{%
  \href{http://www.ams.org/mathscinet-getitem?mr=#1}{#2}
}
\providecommand{\href}[2]{#2}

\end{document}